\newcommand{\PAPER}[1]{}
\newcommand{\LIPICS}[1]{#1}
\newtheorem{lemma}{Lemma}[section]
\newtheorem{theorem}[lemma]{Theorem}
\newtheorem{corollary}[lemma]{Corollary}
\title{On the Fine-Grained Complexity of Small-Size Geometric Set Cover and Discrete $k$-Center for Small $k$}
\author{Timothy M. Chan\thanks{Department of Computer Science,
University of Illinois at Urbana-Champaign (\{tmc,qizheng6,yyu51\}@illinois. edu).  Work supported in part by NSF Grant CCF-2224271. }
\and Qizheng He$^*$ \and Yuancheng Yu$^*$}
\author{Timothy M. Chan}{Department of Computer Science, University of Illinois at Urbana-Champaign, USA}{tmc@illinois.edu}{https://orcid.org/0000-0002-8093-0675}{Work supported by NSF Grant CCF-2224271.}
\author{Qizheng He}{Department of Computer Science, University of Illinois at Urbana-Champaign, USA}{qizheng6@illinois.edu}{https://orcid.org/0000-0002-2518-1114}{}
\author{Yuancheng Yu}{Department of Computer Science, University of Illinois at Urbana-Champaign, USA}{yyu51@illinois.edu}{}{}
\titlerunning{Small-Size Geometric Set Cover and 
Discrete $k$-Center for Small $k$}
\authorrunning{T.\,M. Chan, Q. He, and Y. Yu}
\keywords{Geometric set cover, discrete $k$-center, conditional lower bounds}
\newtheorem{question}{Question}
\newcommand{\R}{\mathbb{R}}
\newcommand{\OO}{\widetilde{O}}
\newcommand{\IGNORE}[1]{}
\newcommand{\abs}[1]{\mathify{\left| #1 \right|}}
\newcommand{\mathify}[1]{\ifmmode{#1}\else\mbox{$#1$}\fi}
\newcommand{\ceil}[1]{\left\lceil #1 \right\rceil}
\newcommand{\norm}[1]{\left\| {#1} \right\|}
\newcommand{\timothy}[1]{\textcolor{red}{(Timothy: #1)}}
\newcommand{\qizheng}[1]{\textcolor{purple}{(Qizheng: #1)}}
\begin{document}

\maketitle

\begin{abstract}
We study the time complexity of the discrete $k$-center problem and related (exact) geometric set cover problems when $k$ or the size of the cover is small.  We obtain a plethora of new results:
\begin{itemize}
\item We give the first subquadratic algorithm for \emph{rectilinear discrete 3-center} in 2D, running in $\OO(n^{3/2})$ time.
\item We prove a lower bound of $\Omega(n^{4/3-\delta})$ for rectilinear discrete 3-center in 4D, for any constant $\delta>0$, under a standard hypothesis about triangle detection in sparse graphs.
\item Given $n$ points and $n$ \emph{weighted} axis-aligned unit squares in 2D, we give the first subquadratic algorithm for finding a minimum-weight cover of the points by 3 unit squares, running in $\OO(n^{8/5})$ time.  We also prove a lower bound of $\Omega(n^{3/2-\delta})$ for the same problem in 2D, under the well-known APSP Hypothesis.  For arbitrary axis-aligned rectangles in 2D, our upper bound is $\OO(n^{7/4})$.
\item We prove a lower bound of $\Omega(n^{2-\delta})$ for Euclidean discrete 2-center in 13D, under the Hyperclique Hypothesis.  This lower bound nearly matches the straightforward upper bound of $\OO(n^\omega)$, if the matrix multiplication exponent $\omega$ is equal to 2.  
\item We similarly prove an $\Omega(n^{k-\delta})$ lower bound for Euclidean discrete $k$-center in $O(k)$ dimensions for any constant $k\ge 3$, under the Hyperclique Hypothesis.  This lower bound again nearly matches known upper bounds if $\omega=2$.
\item We also prove an $\Omega(n^{2-\delta})$ lower bound for the problem of finding 2 boxes to cover the largest number of points, given $n$ points and $n$ boxes in 12D\@.  This matches the straightforward near-quadratic upper bound.
\end{itemize}
\IGNORE{
We study the complexity of problems related to small-size
(exact) geometric set cover.  
 
\begin{itemize}
\item Given $n$ points and $n$ weighted axis-aligned rectangles in the plane,
the problem of finding a minimum-weight cover of the points by 2 rectangles is easily
solvable in $\OO(n)$ time ($\OO$ hides logarithmic factors).  We give the first subquadratic algorithm
for finding a minimum-weight cover by \emph{3 rectangles}, running in $\OO(n^{7/4})$ time.  The running time can be improved to $\OO(n^{8/5})$ in the case of weighted axis-aligned
unit squares.  To complement this result, we also prove a lower bound of $\Omega(n^{3/2-\delta})$ on the time complexity of this problem for any constant $\delta>0$, 
under the well-known APSP Hypothesis, even for weighted unit squares.
\item For covering by 3 unweighted axis-aligned unit squares, the running time
can be further improved to $\OO(n^{3/2})$.
Consequently, we obtain an $\OO(n^{3/2})$ algorithm for the
\emph{rectilinear discrete $3$-center} problem in the plane.
\item We prove lower bounds of $\Omega(n^{4/3-\delta})$ for 
finding a cover by 3 unweighted axis-aligned boxes in 3D,
or finding a cover by 3 unweighted axis-aligned unit hypercubes in 4D,
and for the rectilinear discrete 3-center problem in 4D, under
a standard hypothesis concerning the complexity of triangle finding in sparse graphs.
\item We prove a lower bound of $\Omega(n^{2-\delta})$ for 
finding a cover by 2 unweighted unit balls in 13D,
and for the \emph{Euclidean discrete 2-center} problem in 13D,
under the Hyperclique Hypothesis.  The lower bound nearly matches
the straightforward upper bound of $\OO(n^\omega)$, if the matrix multiplication
exponent $\omega$ is equal to 2.  We similarly obtain 
an $\Omega(n^{k-\delta})$ lower bound for Euclidean discrete $k$-center in
$O(k)$ dimensions for any constant $k\ge 2$.
\end{itemize}
}
\end{abstract}

\section{Introduction}

\subsection{The discrete $k$-center problem for small $k$}

The \emph{Euclidean $k$-center} problem is well-known in computational geometry and has a long history:
given a set $P$ of $n$ points in $\R^d$ and a number $k$, we want
to find $k$ congruent balls covering $S$, while minimizing the radius.
Euclidean 1-center can be solved in linear time for any constant dimension $d$
by standard techniques for low-dimensional linear programming or LP-type problems \cite{chazelle1996linear,clarkson1995vegas,dyer1986multidimensional,megiddo1983linear,welzl1991smallest}.  In a celebrated paper from SoCG'96,  Sharir~\cite{sharir1997near} 
gave the first $\OO(n)$-time\footnote{The $\OO$ notation hides polylogarithmic factors.} algorithm for Euclidean 2-center in $\R^2$, which represented a significant improvement over previous near-quadratic algorithms (the hidden logarithmic factors have since been reduced in a series of subsequent works~\cite{eppstein1996faster,chan1999more,Wang20,ChoiA21}).  
The problem is more difficult in higher dimensions:
the best time bound for Euclidean 2-center in $\R^d$ is about 
$n^d$ (see \cite{AgarwalES99,AgarwalAS13} for some results on the $\R^3$ case), and 
Cabello et al.~\cite{CabelloGKMR11} proved a conditional lower bound, ruling out $n^{o(d)}$-time algorithms, assuming the Exponential Time Hypothesis (ETH\@).
We are not aware of any work specifically addressing the Euclidean 3-center problem.

The $k$-center problem has also been studied under different metrics.  The most popular version after Euclidean is $L_\infty$ or \emph{rectilinear $k$-center}: here, we want to find
$k$ congruent hypercubes covering $P$, while minimizing the side length of the hypercubes.\footnote{All squares, rectangles, hypercubes, and boxes are axis-aligned in this paper.}  As expected, the rectilinear version is a little easier than the Euclidean.
Sharir and Welzl in SoCG'96~\cite{SharirW96} showed that rectilinear 3-center
problem in $\R^2$ can be solved in linear time, and that rectilinear 4-center and 5-center
in $\R^2$ can be solved in $\OO(n)$ time (the logarithmic factors have been subsequently improved by Nussbaum~\cite{Nussbaum97}).  Katz and Nielsen's work in SoCG'96~\cite{KatzN96}
implied near-linear-time algorithms for rectilinear 2-center in any constant dimension~$d$,
while Cabello et al.~in SODA'08~\cite{CabelloGKMR11}
gave an $O(n\log n)$-time algorithm for
rectilinear 3-center in any constant dimension~$d$.  Cabello et al.\ also proved a conditional lower bound for rectilinear 4-center, ruling out $n^{o(\sqrt{d})}$-time algorithms under ETH.

In this paper, we focus on a natural variant of the problem called \emph{discrete $k$-center}, which has also received considerable attention: here, given a set $P$ of $n$ points in $\R^d$ and a number $k$, we want
to find $k$ congruent balls covering $P$, while minimizing the radius, with the extra constraint that the centers of the chosen balls are from  $P$.\footnote{
Some authors define the problem slightly more generally, where the constraint is that the centers are from a second input set; in other words, the input consists of two sets of points (``demand points'' and ``supply points'').  The results of this paper will apply to both versions of the problem.}
The Euclidean discrete 1-center problem can be solved in $O(n\log n)$ time in $\R^2$
by a straightforward application of farthest-point Voronoi diagrams; it can also be solved in $O(n\log n)$ (randomized) time in $\R^3$ with more effort~\cite{Chan99}, and in subquadratic $\OO(n^{2-2/(\lceil d/2\rceil+1)})$ time for $d\ge 4$ by standard range searching techniques~\cite{AgaEriSURV,Matousek92}.
Agarwal, Sharir, and Welzl in SoCG'97~\cite{agarwal1998discrete} gave the 
first subquadratic algorithm for Euclidean discrete 2-center in $\R^2$,
running in $\OO(n^{4/3})$ time.

One may wonder whether Euclidean discrete 2-center in higher constant dimensions
could also be solved in subquadratic time via range searching techniques.
No results have been reported, but an $\OO(n^\omega)$-time algorithm
is not difficult to obtain, where $\omega<2.373$ denotes the matrix multiplication exponent:
by binary search, the problem reduces to finding two balls of a given radius $r$ with centers in $S$
covering $S$, which is equivalent to finding a pair $p,q\in S$ such that
$c_{pq}=\bigvee_{z\in S} (a_{pz}\wedge a_{zq})$ is false, where $a_{pz}$ is true iff $p$ and $z$ has distance more than $r$---this computation reduces to a Boolean matrix product.  
This approach works for arbitrary (not necessarily geometric) distance functions.  The  main question is whether
geometry could help in obtaining faster algorithms in the higher-dimensional  Euclidean setting, as Agarwal, Sharir, and Welzl were able to exploit successfully in $\R^2$:

\begin{question}\label{q1}
Is there an algorithm running in faster than $n^\omega$ time for the Euclidean discrete 2-center problem in higher constant dimensions?
\end{question}

We can similarly investigate the  
rectilinear version of the discrete $k$-center problem, which is potentially easier.
For example, the rectilinear discrete 2-center problem can be solved
in $\OO(n)$ time in any constant dimension~$d$, by a straightforward
application of orthogonal range searching, as reported in several papers \cite{BespamyatnikhK99,BespamyatnikhS99,KatzKS00}.
The approach does not seem to work for
the rectilinear discrete 3-center problem.
Naively, rectilinear discrete 3-center can be reduced to $n$ instances of
(some version of) rectilinear discrete 2-center, and solved
in $\OO(n^2)$ time.  However, no better results have been published, leading to the following questions:

\begin{question}\label{q2}
Is there a subquadratic-time algorithm for the rectilinear discrete 3-center problem?
\end{question}

\begin{question}\label{q3}
Are there lower bounds to show that the rectilinear discrete 3-center problem
does not have near-linear-time algorithm (and is thus strictly harder than
rectilinear discrete 2-center, or rectilinear continuous 3-center)?
\end{question}

Similar questions may be asked about rectilinear discrete $k$-center for $k\ge 4$.
Here, the complexity of the problem is upper-bounded by 
 $\OO(n^{\omega(\lfloor k/2\rfloor,1,\lceil k/2\rceil)})$,
where $\omega(a,b,c)$ denotes the exponent for multiplying an $n^a\times n^b$
and an $n^b\times n^c$ matrix: by binary search,
the problem reduces to finding $k$ hypercubes of a given edge length $r$ with centers in $S$
covering $S$, which is equivalent to finding a dominating set of size $k$
in the graph with vertex set $S$ where an edge $pz$ exists iff
the distance of $p$ and $z$ is more than $r$---the dominating set problem reduces to rectangular matrix multiplication
with the time bound stated, as observed by Eisenbrand and Grandoni~\cite{EisenbrandG04}.  Note that the difference $\omega(\lfloor k/2\rfloor,1,\lceil k/2\rceil)-k$  converges to 0 as $k\rightarrow\infty$ by known matrix multiplication bounds~\cite{Coppersmith82} (and is exactly 0 if $\omega=2$).

As $k$ gets larger compared to $d$,
a better upper bound of $n^{O(dk^{1-1/d})}$ is known for both the continuous and discrete $k$-center problem under the Euclidean and rectilinear metric~\cite{AgarwalP02,HwangCL93,HwangLC93}.  Recently, in SoCG'22, Chitnis and Saurabh~\cite{ChitnisS22} (extending earlier work by Marx~\cite{Marx05} in the $\R^2$ case) proved a nearly matching conditional lower bound for discrete $k$-center in $\R^d$, ruling out $n^{o(k^{1-1/d})}$-time algorithms under ETH\@.  However, these bounds do not answer our questions concerning very small $k$'s.
In contrast, the conditional lower bounds
by Cabello et al.~\cite{CabelloGKMR11} that we have mentioned earlier are about very small $k$ and so are more relevant,
but are only for the continuous version of the $k$-center problem.  (The continuous version behaves 
differently from the discrete version; see Tables~\ref{tbl1}--\ref{tbl2}.)

\begin{table}
\renewcommand{\arraystretch}{1.2}
\begin{tabular}{|l|ll|ll|ll|ll|}\hline
$k$ & \multicolumn{2}{c|}{Euclidean} & \multicolumn{2}{c|}{rectilinear} & \multicolumn{2}{c|}{Euclidean discrete} & \multicolumn{2}{c|}{rectilinear discrete}\\\hline
1 & $O(n)$ && $O(n)$ && $O(n\log n)$ && $O(n)$ &\\\hline
2 & $\OO(n)$ &\cite{sharir1997near} & $O(n)$ & \cite{SharirW96} & $\OO(n^{4/3})$
& \cite{agarwal1998discrete} & $\OO(n)$ &\\\hline
3 & 
&& $\OO(n)$&\cite{SharirW96} & 
&&
$\OO(n^{3/2})$ & {\bf (new)}\\\hline
\end{tabular}
\vspace{1ex}
\caption{Summary of results on $k$-center for small $k$ in $\R^2$.}\label{tbl1}
\end{table}

\begin{table}
\renewcommand{\arraystretch}{1.2}
\begin{tabular}{|l|ll|ll|ll|ll|}\hline
$k$ & \multicolumn{2}{c|}{Euclidean} & \multicolumn{2}{c|}{rectilinear} & \multicolumn{2}{c|}{Euclidean discrete} & \multicolumn{2}{c|}{rectilinear discrete}\\\hline
1 & $O(n)$ && $O(n)$ && $\OO(n^{2-2/(\lceil d/2\rceil +1)})\!\!\!\!\!\!\!\!$ && $O(n)$ &\\\hline
2 & $n^{O(d)}$ && $\OO(n)$ & \cite{KatzN96} & $\OO(n^\omega)$ && $\OO(n)$ & \\
  & CLB: $n^{\Omega(d)}\!\!\!\!\!$ &\cite{CabelloGKMR11} & && CLB: $\Omega(n^{2-\delta})\!\!\!\!\!\!$ & {\bf (new)}$\!$ 
  &&\\\hline
3 &  && $\OO(n)$ &\cite{CabelloGKMR11} & $\OO(n^{\omega(1,1,2)})$ && $\OO(n^2)$ &\\
  &  &&                      & & CLB: $\Omega(n^{3-\delta})\!\!\!\!\!\!$ & {\bf (new)}$\!$ 
  & CLB: $\Omega(n^{4/3-\delta})\!\!$  & {\bf (new)}$\!$ 
  \\\hline
4 &  &&  $n^{O(d)}$ && $\OO(n^{\omega(2,1,2)})$ && $\OO(n^3)$&\\
  &  && CLB: $n^{\Omega(\sqrt{d})}\!\!\!\!$ & \cite{CabelloGKMR11} & CLB: $\Omega(n^{4-\delta})\!\!\!\!\!\!$ & {\bf (new)}$\!$ 
  && \\\hline
\end{tabular}
\vspace{1ex}
\caption{Summary of results on $k$-center for small $k$ in $\R^d$ for
an arbitrary constant $d$.  (CLB stands for ``conditional lower bound''.)}\label{tbl2}
\end{table}

\subsection{The geometric set cover problem with small size $k$}

The decision version of the discrete $k$-center problem (deciding whether
the minimum radius is at most a given value) reduces to 
a \emph{geometric set cover} problem: given a set $P$ of $n$ points
and a set $R$ of $n$ objects, find the smallest subset of objects in $R$
that cover all points of $P$.
Geometric set cover has been extensively studied in the literature,
particularly from the perspective of approximation algorithms (since
for most types of geometric objects, set cover remains NP-hard); for example, see the references in~\cite{ChanH20}.
Here, we are interested in \emph{exact} algorithms for the case
when the optimal size $k$ is a small constant.

For the application to Euclidean/rectilinear $k$-center, the objects are congruent balls/ hypercubes, or by rescaling, unit balls/hypercubes, but other types of objects may be considered, such as arbitrary rectangles or boxes.  

We can also consider the \emph{weighted} version of the problem: here, given a set $P$ of $n$ points, a set $R$ of $n$ weighted objects, and a small constant $k$, we want to find a subset of $k$ objects in $R$
that cover all points of $P$, while minimizing the total weight of the chosen objects.

A ``dual'' problem is \emph{geometric hitting set}, which in the weighted case is the following: given a set $P$ of $n$ weighted points, a set $R$ of $n$ objects, and a small constant $k$, find a subset of $k$ points in $P$
that hit all objects of $R$, while minimizing the total weight of the chosen points.
(The continuous unweighted version, where 
the chosen points may be anywhere, is often called the \emph{piercing} problem.)  In the case of unit balls/hypercubes, hitting set is equivalent to set cover due to self-duality.

For rectangles in $\R^2$ or boxes in $\R^d$,
size-2 geometric set cover (unweighted or weighted) can be solved in $\OO(n)$ time, like discrete rectilinear 2-center~\cite{BespamyatnikhK99,BespamyatnikhS99,KatzKS00}, by orthogonal range searching.
Analogs to Questions \ref{q2}--\ref{q3} may be asked for size-3 geometric set cover for rectangles/boxes.

Surprisingly, the complexity of exact geometric set cover of small size $k$ has not received as much attention, but very recently in SODA'23, Chan~\cite{Chan23} initiated the study of similar questions for geometric independent set with small size $k$, for example, providing subquadratic algorithms and conditional lower bounds for size-4 independent set for boxes.

For larger $k$, hardness results by Marx and Pilipczuk~\cite{MarxP15} and
Bringmann et al.~\cite{BringmannKPL19}  ruled out 
$n^{o(k)}$-time algorithms  for size-$k$ geometric set cover for rectangles in $\R^2$ and unit hypercubes (or orthants) in $\R^4$, and 
$n^{o(\sqrt{k})}$-time algorithms for unit cubes (or orthants) in $\R^3$ under ETH\@.  But like the other fixed-parameter intractability results mentioned,
these proofs do not appear to imply any nontrivial lower bound for very small $k$ such as $k=3$.

\subsection{New results}

\subparagraph{New algorithms.}
In this paper, we answer Question~\ref{q2} in the affirmative for dimension $d=2$, by presenting the first subquadratic algorithms for rectilinear discrete 3-center in $\R^2$,
and more generally, for (unweighted and weighted) geometric size-3 set cover for unit squares, as well as arbitrary rectangles in $\R^2$.
More precisely, the time bounds of our algorithms are:

\begin{itemize}
\item $\OO(n^{3/2})$ for rectilinear discrete 3-center in $\R^2$ and unweighted size-3 set cover for unit squares in $\R^2$;
\item $\OO(n^{8/5})$ for weighted size-3 set cover for unit squares in $\R^2$; 
\item $\OO(n^{5/3})$ for unweighted size-3 set cover  for rectangles in $\R^2$;
\item $\OO(n^{7/4})$ for weighted size-3 set cover for rectangles in $\R^2$.
\end{itemize}

\subparagraph{New conditional lower bounds.}
We also prove the first nontrivial conditional lower bounds on the time complexity of rectilinear discrete 3-center and related size-3 geometric set cover problems.  More precisely, our lower bounds are:\footnote{Throughout this paper, $\delta$ denotes an arbitrarily small positive constant.}

\begin{itemize}
\item $\Omega(n^{3/2-\delta})$ for weighted size-3 set cover (or hitting set) for unit squares in $\R^2$, assuming the APSP Hypothesis;
\item $\Omega(n^{4/3-\delta})$ for rectilinear discrete 3-center in $\R^4$ and unweighted size-3 set cover (or hitting set) for unit hypercubes in $\R^4$, assuming the Sparse Triangle Hypothesis; 
\item $\Omega(n^{4/3-\delta})$ for unweighted size-3 set cover for boxes in $\R^3$, assuming the Sparse Triangle Hypothesis.
\end{itemize}

The lower bound in the first bullet is particularly attractive, since it implies that conditionally, our $\OO(n^{8/5})$-time algorithm for weighted size-3 set cover for unit squares in $\R^2$ is within a small factor (near $n^{0.1}$) from optimal, and that 
our $\OO(n^{7/4})$-time algorithm for weighted size-3 set cover for rectangles in $\R^2$ is within a factor near $n^{0.25}$ from optimal.
The second bullet answers Question~\ref{q3}, implying that rectilinear discrete 3-center is strictly harder than rectilinear discrete 2-center and
rectilinear (continuous) 3-center, at least when the dimension is 4 or higher.
(In contrast, 
rectilinear (continuous) 4-center is strictly harder than rectilinear discrete 4-center for sufficiently large constant dimensions~\cite{CabelloGKMR11}; see Table~\ref{tbl2}.)

In addition, we prove the following conditional lower bounds:

\begin{itemize}
\item $\Omega(n^{2-\delta})$ for Euclidean discrete 2-center in $\R^{13}$ and unweighted size-3 set cover (or hitting set) for unit balls in $\R^{13}$, assuming the Hyperclique Hypothesis;
\item $\Omega(n^{k-\delta})$ for Euclidean discrete $k$-center in $\R^{7k}$ and unweighted size-$k$ set cover for unit balls in $\R^{7k}$ for any constant $k\ge 3$, assuming the Hyperclique Hypothesis.
\end{itemize}

In particular, this answers Question~\ref{q1} in the negative if $\omega=2$ (as conjectured by some researchers): geometry doesn't help for Euclidean discrete 2-center when the dimension is a sufficiently large constant.  Similarly, the second bullet indicates that the upper bound $\OO(n^{\omega(\lfloor k/2\rfloor,1,\lceil k/2\rceil)})$
for Euclidean discrete $k$-center
is basically tight for any fixed $k\ge 3$ in a sufficiently large constant dimension, if $\omega=2$.
(See Tables \ref{tbl1}--\ref{tbl3}.)

\begin{table}
\renewcommand{\arraystretch}{1.2}
\begin{tabular}{|l|ll|ll|}\hline
objects & \multicolumn{2}{c|}{unweighted} & \multicolumn{2}{c|}{weighted}\\\hline
unit squares & $\OO(n^{3/2})$ & {\bf (new)} & $\OO(n^{8/5})$ & {\bf (new)}\\
 & & & CLB: $\Omega(n^{3/2-\delta})$ & {\bf (new)}\\\hline
rectangles  & $\OO(n^{5/3})$ & {\bf (new)} & $\OO(n^{7/4})$ & {\bf (new)}\\
 & & & CLB: $\Omega(n^{3/2-\delta})$ & {\bf (new)}\\\hline

\end{tabular}
\vspace{1ex}
\caption{Summary of results on size-3 geometric set cover in $\R^2$.}\label{tbl3}
\end{table}


Lastly, we prove a lower bound for a standard variant of set cover known as \emph{maximum coverage}:
given a set $P$ of $n$ points, a set $R$ of $n$ objects, and a small constant $k$, find $k$ objects in $R$ that cover the largest number (rather than all) of points of $P$.
Geometric versions of the maximum coverage problem have been studied before from the approximation algorithms perspective (e.g., see~\cite{BadanidiyuruKL12}).  It is also related to ``outliers'' variants of $k$-center problems
(where we allow a certain number of points to be uncovered), which have also been studied for small $k$
(e.g., see \cite{AgarwalP02}).
Recall that the size-2 geometric set cover problem for boxes in $\R^d$ can be solved in $\OO(n)$ time (which was why our attention was redirected to the size-3 case).  In contrast, we show that maximum coverage for boxes cannot be solved in near-linear time even for size $k=2$.  More precisely, we obtain the following lower bound:

\begin{itemize}
\item $\Omega(n^{2-\delta})$ for size-2 maximum coverage for unit hypercubes in $\R^{12}$, assuming the Hyperclique Hypothesis.
\end{itemize}

What is notable is that this lower bound is tight (up to $n^{o(1)}$ factors), regardless of $\omega$, since there is an obvious $\OO(n^2)$-time algorithm for boxes in $\R^d$ by answering $n^2$ orthogonal range counting queries---our result implies that this obvious algorithm can't be improved!

\subparagraph{On hypotheses from fine-grained complexity.}
Let us briefly state the hypotheses used.  

\begin{itemize}
\item The \emph{APSP Hypothesis} is among the three most popular hypotheses in fine-grained complexity~\cite{virgisurvey} (the other two being the 3SUM Hypothesis and the Strong Exponential Time Hypothesis): it asserts that there is no $O(n^{3-\delta})$-time algorithm for the all-pairs shortest paths problem for an arbitrary weighted graph with $n$ vertices (and $O(\log n)$-bit integer weights).  This hypothesis has been used extensively in the algorithms
literature (but less often in computational geometry).
\item The \emph{Sparse Triangle Hypothesis} asserts that there is no $O(m^{4/3-\delta})$-time algorithm for detecting a triangle (i.e., a 3-cycle) in a sparse unweighted graph with $m$ edges.  The current best upper bound for triangle detection, from a 3-decade-old paper by Alon, Yuster, and Zwick~\cite{AlonYZ97}, is
$\OO(m^{2\omega/(\omega+1)})$, which is $\OO(m^{4/3})$ if $\omega=2$.  (In fact, a stronger version of the hypothesis asserts that there is no
$O(m^{2\omega/(\omega+1)-\delta})$-time algorithm.)  As supporting evidence, it is known that certain ``listing'' or ``all-edges'' variants of the triangle detection problem 
have an $O(m^{4/3-\delta})$ lower bound, under the 3SUM Hypothesis or the APSP Hypothesis \cite{Patrascu10,WilliamsX20,ChanWX22}.  See \cite{AbboudBKZ22,JinXu22} for more discussion on the Sparse Triangle Hypothesis, and \cite{Chan23} for a recent application in computational geometry. 
\item The \emph{Hyperclique Hypothesis} asserts that there is no $O(n^{k-\delta})$-time algorithm for detecting a size-$k$ hyperclique
in an $\ell$-uniform hypergraph with $n$ vertices, for any fixed $k>\ell\ge 3$.
See \cite{LincolnWW18} for discussion on this hypothesis, and \cite{bringmann2022towards,Chan23,Kun22} for
some recent applications in computational geometry, including K\"{u}nnemann's breakthrough result on conditional lower bounds for Klee's measure problem~\cite{Kun22}.
\end{itemize}

\subparagraph{Techniques.}

Traditionally, in computational geometry, 
subquadratic algorithms with ``intermediate'' exponents between 1 and 2
tend to arise from the use of nonorthogonal range searching~\cite{AgaEriSURV} (Agarwal, Sharir, and Welzl's $\OO(n^{4/3})$-time algorithm for Euclidean discrete 2-center in $\R^2$~\cite{agarwal1998discrete} being one such example).  
Our subquadratic algorithms for rectilinear discrete 3-center in $\R^2$ and related
set-cover problems, which are about ``orthogonal'' or axis-aligned objects,
are different.
A natural first step is to use a $g\times g$ grid to divide into cases, for some carefully chosen parameter $g$.  Indeed, a grid-based approach was used in some recent subquadratic algorithms by Chan~\cite{Chan23} for size-4 independent set for boxes in 
any constant dimension, and size-5 independent set for rectangles in $\R^2$ 
(with running time $\OO(n^{3/2})$ and $\OO(n^{4/3})$ respectively).  However,
discrete 3-center or rectangle set cover is much more challenging than independent set
(for one thing, the 3 rectangles in the solution may intersect each other).  To make the grid approach work,
we need new original ideas (notably, a sophisticated argument to assign grid cells to rectangles,
which is tailored to the 2D case).  Still, the entire algorithm description fits in under 3 pages.

\IGNORE{
Our approach uses only orthogonal range searching, but
involves building a $g\times g$ grid, with a careful choice of parameter $g$ to
balance cost of several cases.  Recently, Chan~\cite{Chan23} applied similar strategies to obtain subquadratic algorithms for size-4 independent set for boxes in 
any constant dimension, and size-5 independent set for rectangles in $\R^2$ 
(with running time $\OO(n^{3/2})$ and $\OO(n^{4/3})$ respectively).  However,
discrete 3-center or rectangle set cover is more challenging than independent set
(for one thing, the 3 rectangles in the solution may intersect each other).
We need extra ideas (notably, an original argument to assign grid cells to rectangles),
which are specialized to the 2D case.
}

Our conditional lower bounds for rectilinear discrete 3-center and the corresponding set cover problem for unit hypercubes are proved by 
reduction from unweighted or weighted triangle finding in graphs.  It turns out
there is a simple reduction in $\R^2$ by exploiting weights.
However, lower bounds in the unweighted case (and thus the original rectilinear discrete 3-center problem) are much trickier.  We are able to design
a clever, simple reduction in $\R^6$ by hand, but reducing the dimension down
to 4 is far from obvious and we end up employing a \emph{computer-assisted search}, interestingly.  The final construction is still simple, and so is easy to verify by hand.

Our conditional lower bound proofs for Euclidean discrete 2-center, and more generally discrete $k$-center, are
inspired 
by a recent conditional hardness proof by Bringmann et al.~\cite{bringmann2022towards} from SoCG'22
on a different problem.  Specifically, they proved that deciding whether
the intersection graph of $n$ unit hypercubes in $\R^{12}$ has diameter~2 
requires near-quadratic time under the Hyperclique Hypothesis.
A priori, this diameter problem doesn't seem related to discrete $k$-center; moreover, it was a rectilinear problem, not Euclidean
(and we know that in contrast, rectilinear discrete 2-center has a near-linear upper bound!).  Our contribution is in realizing that Bringmann et al.'s approach is useful for Euclidean discrete 2-center and $k$-center, surprisingly.  To make the proof work though, we need some new technical ideas (in particular, an extra dimension for the $k=2$ case, and multiple extra dimensions for larger $k$, with carefully designed coordinate values).  Still, the final proof is not complicated to follow.  

Our conditional lower bound for size-2 maximum coverage for boxes is also proved using a similar technique, but again the adaptation is nontrivial, and we introduce some interesting counting arguments that proceed a bit differently from Bringmann et al.'s original proof for diameter (a problem that does not involve counting).

\IGNORE{
Our contribution is in realizing that their proof approach can actually be applied to Euclidean discrete 2-center, even though
the two problems appear different.  (The fact
that both problems are about paths of length~2 in certain geometrically 
defined graphs is what led us to initially suspect there could be a connection; still, one problem is about $L_\infty$, and the other is
about Euclidean.)  The generalization of the proof to Euclidean discrete $k$-center for larger $k$ follows the same approach, but with further technical ideas. 
}





\section{Subquadratic Algorithms for 
Size-3 Set Cover for Rectangles in $\R^2$}
\label{sec:alg}

In this section, we describe the most basic version of our subquadratic algorithm to
solve the size-3 geometric set cover problem 
for weighted rectangles in $\R^2$.  
The running time is $\OO(n^{16/9})$.
Refinements of the algorithm
will be described in Sec.~\ref{app:improve}, where we will improve the time bound further to $\OO(n^{7/4})$, or even better for the unweighted case and unit square case.
The rectilinear discrete 3-center problem in $\R^2$
reduces to the unweighted unit square case by standard techniques~\cite{Chan99,FredericksonJ82} (see Sec.~\ref{app:improve}).

We begin with a lemma giving a useful geometric data structure:

\begin{lemma}\label{lem:pair}
For a set $P$ of $n$ points and a set $R$ of $n$ weighted rectangles in $\R^2$, 
we can build a data structure in $\OO(n)$ time and space, to support the following kind of queries:
given a pair of rectangles $r_1,r_2\in R$, we can find 
a minimum-weight rectangle $r_3\in R$ (if it exists) such that $P$ is covered by $r_1\cup r_2\cup r_3$,
in $\OO(1)$ time. 
\end{lemma}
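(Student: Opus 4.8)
The plan is to reduce a query to two standard orthogonal range-searching subproblems. Fix a query pair $r_1,r_2\in R$ and let $P' := P\setminus(r_1\cup r_2)$ be the set of points not yet covered. Since an axis-aligned rectangle contains a finite point set if and only if it contains that set's bounding box, finding a minimum-weight $r_3\in R$ with $P\subseteq r_1\cup r_2\cup r_3$ is the same as finding a minimum-weight $r_3\in R$ that contains the bounding box $\beta$ of $P'$. (If $P'=\emptyset$, then $r_1\cup r_2$ already covers $P$ and we simply return a global minimum-weight rectangle of $R$; if $P'\ne\emptyset$ but no rectangle of $R$ contains $\beta$, we report that no valid $r_3$ exists.) So it suffices to handle, in $\OO(1)$ time per query after $\OO(n)$ preprocessing: (a) computing $\beta$; and (b) given an axis-aligned query box, returning a minimum-weight rectangle of $R$ containing it.

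For (b), I would encode each rectangle $r=[a,b]\times[c,d]\in R$ as the point $\varphi(r)=(-a,\,b,\,-c,\,d)\in\R^4$ carrying the weight of $r$; then $r$ contains the box $[x_1,x_2]\times[y_1,y_2]$ exactly when $\varphi(r)$ dominates $(-x_1,\,x_2,\,-y_1,\,y_2)$ coordinatewise. Minimum-weight dominating-point queries over $n$ points in $\R^4$ are answered by a standard multi-level range tree (with suffix-minima of weights at the innermost level) in $\OO(1)$ time and $\OO(n)$ space and preprocessing, which is within the claimed bounds.

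For (a), the key geometric observation is that $\R^2\setminus(r_1\cup r_2)$ decomposes into a constant number of pairwise-disjoint axis-aligned (possibly unbounded, half-open) boxes. Indeed, $\R^2\setminus r_1$ is the disjoint union of four such boxes (the open half-planes to the left and right of $r_1$, together with the two slabs directly below and above $r_1$ restricted to its $x$-extent); intersecting any axis-aligned box with $\R^2\setminus r_2$ again splits it into at most four axis-aligned boxes; hence $\R^2\setminus(r_1\cup r_2)=\bigsqcup_{i=1}^{O(1)}\beta_i$ with the $\beta_i$ disjoint axis-aligned boxes. Moreover each corner coordinate of each $\beta_i$ is $-\infty$, $+\infty$, or obtained from the eight corner coordinates of $r_1,r_2$ by a single $\min$ or $\max$, so all of the $\beta_i$ are computed in $O(1)$ time per query. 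Then $P'=\bigsqcup_i (P\cap\beta_i)$, so its bounding box $\beta$ is the coordinatewise combination of the bounding boxes of the sets $P\cap\beta_i$ (and $P'=\emptyset$ iff all of these are empty). Each query ``report the bounding box of the points of $P$ lying inside a given axis-aligned box'' is answered in $\OO(1)$ time by a two-level range tree in which every canonical node stores the bounding box of the points in its subtree: the query box selects $\OO(1)$ canonical node sets, and we combine their $\OO(1)$ stored bounding boxes. This tree uses $\OO(n)$ space and preprocessing.

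Assembling (a) and (b) yields the lemma. The only genuine subtlety is carrying out the decomposition in (a) cleanly and doing the bookkeeping for the $O(1)$ boxes $\beta_i$ (plus the degenerate cases $P'=\emptyset$ and ``no feasible $r_3$''); everything else is textbook orthogonal range searching, so I do not expect a real obstacle — this lemma is meant as a warm-up for the grid-based algorithm developed in the following sections.
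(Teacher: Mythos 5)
Your proposal is correct and follows essentially the same route as the paper's own proof: compute the bounding box of $P\setminus(r_1\cup r_2)$ by orthogonal range searching (noting the complement of $r_1\cup r_2$ is a union of $O(1)$ axis-aligned ranges), then find a minimum-weight enclosing rectangle via the standard lift to a dominance/range query in $\R^4$. You spell out the $O(1)$-box decomposition and the range-tree machinery more explicitly, but there is no substantive difference in approach.
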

\begin{proof}
By orthogonal range searching~\cite{AgaEriSURV,BerBOOK} on $P$, we can find the minimum/maximum $x$- and $y$-values among the points of $P$ in the complement of $r_1\cup r_2$  in $\OO(1)$ time (since the complement can be expressed as a union of $O(1)$ orthogonal ranges).  As a result, we obtain the minimum bounding box $b$ enclosing $P\setminus (r_1\cup r_2)$.  To finish, we find a minimum-weight rectangle in $R$ enclosing $b$; this is a ``rectangle enclosure'' query on $R$  and can be solved in $\OO(1)$ time, since it also reduces to orthogonal range searching (the rectangle $[x^-,x^+]\times [y^-,y^+]$ encloses the rectangle
$[\xi^-,\xi^+]\times [\eta^-,\eta^+]$  in $\R^2$ iff
the point $(x^-,x^+,y^-,y^+)$ 
lies in the box
$(-\infty,\xi^-]\times [\xi^+,\infty) \times (-\infty,\eta^-]\times [\eta^+,\infty)$
in $\R^4$).
\end{proof}

\begin{theorem}\label{thm:alg:basic}
Given a set $P$ of $n$ points and a set $R$ of $n$ weighted rectangles in $\R^2$, 
we can find $3$ rectangles $r_1^*,r_2^*,r_3^*\in R$ of minimum total weight (if they exist), such that
$P$ is covered by $r_1^*\cup r_2^*\cup r_3^*$, in $\OO(n^{16/9})$ time.
\end{theorem}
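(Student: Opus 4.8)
The plan is to use a $g \times g$ grid over a bounding region containing $P$, with the parameter $g$ chosen at the end to balance the costs. After normalizing so that all points of $P$ lie in $[0,1)^2$, partition the plane into a $g \times g$ array of axis-aligned cells. In any valid solution $\{r_1^*, r_2^*, r_3^*\}$, consider the $3 \times 3 = 9$ (or fewer) distinct rows and columns of the grid that are ``touched'' by the boundaries of the three rectangles; these cut the grid into $O(1)$ coarse blocks. The first conceptual step is a case analysis on how the three solution rectangles interact with the grid. Call a cell \emph{heavy} if it contains at least $n/g$ points (there are at most $g$ heavy cells, by a counting argument on the $g^2$ cells — actually at most $g^2 \cdot (n/g)/n \cdot$ adjustment, so $O(g)$ heavy cells), and \emph{light} otherwise. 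The key dichotomy: either some $r_i^*$ is ``large'' in the sense that it fully spans many grid rows or columns, in which case it is essentially determined by a constant number of grid-line coordinates and we can enumerate $O(g^{O(1)})$ candidates for it and invoke Lemma~\ref{lem:pair} for the remaining pair; or all three rectangles are ``small'', confined to a bounded number of consecutive rows and columns, so that the points they must cover are localized.

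The main work is the small-rectangle case. Here the idea is: guess, from among the $O(g)$ heavy cells and the $O(1)$ coarse blocks, which rectangle is responsible for covering which heavy region — this is where the paper's ``sophisticated argument to assign grid cells to rectangles'' enters. Once an assignment of the heavy cells to $r_1^*, r_2^*, r_3^*$ is fixed (there are only $3^{O(g)}$... no — rather, using the fact that each rectangle is an interval-cross-interval, the cells it covers form a contiguous sub-array, so an assignment is described by $O(1)$ boundary indices per rectangle, giving only $g^{O(1)}$ choices), the constraint that $r_i^*$ covers its assigned heavy cells forces $r_i^*$ to contain a specific bounding box, hence $r_i^*$ ranges over rectangles in $R$ enclosing that box whose other two sides are further constrained by the light points. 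We then iterate over the pair of rectangles for two of the three roles and use the data structure of Lemma~\ref{lem:pair} (suitably restricted to subsets of $R$ and $P$) to find the third in $\OO(1)$ time. The total number of (rectangle, rectangle) pairs examined across all grid-blocks and assignments should come out to $\OO(n^2/g + g^{O(1)} \cdot (\text{small factors}))$, and choosing $g$ to balance $n^2/g$ against the enumeration cost yields the stated $\OO(n^{16/9})$ bound; the exponent $16/9$ suggests a balance of the form $n^2/g \approx n \cdot g^{?}$ with a secondary $\sqrt{\cdot}$ nested level, i.e. $g \approx n^{2/9}$ or similar.

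Concretely, the steps in order: (1) normalize and build the $g \times g$ grid; (2) build the Lemma~\ref{lem:pair} data structure (and range-searching structures on subsets, via persistence or standard decomposition, in $\OO(n)$ time); (3) handle the case where some solution rectangle spans $\ge$ some threshold of grid lines — enumerate its $g^{O(1)}$ combinatorial types and call Lemma~\ref{lem:pair}; (4) in the remaining case, enumerate the $g^{O(1)}$ ways to assign the heavy cells (equivalently, the contiguous cell-ranges) to the three rectangles; (5) for each assignment, the heavy-cell constraint pins down a bounding box each rectangle must enclose, the uncovered light points are localized to $\OO(n/g)$ points near the assigned regions, and we brute-force over pairs of candidate rectangles for two roles while querying for the third; (6) report the minimum-weight solution found. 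The hardest step, and the one requiring genuinely new ideas rather than routine range searching, is step (4)–(5): showing that a bounded-size combinatorial description suffices to capture how three \emph{possibly overlapping} rectangles partition the covering responsibility, and that after fixing it, the residual problem reduces cleanly to Lemma~\ref{lem:pair} over a small enough sub-instance that the products of the relevant cardinalities sum to $\OO(n^{16/9})$. The overlap of the three rectangles is exactly what makes the naive ``assign each point to the rectangle covering it'' approach fail, so the assignment must be made at the granularity of grid cells with a careful tie-breaking/charging rule, and verifying that this rule is both well-defined and cheap to enumerate is the crux.
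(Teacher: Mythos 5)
Your proposal captures the broad shape of the argument---a $g\times g$ grid, a balancing of $n^2/g$ against a $g^{O(1)}$ enumeration term, and the use of Lemma~\ref{lem:pair} to recover the third rectangle once two are pinned down---but it deviates from the paper on the points that actually make the proof go through, and as written it has gaps that would prevent it from being completed.

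First, the grid. You build a uniform grid (or a grid refined by point density, via your heavy/light distinction), whereas the paper uses a \emph{non-uniform} grid in which each column and each row contains $O(n/g)$ \emph{rectangle vertices}. This is not cosmetic: it is exactly what bounds the number of pairs $(r_1,r_2)$ sharing a column or row to $O(g\cdot(n/g)^2)=O(n^2/g)$, and likewise bounds the number of rectangles whose edge hits a given cell to $O(n/g)$. Your heavy/light cell dichotomy by point count gives no control over how many rectangle edges land in a cell, so the ``iterate over pairs and query Lemma~\ref{lem:pair}'' phase has no $O(n^2/g)$ bound in your setup. Second, your case split into ``some $r_i^*$ spans many grid lines'' versus ``all three are small'' does not correspond to anything the paper does and does not resolve the hard case: the paper's Step~3 guesses the $O(g^8)$ tuples of grid columns/rows containing the (at most~8) relevant edges regardless of whether the rectangles are large or small, and then must decide, for each grid cell, \emph{which} of the three rectangles is responsible for the points in it. The genuinely tricky part---which you flag as ``the crux'' but do not supply---is the classification of cells into types A/B/C and, for the type-C cells (a cell hit by edges of two different rectangles), the walk-to-endpoint argument showing that any uncovered point in such a cell must already have been caught by a dedicated preprocessing step (the paper's Step~2, which precomputes the extreme cells $\gamma^\pm(e_1)$ along each edge $e_1$ containing a point not covered by $r_1$ and pairs $r_1$ with the $O(n/g)$ rectangles touching those two cells). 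Without this Step-2/type-C mechanism, the cell-to-rectangle assignment you propose in your step~(4) is not well-defined when two rectangle boundaries cross inside a cell, and that is precisely the overlapping case you correctly identify as the obstruction. In short: right skeleton, but the non-uniform grid and the Step-2/type-C argument are the load-bearing ideas, and both are missing.
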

\begin{proof}
Let $B_0$ be the minimum bounding box enclosing $P$ (which touches 4 points).
If a rectangle of $R$ has an edge outside of $B_0$, we can eliminate that edge by extending the rectangle, making it unbounded.

%
Let $g$ be a parameter to be determined later.
Form a $g\times g$ (non-uniform) grid, where each column/row contains $O(n/g)$ 
rectangle vertices.

\begin{description}
\item[Step 1.]
For each pair of rectangles $r_1,r_2\in R$ that have vertical edges in a common column or horizontal edges in a common row, we query the data structure
in Lemma~\ref{lem:pair} to find a minimum-weight rectangle $r_3\in R$ (if exists) such that $P\subset r_1\cup r_2\cup r_3$, and add the triple $r_1r_2r_3$ to a list $L$.  The number of queried pairs $r_1r_2$ is $O(g\cdot (n/g)^2) = O(n^2/g)$,
and so this step takes $\OO(n^2/g)$ total time.

\LIPICS{\smallskip}
\item[Step 2.]
For each rectangle $r_1\in R$ and each of its horizontal (resp.\ vertical) edges $e_1$,
define $\gamma^-(e_1)$ and $\gamma^+(e_1)$ to be the leftmost and rightmost
(resp.\ bottommost and topmost) grid cell that intersects $e_1$ and contains a point of $P$ not covered by $r_1$.
We can naively find $\gamma^-(e_1)$ and $\gamma^+(e_1)$ by enumerating the $O(g)$ grid cells intersecting $e_1$ and
performing $O(g)$ orthogonal range queries; this takes $\OO(gn)$ total time.
For each rectangle $r_2\in R$ that has an edge intersecting $\gamma^-(e_1)$
or $\gamma^+(e_1)$, we query the data structure
in Lemma~\ref{lem:pair} to find a minimum-weight rectangle $r_3\in R$ (if exists) such that $P\subset r_1\cup r_2\cup r_3$, and add the triple $r_1r_2r_3$ to the list $L$.
The total number of queried pairs $r_1r_2$ is $O(n\cdot n/g) = O(n^2/g)$,
and so this step again takes $\OO(n^2/g)$ total time.
(This entire Step~2, and the definition of $\gamma^-(\cdot)$ and $\gamma^+(\cdot)$,
might appear mysterious at first, but their significance will be revealed later in Step~3.)

\LIPICS{\smallskip}
\item[Step 3.]
We guess the column containing each of the vertical edges of $r_1^*,r_2^*,r_3^*$ and
the row containing each of the horizontal edges of $r_1^*,r_2^*,r_3^*$;
there are at most 12 edges and so $O(g^{12})$ choices.  
Actually, 4 of the 12 edges are eliminated after extension, and so
the number of choices can be lowered to $O(g^8)$.

After guessing, we know which grid cells
are completely inside $r_1^*,r_2^*,r_3^*$ and which grid cells 
intersect which edges of $r_1^*,r_2^*,r_3^*$.  We may assume that
the vertical edges from different rectangles in $\{r_1^*,r_2^*,r_3^*\}$
are in different columns, and
the horizontal edges from different rectangles in $\{r_1^*,r_2^*,r_3^*\}$
are in different rows: if not, $r_1^*r_2^*r_3^*$ would have already
been found in Step~1.
In particular, we know combinatorially what the arrangement of $r_1^*,r_2^*,r_3^*$
looks like,
even though we do not know the precise coordinates and identities of $r_1^*,r_2^*,r_3^*$.

\begin{figure}
\centering\includegraphics[scale=1.05]{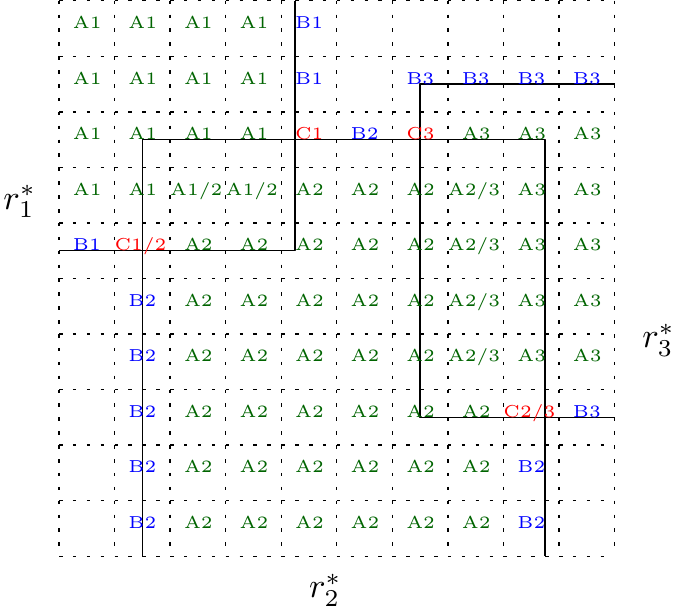}
\caption{Proof of Theorem~\ref{thm:alg:basic}: grid cells in Step 3.  The letter in a cell indicates its type (A, B, or C),
and the number (or numbers) in a cell indicates the index (or indices) $j\in\{1,2,3\}$
of the rectangle $r_j^*$ that the cell is assigned to.}
\label{fig:grid}
\end{figure}

We classify each grid cell $\gamma$ into the following types
(see Figure~\ref{fig:grid}):

\begin{itemize}
\item {\sc Type A}: $\gamma$ is completely contained in some $r_j^*\ (j\in\{1,2,3\})$. Here, we
\emph{assign} $\gamma$ to each such~$r_j^*$.
\item {\sc Type B}: $\gamma$ is not of type~A, and 
intersects an edge of exactly one rectangle $r_j^*$.
We \emph{assign} $\gamma$ to this $r_j^*$.
Observe that points in $P\cap\gamma$ can only be covered by $r_j^*$.
\item {\sc Type C}: $\gamma$ is not of type~A, and
intersects edges from two different rectangles in $\{r_1^*,r_2^*,r_3^*\}$.
W.l.o.g., suppose that $\gamma$ intersects a
horizontal edge $e_1^*$ of $r_1^*$ and a vertical edge $e_2^*$ of $r_2^*$;
note that the intersection point $v^*=e_1^*\cap e_2^*$ lies on the
boundary of the union $r_1^*\cup r_2^*\cup r_3^*$ (because $\gamma$ is not of type~A).
By examining the arrangement of $\{r_1^*,r_2^*,r_3^*\}$,
we know that at least one of the following is true: (i)~we can walk horizontally from $v^*$ to an endpoint of $e_1^*$ (or a point at infinity) while staying on the boundary of $r_1^*\cup r_2^*\cup r_3^*$, or (ii)~we can walk vertically from $v^*$ to an endpoint of  $e_2^*$ (or a point at infinity)  while staying on the boundary of $r_1^*\cup r_2^*\cup r_3^*$.

If (i) is true, we \emph{assign} $\gamma$ to $r_1^*$.
Observe that if there is a point in $P\cap\gamma$ not covered by $r_1^*$
(and if the guesses are correct),
then $\gamma$ must be equal to $\gamma^-(e_1^*)$ or $\gamma^+(e_1^*)$ (as defined in Step~2), and so $r_1^*r_2^*r_3^*$ would have already
been found in Step~2.
This is because except for $\gamma$, the grid cells
encountered while walking from $v^*$ to that endpoint of $e_1^*$ can intersect
only $r_1^*$ and so points in those cells can only be covered by $r_1^*$.

If (ii) is true, we \emph{assign} $\gamma$ to $r_2^*$ for a similar reason.
\end{itemize}

Note that there are at most $O(1)$ grid cells $\gamma$ of type~C; and the grid cells $\gamma$ of type~B form $O(1)$ contiguous blocks.  Let $\rho_j$ be the
union of all grid cells assigned to $r_j^*$.  Then $\rho_j$ is a rectilinear
polygon of $O(1)$ complexity.  We compute the minimum/maximum $x$- and $y$-values of
the points in $P\cap\rho_j$, by orthogonal range searching in $\OO(1)$ time.  As a result, we obtain the minimum bounding box $b_j$ enclosing $P\cap\rho_j$.
We find a minimum-weight rectangle $r_j\in R$ enclosing $b_j$, by
a rectangle enclosure query (reducible to orthogonal range searching, as before).  If $P\setminus (r_1\cup r_2\cup r_3)=\emptyset$ (testable by orthogonal range searching),
we add the triple $r_1r_2r_3$ (which should coincide with
$r_1^*r_2^*r_3^*$, if it has not been found earlier and if the guesses are correct) to $L$.  The total time over all guesses is $\OO(g^8)$.
\end{description}

At the end, we return a minimum-weight triple in $L$.  The overall running time
is $\OO(g^8 + n^2/g + gn)$.  Setting $g=n^{2/9}$ yields the theorem. 
\end{proof}

\section{Improved Subquadratic Algorithms for Size-3 Set Cover for Rectangles in $\R^2$ and Related Problems}\label{app:improve}

Continuing Section~\ref{sec:alg}, we describe refinements of our basic subquadratic algorithm to obtain improved time bounds in different cases.

\subsection{Improvement for unweighted rectangles}

\begin{lemma}\label{lem:maximal}
Given a set $S$ of points in a grid $\{1,\ldots,g\}^d$,
the number of \emph{maximal points} (i.e., points in $S$ that are not
strictly dominated by any other point in $S$) is $O(g^{d-1})$. 
\end{lemma}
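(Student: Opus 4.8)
The plan is to prove the bound by a direct projection-and-counting argument; no induction is needed. Let $\pi\colon \{1,\ldots,g\}^d \to \{1,\ldots,g\}^{d-1}$ be the map that forgets the first coordinate, $\pi(x_1,x_2,\ldots,x_d) = (x_2,\ldots,x_d)$. This partitions $S$ into the nonempty fibers $S_y = \{p\in S : \pi(p)=y\}$, and there are at most $g^{d-1}$ of them, one for each value $y$ in the image of $\pi$.

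The key step is to show that each fiber contributes at most one maximal point. Fix a fiber $S_y$. All its points agree in coordinates $2,\ldots,d$, so they can differ only in the first coordinate; since the points of $S$ are distinct, the first coordinates occurring in $S_y$ are pairwise distinct, and hence $S_y$ has a unique element $q_y$ of largest first coordinate. For any other $p\in S_y$ we have $(q_y)_1 > p_1$ while $(q_y)_i = p_i$ for all $i\ge 2$; thus $q_y\ne p$ and $q_y$ dominates $p$ coordinatewise, i.e.\ $q_y$ strictly dominates $p$. Therefore no point of $S_y$ other than $q_y$ can be maximal. Summing over the at most $g^{d-1}$ fibers, the number of maximal points of $S$ is at most $g^{d-1}$, which is $O(g^{d-1})$ (indeed with hidden constant $1$).

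I do not anticipate a genuine obstacle; the only point requiring care is the reading of ``strictly dominated'' — namely that a point $q\ne p$ with $q_i\ge p_i$ for all $i$ strictly dominates $p$ — which is precisely what makes the ``unique largest first coordinate per fiber'' observation force non-maximality of every other point in the fiber. Under this reading the estimate is tight up to the constant (for instance an antichain supported on a hyperplane $x_1+\cdots+x_d = c$ can have $\Theta(g^{d-1})$ points, all maximal), so $O(g^{d-1})$ is the correct order of magnitude.
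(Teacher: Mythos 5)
Your proof is correct, and it proves the same $O(g^{d-1})$ bound, but by covering the grid with a different family of lines than the paper does. The paper's argument covers $\{1,\ldots,g\}^d$ by the \emph{diagonal} lines $\{(t,c_1+t,\ldots,c_{d-1}+t):t\in\R\}$ and observes that if two points lie on the same diagonal, one has \emph{strictly} larger value in \emph{every} coordinate, hence at most one maximal point per diagonal; since $O(g^{d-1})$ diagonals suffice, the bound follows. You instead cover the grid by the \emph{axis-parallel} fibers of the map forgetting the first coordinate, and argue that the (unique) top element of each fiber weakly dominates all others in that fiber.

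The one substantive difference is precisely the caveat you flagged: your fiber argument needs the reading of ``$q$ strictly dominates $p$'' as ``$q\ge p$ coordinatewise and $q\ne p$,'' because two points in the same fiber agree on $d-1$ coordinates and differ in only one. The paper's diagonal argument is indifferent to the reading, since two distinct points on a diagonal differ strictly in \emph{all} coordinates, and so one strictly dominates the other under either convention. For the application in Theorem~\ref{thm:alg:unwt} — maximal rectangles under set containment $\subsetneq$, mapped (with appropriate sign flips) to grid points under $\le$ with inequality — your reading is the one that matters, so your proof serves the purpose. As a bonus your constant is $1$, whereas the diagonal covering uses $(2g-1)^{d-1}$ lines; both are $O(g^{d-1})$, but yours is tight as you note. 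If you wanted a proof that is agnostic to the definition of strict domination, the paper's diagonal trick is the safer choice.
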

\begin{proof}
Along each ``diagonal'' line of the form
$\{(t,c_1+t,\ldots,c_{d-1}+t): t\in\R\}$,
there can be at most one point of~$S$.  The grid can be covered by
$O(g^{d-1})$ such lines.
\end{proof}

\begin{theorem}\label{thm:alg:unwt}
The running time in Theorem~\ref{thm:alg:basic} can be improved
to $\OO(n^{5/3})$ if the rectangles in $R$ are unweighted.
\end{theorem}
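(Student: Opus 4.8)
The plan is to accelerate Step~3 of the algorithm behind Theorem~\ref{thm:alg:basic}. Recall its running time is $\OO(g^8 + n^2/g + gn)$, where the $\OO(g^8)$ term comes from enumerating the $O(g^8)$ possible combinatorial configurations of the optimal triple $r_1^*,r_2^*,r_3^*$ (the columns/rows of its $\le 8$ finite edges, after $4$ edges have been pushed to infinity). In the unweighted case we will cut this enumeration down to $\OO(g^5)$, and then re-balance by setting $g=n^{1/3}$, which makes all three terms $\OO(n^{5/3})$ (indeed $g^5 = n^2/g = n^{5/3}$ while $gn = n^{4/3}$); Steps~1 and~2 are left untouched, and their cost $\OO(n^2/g)=\OO(n^{5/3})$ is absorbed.

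The extra freedom exploited is that, when the rectangles are unweighted, any feasible triple stays feasible (and keeps its weight~$3$) if we replace each of its rectangles in turn by an inclusion-wise \emph{maximal} rectangle of $R$ containing it; so it suffices to search over triples of maximal rectangles. Fix one of the $\OO(1)$ cases of Step~3's case analysis; this pins down which four of the twelve edges are at infinity, so $r_j^*$ has $d_j\ge 1$ finite edges, with $d_1+d_2+d_3 = 8$. (If some $d_j=0$, then that rectangle is the whole plane and a size-$\le 2$ cover already exists, handled in $\OO(n)$ time.) Encode each $r_j^*$ by the tuple of grid lines through its finite edges, with the sign conventions that turn ``$r\subseteq r'$'' into coordinatewise domination; this is a point of $\{1,\ldots,g\}^{d_j}$, and since within this case the coordinates of the \emph{infinite} edges are identical ($+\infty$) for every rectangle of $R$, maximality of $r_j^*$ corresponds to its encoding being a maximal point of the associated $\le n$-point subset of $\{1,\ldots,g\}^{d_j}$. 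By Lemma~\ref{lem:maximal} there are only $O(g^{d_j-1})$ such maximal encodings, listable in $\OO(g^{d_j-1})$ time by sorting. Hence the enumeration over $r_1^*,r_2^*,r_3^*$ costs $\OO(g^{(d_1-1)+(d_2-1)+(d_3-1)}) = \OO(g^{8-3}) = \OO(g^5)$; for each enumerated choice we run the remainder of Step~3 verbatim (classify every cell as type A/B/C against the now-known arrangement, form the $\rho_j$'s, compute the bounding box $b_j$ of $P\cap\rho_j$ and recover an enclosing rectangle of $R$ by orthogonal range searching, then test $P\subseteq r_1\cup r_2\cup r_3$). Summing, the running time is $\OO(g^5 + n^2/g + gn) = \OO(n^{5/3})$.

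The step I expect to be the main obstacle is arguing that restricting the enumeration to \emph{maximal} encodings is lossless. Rounding the edges of a maximal rectangle to grid lines is not monotone on the nose: a rectangle $r'$ whose encoding dominates that of $r_j^*$ need not contain $r_j^*$, since within a shared column its edge may lie strictly inside, and symmetrically the rectangle later recovered by the range query need not contain the original $r_j^*$. One therefore has to show that every point of $P$ that lands in the region $\rho_j$ is still covered (or is covered by another rectangle of the triple and thus already captured by Steps~1--2) — most plausibly by laying out the grid on the points of $P$ together with the rectangle vertices and re-tracing the type A/B/C correctness argument of Step~3 under this finer grid. Verifying the bookkeeping $d_1+d_2+d_3=8$ in every case (so the exponent is genuinely $5$) is routine but should be checked.
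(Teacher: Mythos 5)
Your approach matches the paper's: prune $R$ to inclusion-maximal rectangles, apply Lemma~\ref{lem:maximal} to argue that a $p$-sided rectangle contributes only $O(g^{p-1})$ candidate column/row tuples, cut the Step~3 enumeration from $O(g^8)$ to $O(g^5)$, and set $g=n^{1/3}$.

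The obstacle you flag is not actually there, because Lemma~\ref{lem:maximal} is about \emph{strict} domination (strictly greater in every coordinate). If some $r'\in R$ had an encoding strictly dominating that of $r_j^*$, then every finite edge of $r'$ would lie in a strictly different column or row from the corresponding edge of $r_j^*$ --- no shared column or row is possible --- so $r'\supsetneq r_j^*$, contradicting the maximality of $r_j^*$. Thus the encoding of a maximal $r_j^*$ is automatically a maximal point of the image set, the pruned enumeration cannot miss the correct guess, and the Step~3 correctness argument applies verbatim without re-tracing the type A/B/C analysis on a finer grid. The shared-column phenomenon you describe does break \emph{weak} domination (a weakly dominating encoding need not come from a containing rectangle), but weak domination is not what Lemma~\ref{lem:maximal} counts.
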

\begin{proof}
In the unweighted case, it suffices to keep only the rectangles
that are \emph{maximal}, i.e., not strictly contained in other rectangles.
(We can test whether a rectangle is maximal in $\OO(1)$ time by a rectangle enclosure query,
reducible to orthogonal range searching.)  We run the algorithm
in the proof of Theorem~\ref{thm:alg:basic}.

For a $p$-sided rectangle $r_j^*$ (with $p\in\{1,\ldots,4\}$),
the number of choices
for the columns/rows containing the vertical/horizontal edges of $r_j^*$ is now
$O(g^{p-1})$ instead of $O(g^p)$ by Lemma~\ref{lem:maximal}, because the $p$-tuples of such column/row indices
correspond to maximal points of a subset of $\{1,\ldots,g\}^p$, by mapping $p$-sided rectangles to $p$-dimensional points.

This way, the number of guesses for the 3 rectangles $r_1^*,r_2^*,r_3^*$ can
be further lowered from $O(g^8)$ to $O(g^5)$,
and the overall running time
becomes $\OO(g^5 + n^2/g + gn)$.  Setting $g=n^{1/3}$ yields the theorem. 
\end{proof}

\subsection{Improvement for  weighted rectangles}

\begin{theorem}\label{thm:alg:improv}
The running time in Theorem~\ref{thm:alg:basic} can be improved
to $\OO(n^{7/4})$ for weighted rectangles.
\end{theorem}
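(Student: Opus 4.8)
The running time in Theorem~\ref{thm:alg:basic} is $\OO(g^8 + n^2/g + gn)$, and once $g$ is chosen to balance the first two terms the $gn$ term is lower order, so the sole bottleneck is the $\OO(g^8)$ enumeration of ``guessed'' arrangements in Step~3. The plan is to shave this to $\OO(g^7)$: the bound then becomes $\OO(g^7 + n^2/g + gn)$, which is minimized at $g = n^{1/4}$ and equals $\OO(n^{7/4})$.

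The key idea is that Step~3 need not brute-force the grid positions of all (at most eight) bounded edges of $r_1^*, r_2^*, r_3^*$ independently. Order the three rectangles so that $r_3^*$ is found last, and observe that we may assume $r_3^*$ is the minimum-weight rectangle of $R$ enclosing $P \setminus (r_1^* \cup r_2^*)$ (otherwise replace it). Then guess the grid cells of all the relevant edges \emph{except one bounded edge of $r_3^*$} — say its top edge — giving only $\OO(g^7)$ guesses. For each such guess we know the arrangement of $r_1^*$, $r_2^*$ and the three remaining edges of $r_3^*$, hence (just as in the original Step~3) the type (A/B/C) and the assignment of all but $O(1)$ cells; we compute the bounding boxes $b_1,b_2$ of the points assigned to $r_1^*$ and $r_2^*$, set $r_1,r_2\in R$ to be minimum-weight rectangles enclosing them, and recover a minimum-weight $r_3\in R$ with $P\subseteq r_1\cup r_2\cup r_3$ by a single query to (a mild extension of) the data structure of Lemma~\ref{lem:pair} — this works precisely because of the ``last rectangle $=$ minimum-weight enclosure of the remainder'' normalization, and costs $\OO(1)$. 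We add $r_1r_2r_3$ to $L$ as before, and re-run Steps~1 and~2 unchanged, at cost $\OO(n^2/g + gn)$.

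The delicate point — and the main obstacle — is that leaving $r_3^*$'s top edge un-guessed makes the cell-to-rectangle assignment, and therefore $b_1$ and $b_2$, only partially determined: a cell at the crossing of $r_3^*$'s top edge with a vertical edge of $r_1^*$ or $r_2^*$ may be a Type-C cell whose owner depends on the (un-guessed) row of that edge, and a wrong assignment could make the computed triple heavier than the optimum. The plan is to argue this ambiguity is benign: since all vertical edges and the horizontal span of $r_3^*$'s top edge are already guessed, only $O(1)$ vertical edges can be crossed, and the boundary-walk criterion of the Type-C analysis — together with the Step~1/Step~2 pruning, which I would re-examine so that any point of $P$ left uncovered in such a cell is already detected there — either forces the correct assignment or shows the optimal triple was found earlier; disambiguating by brute force over the $O(1)$ residual sub-cases (for instance, guessing the order of $r_3^*$'s top edge relative to the $O(1)$ horizontal edges of $r_1^*,r_2^*$) keeps the total at $\OO(g^7)$. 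Everything else is a verbatim re-run of the proof of Theorem~\ref{thm:alg:basic} with the sharper bound $\OO(g^7 + n^2/g + gn)$, optimized at $g = n^{1/4}$.
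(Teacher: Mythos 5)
Your arithmetic is right: reducing the Step-3 guesses from $O(g^8)$ to $O(g^7)$ yields $\OO(g^7 + n^2/g + gn)$, which is $\OO(n^{7/4})$ at $g = n^{1/4}$, and you have correctly identified $g^8$ as the bottleneck and Lemma~\ref{lem:pair} as the tool for recovering $r_3$ from $r_1, r_2$ without a separate bounding box $b_3$. But the core of the argument — that leaving one bounded edge of $r_3^*$ unguessed produces only a ``benign'' ambiguity that can be brute-forced away in $O(1)$ — is precisely the part you defer (``which I would re-examine'') and I do not think it survives scrutiny. In the regime where all eight edges appear on $\partial(r_1^*\cup r_2^*\cup r_3^*)$, the Type-C cell $\gamma$ at the crossing of $r_3^*$'s un-guessed top edge with a vertical edge $e_1^*$ of $r_1^*$ genuinely determines whether points of $P$ in $\gamma$ belong in $b_1$ or not, and both failure modes are fatal: over-assigning $\gamma$ to $r_1^*$ can enlarge $b_1$ past $r_1^*$ and force a heavier $r_1$, while withholding it can leave points in $P\setminus(r_1\cup r_2)$ that lie outside $r_3^*$, so Lemma~\ref{lem:pair} returns nothing (or something too heavy) and the final check discards the triple. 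Knowing only the order of $r_3^*$'s top edge relative to the $O(1)$ horizontal edges of $r_1^*,r_2^*$ does not tell you \emph{which row} the top edge occupies, hence not which cell along the (known) column of $e_1^*$ is the crossing cell; that row is an $O(g)$-way ambiguity, not $O(1)$. You would also have to say exactly how Step~1/Step~2 is strengthened so that this cell is ``already detected there'' — the existing Step~2 only examines the extremal cells $\gamma^{\pm}(e_1^*)$ along $e_1^*$, and $\gamma$ need not be extremal.

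The paper takes a different, more structured route. It splits Step~3 into three cases: (I) two of the rectangles are disjoint, handled with a single separating grid line and $O(n)$ guesses of one rectangle's identity, cost $\OO(gn)$; (II) there are at most $7$ edges, or some edge is \emph{hidden} from $\partial(r_1^*\cup r_2^*\cup r_3^*)$, in which case the hidden edge provably plays no role in the cell classification, so the un-guessed-edge idea is legitimate and gives $O(g^7)$ — this is the one place where your intuition is correct and coincides with the paper; and (III) eight non-hidden edges with a common intersection, which is exactly the troublesome regime above. There the paper does something quite different: it observes structurally that one rectangle ($r_1^*$) has at most two edges, enumerates the $O(1)$ arrangement subcases, notes that $r_3^*$ then has at most one edge meeting $\partial r_1^*\setminus(r_2^*\cup r_3^*)$, guesses the full \emph{identity} of $r_2^*$ ($O(n)$ choices, which lets one refine the grid with $r_2^*$'s exact edges), and only $O(g^3)$ further guesses of $r_1^*$'s edges and that one relevant edge of $r_3^*$ — giving $\OO(g^3 n)$. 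The overall bound is $\OO(g^7 + g^3 n + n^2/g) = \OO(n^{7/4})$, again at $g=n^{1/4}$. So the paper never tries to make the drop-one-guess idea work uniformly; it isolates the hidden-edge case where it is clean and invents a different mechanism for the hard case. Your proposal is missing exactly that Case~III machinery, and the hand-waved disambiguation does not replace it.
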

\begin{proof}
We follow Steps 1--2 of the proof of Theorem~\ref{thm:alg:basic}
but will modify Step~3.

\begin{figure}
\centering\includegraphics[scale=1]{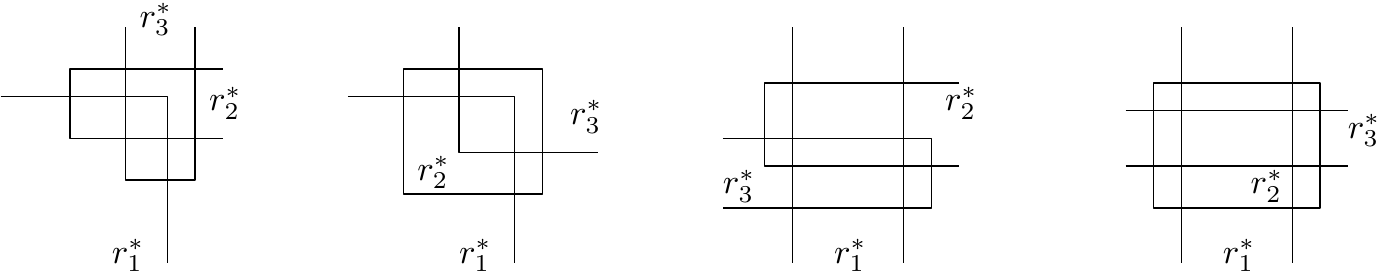}
\caption{Proof of Theorem~\ref{thm:alg:improv}: subcases of Case III, when there are no hidden edges.}
\label{fig:cases}
\end{figure}

\begin{itemize}
\item {\sc Case I:} Two of the rectangles in $\{r_1^*,r_2^*,r_3^*\}$
are disjoint.

W.lo.g., say that these two rectangles are $r_2^*$ and $r_3^*$, and
they are vertically separated, with $r_2^*$ to the left.
We guess the identity of $r_1^*$; there are $O(n)$ choices.
We guess a grid vertical line $\ell^*$ separating $r_1^*$ and $r_2^*$;
there are $O(g)$ choices.
(If a grid separating line does not exist, the answer would have
been found in Step~1.)
We find the minimum bounding box $b_2$
enclosing all points of $P\setminus r_1^*$ to the left of $\ell^*$ by orthogonal range searching, 
and then find a minimum-weight rectangle $r_2\in R$ enclosing $b_2$,
by a rectangle enclosure query.
We then query the data structure in Lemma~\ref{lem:pair} to find
a minimum-weight rectangle $r_3\in R$ (if exists) such that $P\subset
r_1^*\cup r_2\cup r_3$, and add the triple $r_1^*r_2r_3$ to the list $L$.
The total number of queries is $O(gn)$,
and so this case takes $\OO(gn)$ time.

\LIPICS{\smallskip}
\item {\sc Case II:} $r_1^*,r_2^*,r_3^*$ have at most 7 edges in total,
or have 8 edges but at least one of the edges is \emph{hidden}, i.e.,
does not appear in the boundary of the union $r_1^*\cup r_2^*\cup r_3^*$.

W.l.o.g., say that this hidden edge is $e_3^*$, defined by $r_3^*$.
We follow our earlier approach in Step~3, except that we need not
guess the column/row containing $e_3^*$; the number of guesses is thus
lowered from $O(g^8)$ to $O(g^7)$.  We know the 
grid cells of types A, B, and C that are
assigned to $r_1^*$ and $r_2^*$ (though not necessarily $r_3^*$),
and so we can compute the candidate rectangles $r_1$ and $r_2$ as before.
We then query the data structure in Lemma~\ref{lem:pair} to find
a minimum-weight rectangle $r_3\in R$ (if exists) such that $P\subset
r_1\cup r_2\cup r_3$, and add the triple $r_1r_2r_3$ to the list $L$.
The total number of queries is $O(g^7)$,
and so this case takes $\OO(g^7)$ time.

\LIPICS{\smallskip}
\item {\sc Case III:}
$r_1^*,r_2^*,r_3^*$ have a common intersection, and have
8 edges in total, none of which are hidden.

Some rectangle, say, $r_1^*$, must have at most 2 edges
(i.e., defined either by a vertical and a horizontal ray,
or by two parallel lines). 
By enumerating all scenarios,
we see that only 4 main subcases remain, as shown in Figure~\ref{fig:cases}
(all other subcases are symmetric).
In each of these subcases, one of the rectangles, say, $r_3^*$, 
has at most one edge touching 
$\partial r_1^* \setminus (r_2^*\cup r_3^*)$.
Denote this edge by $e_3^*$ (if exists).

We guess the identity of $r_2^*$; there are $O(n)$ choices.
Once $r_2^*$ is known, we add the vertical and horizontal lines through
the edges of $r_2^*$ to the grid.
We guess the row/column containing each of $r_1^*$'s horizontal/vertical edges and the row/column
containing $r_3^*$'s horizontal/vertical edge $e_3^*$ (if exists); there
are $O(g^3)$ such choices.
We then have enough information to determine the grid cells of types A, B, and C that are assigned to $r_1^*$ (though not necessarily $r_3^*$).
So we can compute the candidate rectangle $r_1$ as before.
We then query the data structure in Lemma~\ref{lem:pair} to find
a minimum-weight rectangle $r_3\in R$ (if exists) such that $P\subset
r_1\cup r_2^*\cup r_3$, and add the triple $r_1r_2^*r_3$ to the list $L$.
The total number of queries is $O(g^3 n)$,
and so this case takes $\OO(g^3n)$ time.
\end{itemize}

We do not know which case is true beforehand, but can try the algorithms for all cases
and return the best triple in $L$.
The overall running time
is $\OO(g^7 + g^3n + n^2/g)$.  Setting $g=n^{1/4}$ yields the theorem. 
\end{proof}

\subsection{Improvement for weighted unit squares}

\begin{theorem}\label{thm:alg:unit}
The running time in Theorem~\ref{thm:alg:basic} can be improved
to $\OO(n^{8/5})$ if the rectangles in $R$ are weighted unit squares.
\end{theorem}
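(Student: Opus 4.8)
The plan is to keep the grid-based framework of Theorems~\ref{thm:alg:basic} and~\ref{thm:alg:improv} intact — reusing Steps~1 and~2 essentially verbatim, which already contribute only $\OO(n^2/g + gn)$ — and to drive the cost of Step~3 down from $\OO(g^8)$ to $\OO(g^4)$ by exploiting that an axis-aligned unit square has only two degrees of freedom. The first ingredient is that we never need to guess $r_3^*$ at all: once the combinatorial positions of $r_1^*$ and $r_2^*$ relative to the grid are fixed, we can determine the grid cells assigned to $r_1^*$ and to $r_2^*$, form the minimum bounding boxes $b_1,b_2$ of the points of $P$ in those cells, obtain candidate unit squares $r_1\supseteq b_1$ and $r_2\supseteq b_2$ by unit-square enclosure queries (which, like the rectangle-enclosure queries used in Lemma~\ref{lem:pair}, reduce to constant-dimensional orthogonal range searching), and then query the data structure of Lemma~\ref{lem:pair} on the pair $r_1,r_2$ to recover the cheapest $r_3$. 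The second ingredient is that guessing the position of a single unit square relative to the grid should cost only $O(g^2)$ rather than $O(g^4)$, since a unit square is pinned down (up to the bounded ambiguity discussed below) by the grid column containing its left edge together with the grid row containing its bottom edge: its right and top edges are then forced to lie within the unit-length translates of that column and row. With two squares to guess and $r_3^*$ handled for free, Step~3 runs in $\OO(g^4)$ time, and the overall running time $\OO(g^4 + gn + n^2/g)$ is optimized at $g=n^{2/5}$, giving $\OO(n^{8/5})$.

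The step I expect to be the main obstacle is that fixing the column of $r_1^*$'s left edge and the row of its bottom edge does \emph{not} by itself determine the full type-A/B/C classification of Step~3: it fixes which grid cells are crossed by the left and bottom edges of $r_1^*$, but the right edge of $r_1^*$ merely lies somewhere in the unit-length shift of that column, and likewise for the top edge, so any grid cell straddling those shifted intervals is \emph{ambiguous} — we cannot yet decide whether it is completely inside $r_1^*$ (type~A) or merely crosses $\partial r_1^*$ (type~B or~C). I would resolve this by additionally guessing which grid column the right edge falls into and which grid row the top edge falls into, and arguing that this contributes only a constant overhead — equivalently, refining the grid locally by inserting the $O(1)$ unit-length translates of the current cell boundaries and charging the resulting sub-columns and sub-rows carefully — so that the $O(g^2)$-per-square bound survives. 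Once the arrangement of $\{r_1^*,r_2^*\}$ is known combinatorially, the type-A/B/C assignment and the correctness argument proceed exactly as in Step~3: any point of $P$ lying in a cell assigned to $r_j^*$ but not covered by $r_j^*$ forces the triple to have already been output in Step~1 or Step~2 (via the cells $\gamma^-(\cdot),\gamma^+(\cdot)$), so the candidates $r_1,r_2$ computed from $b_1,b_2$ coincide with $r_1^*,r_2^*$ whenever the guesses are correct.

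Finally, as in the proof of Theorem~\ref{thm:alg:improv}, I would peel off the easy case in which two of $r_1^*,r_2^*,r_3^*$ are disjoint: there we guess a separating grid line ($O(g)$ choices) together with the identity of the third square ($O(n)$ choices), take a bounding box on each side of the line, and finish with a single query to Lemma~\ref{lem:pair}, all in $\OO(gn)$ time (if no grid separating line exists the triple was already found in Step~1). This lets the remaining analysis assume the three unit squares pairwise intersect, in which case their arrangement — and hence the collection of ambiguous cells handled in the previous paragraph — has bounded complexity, which is precisely what keeps the extra guessing there down to a constant factor. Summing over the cases, the running time is $\OO(g^4 + gn + n^2/g)$, and setting $g=n^{2/5}$ yields $\OO(n^{8/5})$.
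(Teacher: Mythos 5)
Your plan diverges from the paper's proof in a way that opens a genuine gap: you propose to guess the grid positions of $r_1^*$ and $r_2^*$ only and recover $r_3^*$ purely from Lemma~\ref{lem:pair}, claiming that ``once the arrangement of $\{r_1^*,r_2^*\}$ is known combinatorially, the type-A/B/C assignment and the correctness argument proceed exactly as in Step~3.'' That claim is false. The type-B/C distinction and, crucially, the choice of which rectangle a type-C cell gets assigned to both depend on the arrangement of \emph{all three} rectangles. Concretely, a cell $\gamma$ crossed by a horizontal edge $e_1^*$ of $r_1^*$ and by a vertical edge $e_3^*$ of $r_3^*$ is type~C, and the argument in Step~3 assigns $\gamma$ to $r_1^*$ only when one can walk from $v^*=e_1^*\cap e_3^*$ horizontally along $e_1^*$ while staying on the boundary of the union; otherwise $\gamma$ must be assigned to $r_3^*$. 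Without any guess about $r_3^*$, you cannot tell which of these holds, and you cannot even tell whether $\gamma$ is type~B (crossed only by $r_1^*$) or type~C. If you wrongly assign such a $\gamma$ to $r_1^*$, the Step-2 safety net via $\gamma^-(e_1^*),\gamma^+(e_1^*)$ no longer applies, the bounding box $b_1$ may contain points not in $r_1^*$, and the candidate $r_1$ found by the enclosure query can fail to recover $r_1^*$ at all. Nothing in your proposal addresses this; the ``ambiguity'' you discuss and patch is the separate issue of a unit square not being pinned down by the columns/rows of two of its edges.

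In contrast, the paper does \emph{not} dispense with guessing $r_3^*$. After the disjoint case, it further splits the pairwise-intersecting case by the vertical ordering of the three centers. In its Case~II (heights ordered consistently with the left-to-right order), it exploits a dedicated geometric observation involving the region $\hat r_2^*$ below-and-left of $r_2^*$'s top-right corner, which separates $r_1^*\setminus r_2^*$ from $r_3^*\setminus r_2^*$, giving an $\OO(n)$ bound with no grid guessing at all. In its Case~III it extends to unbounded rectangles so that $r_1^*$ has 2 edges and $r_2^*,r_3^*$ have 3 each, explicitly guesses two edges of $r_3^*$ (the two horizontal ones) in addition to edges of $r_1^*,r_2^*$, observes that the two missing edges are hidden and hence irrelevant to the boundary of the union, and then gives a bespoke rule for which cells are type~A and assigned to $r_2^*$ and to $r_3^*$. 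The $O(g)$ pairs-of-parallel-lines-at-unit-distance sweep argument (not a per-column $O(1)$ argument, as you phrase it) then cuts the guess count to $O(g^4)$. So the count you end up with happens to match the paper's, but only because the paper spends its $O(g^4)$ budget differently, devoting part of it to $r_3^*$; your version spends it all on $r_1^*,r_2^*$ and, as a result, lacks the information needed to carry out Step~3's assignment correctly.
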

\begin{proof}
We follow the approach in the proof of Theorem~\ref{thm:alg:basic},
but will modify Step~3.  W.l.o.g., assume that 
$r_1^*,r_2^*,r_3^*$ have centers ordered from left to right.
For unit squares, there are 3 possible cases (up to symmetry):

\begin{itemize}
\item {\sc Case I:} Two of the squares in $\{r_1^*,r_2^*,r_3^*\}$
are disjoint.

As in the proof of Theorem~\ref{thm:alg:improv}, this case
can be handled in $\OO(gn)$ time.

\begin{figure}
\centering\includegraphics[scale=0.7]{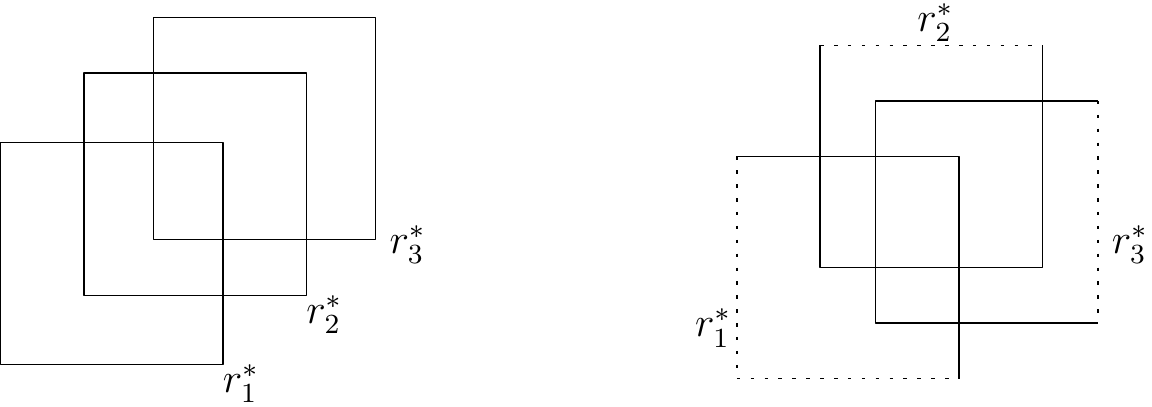}
\caption{Proof of Theorem~\ref{thm:alg:unit}: Case II (left) and
Case III (right).}
\label{fig:unit}
\end{figure}

\LIPICS{\smallskip}
\item {\sc Case II:} $r_1^*,r_2^*,r_3^*$ have a common intersection,
and $r_1^*$ has a lower center than $r_2^*$, which has a lower center than~$r_3^*$.  (See Figure~\ref{fig:unit} (left).)

We guess the identity of $r_2^*$; there are $O(n)$ choices.
Let $\hat{r}_2^*$ denote the region of all points below and left
of the top-right vertex of $r_2^*$.  Notice that $r_1^*\setminus r_2^*$ is contained
in $\hat{r}_2^*$, while $r_3^*\setminus r_2^*$ is outside $\hat{r}_2^*$.
We find the minimum bounding
box $b_1$ enclosing $P\cap \hat{r}_2^* \setminus r_2^*$, by orthogonal
range searching, and find a minimum-weight square $r_1\in R$ enclosing $b_1$ by a rectangle enclosure query.
We then query the data structure in Lemma~\ref{lem:pair} to find
a minimum-weight square $r_3\in R$ (if exists) such that $P\subset
r_1\cup r_2^*\cup r_3$, and add the triple $r_1r_2^*r_3$ to the list $L$.
The total number of queries is $O(n)$, and so this case takes $\OO(n)$ time.

\LIPICS{\smallskip}
\item {\sc Case III:}  $r_1^*,r_2^*,r_3^*$ have a common intersection,
and $r_1^*$ has a lower center than $r_3^*$, which has a lower center than~$r_2^*$.  (See Figure~\ref{fig:unit} (right).)

After extending the squares to unbounded rectangles, $r_1^*$ has 2 edges,
and $r_2^*$ and $r_3^*$ have 3 edges each.

We guess the column/row containing the vertical/horizontal edge of $r_1^*$,
and the columns containing the two vertical edges of $r_2^*$, and the rows containing the two horizontal edges of $r_3^*$;
there are $O(g^6)$ choices.
We need not guess the bottom horizontal edge of $r_2^*$ and
the left vertical edge of $r_3^*$, as these two edges are hidden.
We can assign grid cells of types B and C as before without knowing the hidden edges.  For type A,
each grid cell completely contained in $r_1^*$ can be assigned to $r_1^*$;
each grid cell that is strictly between the columns containing the two vertical
edges of $r_2^*$ and is on or above the row containing the top horizontal edge of $r_1^*$
can be assigned to $r_2^*$;
each grid cell that is strictly between the rows containing the two horizontal
edges of $r_3^*$ and is on or to the right of the column containing the right vertical edge of $r_1^*$
can be assigned to $r_3^*$.  The rest of the algorithm is the same.

Observe that the number of possible pairs of columns containing two
vertical edges of $r_2^*$ at unit distance is actually $O(g)$ instead of
$O(g^2)$ (to see this, imagine sweeping the grid by
two vertical lines at unit distance apart, creating $O(g)$ events).  Similarly, the number
of possible pairs of rows containing two
horizontal edges of $r_3^*$ at unit distance is $O(g)$.
With these observations, the number of guesses is further lowered to $O(g^4)$.
\end{itemize}

The overall running time is 
$\OO(g^4 + gn + n^2/g)$.  
Setting $g=n^{2/5}$ yields the theorem.
\end{proof}

\subsection{Improvement for unweighted unit squares and rectilinear discrete 3-center in $\R^2$}

\begin{theorem}\label{thm:alg:unit:unwt}
The running time in Theorem~\ref{thm:alg:basic} can be improved
to $\OO(n^{3/2})$ if the rectangles in $R$ are unweighted unit squares.
\end{theorem}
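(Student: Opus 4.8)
The plan is to follow the same grid-based framework from Theorem~\ref{thm:alg:basic}, reusing Steps 1--2 verbatim (each costing $\OO(n^2/g + gn)$ time), and then to sharpen Step~3 by combining two separate savings: the maximality pruning of Theorem~\ref{thm:alg:unwt} (unweighted $\Rightarrow$ keep only maximal squares, so a $p$-sided square contributes only $O(g^{p-1})$ choices for its defining columns/rows instead of $O(g^p)$) together with the unit-distance observation of Theorem~\ref{thm:alg:unit} (the two parallel edges of a square at unit separation sweep through only $O(g)$ positions, not $O(g^2)$). The target is to drive the number of guesses in Step~3 down to $O(g^3)$, so that the overall running time becomes $\OO(g^3 + n^2/g + gn)$, and then setting $g = n^{1/2}$ yields $\OO(n^{3/2})$.

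First I would split into the same three cases as in the proof of Theorem~\ref{thm:alg:unit}. Case~I (two of the three squares disjoint) is handled exactly as before in $\OO(gn)$ time via a guessed separating grid line plus a Lemma~\ref{lem:pair} query, and needs no change. For Cases~II and~III (the squares have a common intersection), after extending to unbounded rectangles we get an $8$-edge arrangement where $r_1^*$ has $2$ edges and $r_2^*, r_3^*$ have $3$ edges each, with $2$ of those edges hidden (so they need not be guessed). Naively that leaves $8 - 2 = 6$ edges to locate. The unit-distance trick removes one degree of freedom from $r_2^*$ (its two vertical edges, being unit-separated, give only $O(g)$ joint positions) and one from $r_3^*$ (its two horizontal edges likewise), bringing us to an effective $O(g^4)$ as in Theorem~\ref{thm:alg:unit}. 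To get one more factor of $g$ off, I would additionally invoke maximality: among maximal unit squares, the positions of the (say) vertical edges are further constrained --- more carefully, the squares whose relevant edge-tuples are ``maximal'' in the appropriate sense number only $O(g^{p-1})$, so mapping $r_1^*$'s two edges, $r_2^*$'s horizontal edge, and $r_3^*$'s vertical edge (the non-unit-constrained, non-hidden ones) through Lemma~\ref{lem:maximal} shaves the last factor, giving $O(g^3)$ guesses. For each guess we assign grid cells to $r_1^*, r_2^*, r_3^*$ by type (A/B/C) exactly as in the basic algorithm --- the hidden edges cause no trouble since type-A cells between the known unit-separated edge pair can be assigned directly, as spelled out in Theorem~\ref{thm:alg:unit} --- compute the bounding boxes $b_j$ of $P\cap\rho_j$, recover minimum-area (equivalently, by maximality, any) enclosing squares $r_j\in R$ via rectangle-enclosure queries, test coverage by orthogonal range searching, and push the triple onto $L$. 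Each guess costs $\OO(1)$, so Step~3 costs $\OO(g^3)$.

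I expect the main obstacle to be verifying that the maximality pruning is genuinely compatible with the unit-distance counting --- i.e., that after we have already ``used up'' the unit-separation constraint to collapse a pair of parallel edges to $O(g)$ positions, the remaining free edges still form a set of maximal points (in the sense of Lemma~\ref{lem:maximal}) over the correct low-dimensional grid, so that their count is $O(g^{\,(\text{remaining edges})-1})$ rather than $O(g^{\,\text{remaining edges}})$. One has to be slightly careful that restricting to maximal squares does not destroy optimality in a case where the true optimum uses a non-maximal square: this is fine precisely because in the unweighted problem a non-maximal square can always be replaced by the maximal square containing it without increasing the cover size, exactly as argued in Theorem~\ref{thm:alg:unwt}. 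A secondary, more bookkeeping-level obstacle is confirming that in each of the (finitely many, up to symmetry) subcases of the common-intersection configuration, the two edges we declare ``hidden'' really are hidden and the cell-assignment rules of Type~A/B/C still cover every point of $P$ unambiguously; but this is a direct check analogous to the figures in the proofs of Theorems~\ref{thm:alg:improv} and~\ref{thm:alg:unit}. With $g = n^{1/2}$ balancing $g^3 = n^{3/2}$ against $n^2/g = n^{3/2}$ and $gn = n^{3/2}$, the bound follows. Finally, the claimed consequence for rectilinear discrete $3$-center in $\R^2$ is immediate from the reduction (standard parametric-search / sorted-matrix techniques~\cite{Chan99,FredericksonJ82}) of that problem to the unweighted unit-square case, as noted at the start of Section~\ref{sec:alg}.
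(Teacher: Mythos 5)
Your high-level plan — restrict to maximal squares, reuse the three-case structure of the weighted unit-square algorithm, combine the unit-distance collapse (two parallel edges at unit separation give only $O(g)$ joint positions) with Lemma~\ref{lem:maximal} to knock Case~III's guess count from $O(g^4)$ down to $O(g^3)$, then balance with $g=\sqrt{n}$ — is exactly the paper's proof. There is, however, a bookkeeping slip in which edges you route through Lemma~\ref{lem:maximal}: you list ``$r_1^*$'s two edges, $r_2^*$'s horizontal edge, and $r_3^*$'s vertical edge (the non-unit-constrained, non-hidden ones),'' but in the Case~III configuration of Theorem~\ref{thm:alg:unit}, $r_2^*$'s single horizontal edge (the bottom one) and $r_3^*$'s single vertical edge (the left one) are precisely the two \emph{hidden} edges and are never guessed at all; they contribute no factor of $g$ to begin with. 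The entire extra savings comes from applying Lemma~\ref{lem:maximal} only to $r_1^*$'s $2$-tuple of (column,row) indices — since $r_1^*$ extends to a $2$-sided square, its edge pair is a maximal point of a subset of $\{1,\ldots,g\}^2$, so the count drops from $O(g^2)$ to $O(g)$. Multiplying by the two unit-distance $O(g)$ factors for $r_2^*$ and $r_3^*$ gives $O(g^3)$ as claimed. A second, harmless slip is that you lump Case~II with Case~III under the same ``8 edges, 2 hidden'' description, but the paper handles Case~II separately (by guessing the identity of $r_2^*$ directly, in $\OO(n)$ time, without grid-cell enumeration) precisely because its geometry is different; this doesn't affect the bound since $\OO(n) = \OO(g^3)$ at the chosen $g$, but your structural claim doesn't literally hold for Case~II.
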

\begin{proof}
In the unweighted case, it suffices to keep only the rectangles
that are maximal (as in the proof of Theorem~\ref{thm:alg:unwt}).
We run the algorithm
in the proof of Theorem~\ref{thm:alg:unit}.

In Case III, the number of guesses for the column/row containing
the vertical/horizontal edge of $r_1^*$ is now $O(g)$ instead of $O(g^2)$
by Lemma~\ref{lem:maximal}.  The overall running time
becomes $\OO(g^3 + gn + n^2/g)$.   
Setting $g=\sqrt{n}$ yields the theorem.
\end{proof}

\begin{corollary}
Given a set $P$ of $n$ points in $\R^2$, we
can solve the rectilinear discrete 3-center problem
in $\OO(n^{3/2})$ time.
\end{corollary}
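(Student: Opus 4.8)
The plan is to reduce the rectilinear discrete $3$-center problem in $\R^2$ to the unweighted size-$3$ set cover problem for unit squares in $\R^2$, and then invoke Theorem~\ref{thm:alg:unit:unwt}. The decision version of discrete $3$-center asks, for a given radius $r$, whether there exist $3$ points $c_1,c_2,c_3\in P$ such that every point of $P$ lies within $L_\infty$-distance $r$ of some $c_i$; equivalently (after rescaling so that $2r$ becomes the side length), whether $P$ can be covered by $3$ axis-aligned squares of a fixed size whose centers lie in $P$. This is almost the set cover instance of Theorem~\ref{thm:alg:unit:unwt}, with $R$ being the set of size-$(2r)$ squares centered at the points of $P$, except that the squares in $R$ depend on the unknown optimal radius $r^*$.

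First I would argue that the optimal radius $r^*$ is attained at a distance that is ``determined'' by the input in a combinatorial sense: the $L_\infty$ radius of the ball covering a given subset is half the side length of that subset's bounding box, so $r^*$ equals half of either an $x$-coordinate difference or a $y$-coordinate difference between two points of $P$. Hence $r^*$ takes one of $O(n^2)$ candidate values. A naive binary search over these $O(n^2)$ candidates, each tested by an $\OO(n^{3/2})$-time call to the algorithm of Theorem~\ref{thm:alg:unit:unwt}, would cost $\OO(n^{3/2})$ (the logarithmic overhead is absorbed), but one must be careful that the sorted list of candidates is not explicitly materialized, since there are $\Theta(n^2)$ of them. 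This is the standard setting for \emph{parametric search} or for the sorted-matrix / distance-selection machinery, and it is exactly where the cited references~\cite{Chan99,FredericksonJ82} come in: Frederickson and Johnson's technique allows binary search over an implicitly represented sorted matrix of coordinate differences without ever listing all $\Theta(n^2)$ entries, performing only $O(\log n)$ actual decision calls; alternatively, Chan's randomized optimization technique~\cite{Chan99} converts the decision algorithm into an optimization algorithm with only a polylogarithmic (in expectation) overhead. Either way, the total running time is $\OO(n^{3/2})$.

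The main obstacle, and the point requiring the most care, is the interface between the decision procedure and the search: the set $R$ of candidate squares is not fixed but changes with the guessed radius, so one must verify that the decision algorithm of Theorem~\ref{thm:alg:unit:unwt} can be run as a ``black box'' on the instance $(P, R_\lambda)$ where $R_\lambda$ is the set of $\lambda$-side squares centered at $P$, and that its output (feasible / infeasible) is monotone in $\lambda$, which it is since enlarging every square only helps coverage. Monotonicity is what legitimizes binary search. One also needs the decision algorithm's running time to be independent of $\lambda$, which it is. Given these observations, the reduction is routine: I would state that for each candidate value $\lambda$ in the implicit sorted set of half-coordinate-differences, build $R_\lambda$ in $O(n)$ time, run Theorem~\ref{thm:alg:unit:unwt}'s algorithm, and use~\cite{Chan99,FredericksonJ82} to locate the smallest feasible $\lambda$ with $O(\log n)$ decision calls, for a total of $\OO(n^{3/2})$ time. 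A final remark would note that the same reduction handles the two-point-set (``demand/supply'') variant verbatim, since there $R_\lambda$ consists of squares centered at the supply points and $r^*$ is half a coordinate difference between a demand point and a supply point.
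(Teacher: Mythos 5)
Your proof is correct and follows the same route as the paper: reduce the decision version (fix a radius $r$, ask whether $P$ can be covered by 3 axis-aligned squares of side $2r$ centered at points of $P$) to unweighted size-3 set cover for unit squares via Theorem~\ref{thm:alg:unit:unwt}, then lift the decision algorithm to an optimization algorithm using Frederickson--Johnson's sorted-matrix selection or Chan's randomized technique. Your additional remarks on the $O(n^2)$ candidate radii and the monotonicity that justifies the implicit binary search are more detail than the paper provides but are entirely consistent with it; note also that the paper's proof text cites Theorem~\ref{thm:alg:unwt} where it evidently means Theorem~\ref{thm:alg:unit:unwt}, which is the reference you correctly use.
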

\begin{proof}
The decision problem---deciding whether the optimal radius is at most
a given value $r$---amounts to finding 3 squares covering $P$,
among $n$ squares with centers at $P$ and side length $2r$.
The decision problem can thus be solved in $\OO(n^{3/2})$ time
by Theorem~\ref{thm:alg:unwt} (after rescaling to make the side length unit).
We can then solve the original optimization problem
by Frederickson and Johnson's technique~\cite{FredericksonJ82}, with an
extra logarithmic factor (or alternatively by Chan's randomized technique~\cite{Chan99}, without the logarithmic factor).
\end{proof}

\section{Conditional Lower Bounds for 
Size-3 Set Cover for Boxes}
\label{sec:lb:rectilinear}

In this section, we prove conditional lower bounds for 
size-3 set cover for boxes in certain dimensions (rectilinear discrete 3-center is related to size-3 set cover for unit hypercubes).  We begin with the weighted version, which is more straightforward and has a simple proof, and serves as a good warm-up to the more challenging, unweighted version later.





\subsection{Weighted size-3 set cover for unit squares in $\R^2$} 

An \emph{orthant} (also called a dominance range)
refers to a $d$-sided box in $\R^d$ which is unbounded along each of the $d$ dimensions.  (Note that orthants may be oriented in $2^d$ ways.)  To obtain a lower bound for the unit square  or unit hypercube case, it suffices to obtain a lower bound for the orthant case, since
we can just replace each orthant with a hypercube with a sufficiently large side length $M$, and then rescale by a $1/M$
factor.

\begin{theorem}
Given a set $P$ of $n$ points and a set $R$ of $n$ weighted orthants in $\R^2$, finding 3 orthants in $R$ of minimum total weight that cover $P$
requires $\Omega(n^{3/2-\delta})$ time for any constant $\delta>0$, assuming the APSP Hypothesis.
\end{theorem}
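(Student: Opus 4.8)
The plan is to reduce from the \emph{minimum-weight triangle} problem on complete tripartite graphs: given three index sets $X,Y,Z$, each of size $N$, and integer weight functions $p\colon X\times Y\to\mathbb{Z}$, $q\colon Y\times Z\to\mathbb{Z}$, $r\colon X\times Z\to\mathbb{Z}$ with $O(\log N)$-bit values, compute $\min_{x,y,z}\bigl(p(x,y)+q(y,z)+r(x,z)\bigr)$. It is well known that this problem is subcubic-equivalent to APSP, so under the APSP Hypothesis it cannot be solved in $O(N^{3-\eps})$ time for any constant $\eps>0$. I will turn such an instance into a weighted size-$3$ orthant set-cover instance with $n=\Theta(N^2)$ points and orthants; an $O(n^{3/2-\delta})$-time set-cover algorithm would then compute the minimum-weight triangle in $O((N^2)^{3/2-\delta})=O(N^{3-2\delta})$ time, a contradiction, which yields the claimed $\Omega(n^{3/2-\delta})$ bound.

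First I would shift all weights by a common constant to make them nonnegative, and create one orthant per edge of the graph: a type-$XY$ orthant $O^{XY}_{x,y}$ for each $(x,y)$, a type-$YZ$ orthant $O^{YZ}_{y,z}$ for each $(y,z)$, and a type-$ZX$ orthant $O^{ZX}_{z,x}$ for each $(z,x)$ --- a total of $3N^2$ orthants --- where the weight of an edge-orthant is the (shifted) weight of that edge. For any triple consisting of one orthant of each type the total weight is $p(x,y)+q(y',z)+r(z',x')$ plus a fixed offset; hence, if I can arrange that a triple of orthants covers $P$ \emph{exactly} when it consists of one orthant of each type that moreover agrees on all shared indices (i.e.\ $x=x'$, $y=y'$, $z=z'$, so the three edges form a triangle), then the minimum-weight size-$3$ cover equals the minimum-weight triangle plus the offset, and we are done. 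It therefore suffices to design a set $P$ of $O(N)$ points realizing this covering condition (the orthant set already has size $\Theta(N^2)$, so $n=\Theta(N^2)$; a few dummy points covered by every type-$XY$ orthant pad $|P|$ up to $|R|$ if one insists on equal sizes).

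The set $P$ is a union of three ``threshold bundles,'' one per shared index. For the index $y$, shared by types $XY$ and $YZ$, I place $O(N)$ points $q_1,\dots,q_N,q_1',\dots,q_N'$ so that, after choosing orthant orientations and coordinate scales, $q_t$ is covered by $O^{XY}_{x,y}$ iff $y\ge t$ and by $O^{YZ}_{y',z}$ iff $y'<t$, while $q_t'$ handles the opposite inequality; then all these points are covered iff $y=y'$, and some point is exposed iff $y\ne y'$. The bundles for $x$ and for $z$ are analogous. The main obstacle --- and the one place real care is needed --- is that in only two dimensions one cannot give each of the three shared indices its own coordinate axis: each orthant carries two indices that must be separately distinguishable, and a short pigeonhole argument shows three pairwise-distinct axes do not exist; so a point in, say, the $y$-bundle is a priori endangered by the third orthant $O^{ZX}$ of the triple. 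The fix is to assign the three orthant types to three of the four axis-aligned orthant orientations in $\R^2$ and to separate the coordinate \emph{ranges} used to encode the different indices, so that along (at least) one of its two axes the ``wrong'' orthant's halfplane either entirely contains or entirely misses the relevant bundle and hence imposes no constraint; one verifies that all these range separations can be made mutually consistent with polynomially bounded integer coordinates (using reversed/negated scales where an inequality needs to flip). Finally, the threshold bundles automatically force the optimal triple to contain one orthant of each type --- e.g.\ if no type-$ZX$ orthant is chosen, then some point of the $x$- or $z$-bundle is exposed --- so no separate forcing gadget is needed, and conversely checking that a genuine triangle's three orthants cover all of $P$ (each threshold point being caught by exactly one side of its defining inequality) completes the correctness argument.
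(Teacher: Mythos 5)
Your overall plan is aligned with the paper's: reduce minimum-weight triangle (subcubic-equivalent to APSP) to weighted size-$3$ orthant cover in $\R^2$ so that an $O(n^{3/2-\delta})$ algorithm would refute the APSP Hypothesis, using three orthant types and threshold bundles of points. However, the central claim --- that a triple of orthants can be made to cover $P$ \emph{exactly} when its three edges close into a triangle, i.e., that the equalities $x{=}x'$, $y{=}y'$, $z{=}z'$ are enforced purely combinatorially --- cannot be realized in $\R^2$, and this is a genuine gap. An orthant in $\R^2$ has two faces, each a halfplane $\{u\le a\}$ or $\{u\ge a\}$ in one coordinate with a fixed sign; with three orthants there are exactly six faces matching the six variable instances $x,x',y,y',z,z'$, so each face encodes exactly one of them in exactly one direction. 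The $y$-constraint can only be tested via the $y$-face of $O^{XY}$ and the $y'$-face of $O^{YZ}$, and any threshold bundle placed against those two faces yields a single one-sided inequality (say $y\ge y'$). Your second bundle $q'_t$, ``handling the opposite inequality,'' would need the very same $y$-face of $O^{XY}$ to simultaneously behave as $\{v\le g(y)\}$ (to test $y\ge t$) and $\{v\ge g(y)\}$ (to test $y\le t$), which is impossible for a fixed orthant; reversing a scale flips one bundle's direction, not both. So covering alone enforces only one-sided inequalities, and your reduction would accept triples whose three edges do not close.

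The paper closes exactly this gap by pushing the reverse inequalities into the \emph{weights} rather than the geometry: set each orthant's weight to (the number of points of $P$ it covers) plus (its edge weight, rescaled to lie in $[0,0.1]$). Covering then forces, say, $x\ge x'$; if the strict inequality $x>x'$ held, some bundle point would be covered by two of the three chosen orthants, making the total coverage count --- and hence the cover weight --- exceed $|P|$ by at least $1$, whereas a genuine triangle achieves weight exactly $|P|$ plus a triangle weight in $[0,0.3]$. Thus the weight minimizer is forced to $x{=}x'$, $y{=}y'$, $z{=}z'$, and the optimum cover weight equals $|P|$ plus the minimum triangle weight. Without this overlap-penalizing weight assignment (the paper's reductions that avoid it necessarily move to boxes in $\R^3$ or orthants in $\R^4$), the threshold-bundle idea by itself does not pin down the triangle condition, and your correctness argument does not go through.
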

\begin{proof}
The APSP Hypothesis is known to be equivalent~\cite{WilliamsW18} to the hypothesis that finding
a minimum-weight triangle in a weighted graph with $n$ vertices requires $\Omega(n^{3-\delta})$ time for any constant $\delta>0$.
We will reduce the minimum-weight triangle problem on a graph with $n$ vertices and $m$ edges ($m\in [n,n^2]$) to the weighted size-3 set cover problem for $O(m)$ points and orthants in $\R^2$.
Thus, if there is an $O(m^{3/2-\delta})$-time algorithm for the latter problem, there would be
an algorithm for the former problem with running time $O(m^{3/2-\delta})\le O(n^{3-2\delta})$, refuting the hypothesis.

Let $G=(V,E)$ be the given weighted graph with $n$ vertices and $m$ edges.
Without loss of generality, assume that all edge weights are in $[0,0.1]$, and that
$V\subset [0,0.1]$, i.e., vertices are labelled by numbers that are rescaled to lie in $[0,0.1]$.
Assume that $0\in V$ and $0.1\in V$.


\begin{figure}\centering
    \includegraphics[width=.3\textwidth]{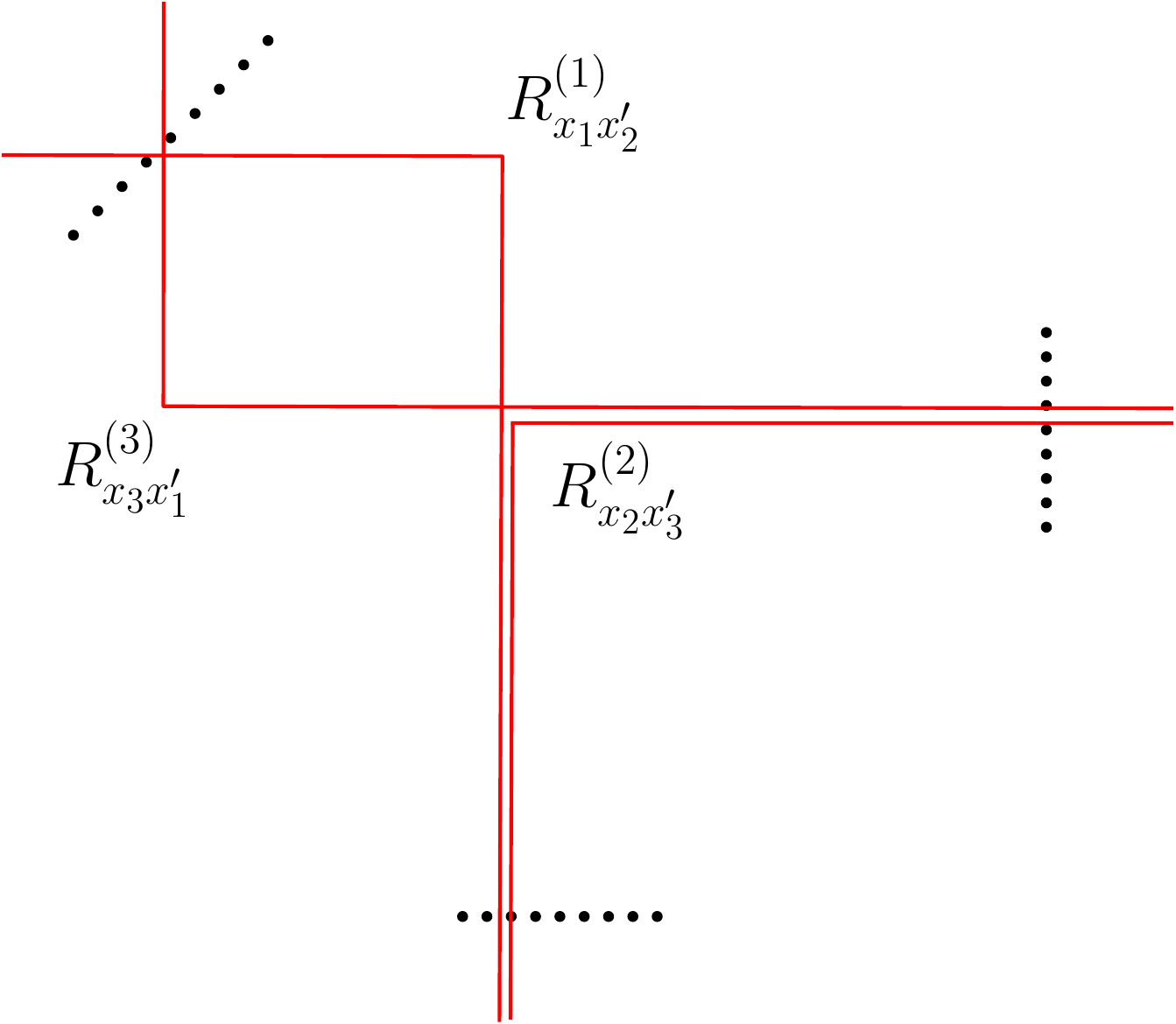}\\
    \caption{Reduction from the minimum-weight triangle problem to weighted size-3 set cover for orthants in $\R^2$.} 
    \label{fig:3_rects_weighted_LB}
\end{figure}

\subparagraph{The reduction.} For each vertex $t\in V$, 
we create three points $(t,1+t)$, $(2,t)$, and $(1+t,-1)$ (call them of type 1, 2, and 3, respectively).

Create the following orthants in $\R^2$:
\medskip
\[ 
\begin{array}{rccccr}
\forall x_1x_2'\in E:\ R_{x_1x_2'}^{(1)}&=& (-\infty,1+x_2') &\times& (-\infty,1+x_1] & \qquad\qquad\mbox{(type 1)}\\
\forall x_2x_3'\in E:\ R_{x_2x_3'}^{(2)}&=& [1+x_2,\infty) &\times& (-\infty,x_3') & \mbox{(type 2)}\\ 
\forall x_3x_1'\in E:\ R_{x_3x_1'}^{(3)}&=& (x_1',\infty) &\times& [x_3,\infty) & \mbox{(type 3)}
\end{array}
\]
\IGNORE{
\begin{align*}
R_{x_1x_2'}^{(1)}&=\ (-\infty,1+x_2'] \times (-\infty,1+x_1),\\
R_{x_2x_3'}^{(2)}&=\ (1+x_2,\infty) \times (-\infty,x_3],\\ 
R_{x_3x_1'}^{(3)}&=\ [x_1',\infty) \times (x_3,\infty).
\end{align*}
}%
The weight of each orthant is set to be the number of points it covers plus 
the weight of the edge it represents.
The total number of points and orthants is $O(n)$ and $O(m)$ respectively.
The reduction is illustrated in Figure~\ref{fig:3_rects_weighted_LB}.

\subparagraph{Correctness.}
We prove that the minimum-weight triangle in $G$ has weight $w$ (where $w\in [0,0.3]$) iff the optimal weighted size-3 set cover has weight $3n+w$.

Any feasible solution (if exists) must use an orthant of each type, since the point $(0,1)$ of type $1$ 
(resp.\ the point $(2,0.1)$ of type $2$, and the point $(1.1,-1)$ of type $3$) can only be covered by an orthant of type 1 (resp.\ 3 and 2). 
So, the three orthants in the optimal solution must be of the form $R^{(1)}_{x_1x_2'}$, $R^{(2)}_{x_2x_3'}$ and $R^{(3)}_{x_3x_1'}$ for some $x_1x_2',x_2x_3',x_3x_1'\in E$.

If $x_1<x_1'$, some point (of type 1) would be uncovered; on the other hand,
if $x_1>x_1'$, some point (of type 1) would be covered twice, and the total weight would then be at least $3n+1$.  
Thus, $x_1=x_1'$. 
 Similarly, $x_2=x_2'$ and $x_3=x_3'$.  So, $x_1x_2x_3$ forms a triangle in~$G$.
We conclude that the minimum-weight solution $R^{(1)}_{x_1x_2}$, $R^{(2)}_{x_2x_3}$ and $R^{(3)}_{x_3x_1}$ correspond to the minimum-weight triangle $x_1x_2x_3$ in $G$.
\IGNORE{
The next step is to prove these three rectangles correspond to the three edges of a triangle. First consider all points of type 1 with the form $(t,1+t)$. 
 Observe that $R^{(2)}_{x_2x_3'}$ cannot cover any of them, $R^{(1)}_{x_1x_2'}$ can cover all points corresponding to $t<x_1$, and $R^{(3)}_{x_3x_1'}$ can cover all points corresponding to $t\geq x_1'$. So in order to cover all of them, $x_1\geq x_1'$ must hold. Also if $x_1>x_1'$, at least one point of type 1 is covered twice, thus the total weight of the solution exceeds $3n$. Therefore $x_1=x_1'$ must hold. In other words, the points of type 1 are used to encode the equality $x_1=x_1'$. Similarly, $x_2=x_2'$ and $x_3=x_3'$ also hold, so there is a triangle $x_1x_2x_3$. For the other direction, it is easy to see having a triangle $x_1x_2x_3$ with weight $w$ ensures a weighted size-3 set cover $R^{(1)}_{x_1x_2}$, $R^{(2)}_{x_2x_3}$ and $R^{(3)}_{x_3x_1}$ with weight $3n+w$, which is optimal.
}
%
\end{proof}



\IGNORE{
Based on the idea for weighted orthants in $\mathbb{R}^2$, we can similarly prove a lower bound for weighted unit squares, by carefully choosing the coordinates of the points. In particular, for each vertex $t$, we create three points $(t,1+t)$, $(1.5,t)$ and $(1+t,-0.5)$. For each edge $(u,v)$, we create three points $(1+u,1+v)$, $(2+v,-1+u)$ and $(u,v)$. In order to ensure only the points corresponding to the edges will be picked as centers, we add three extra points $(-0.5,0)$, $(3,-1)$ and $(2,2)$. The weight of each point is set to be the number of points that the unit-square centered at it covers. The reduction is illustrated in Figure~\ref{fig:3_center_L_inf_weighted_2D}.


\begin{figure}\centering
    \includegraphics[width=.4\textwidth]{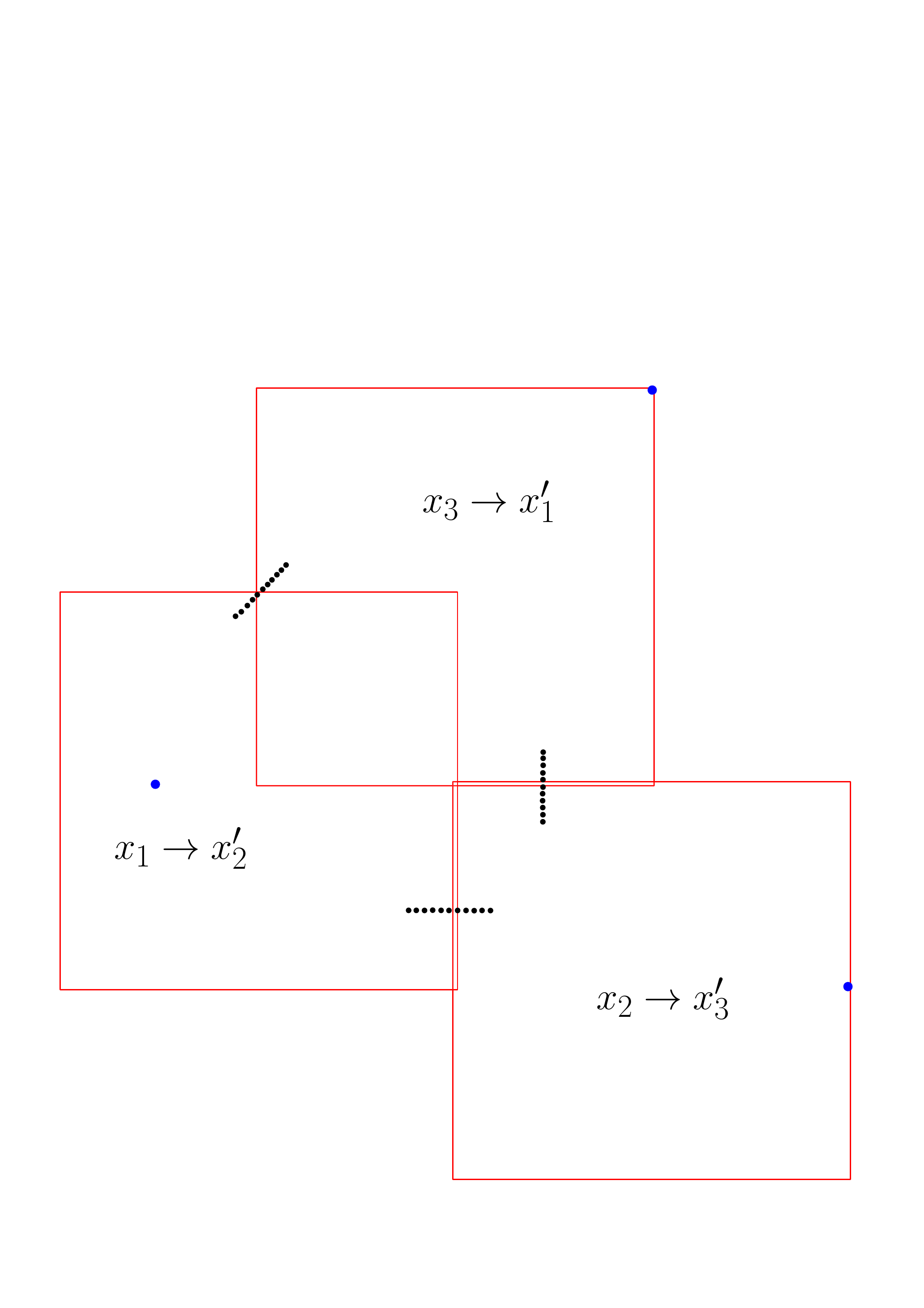}\\
    \caption{The reduction construction from triangle detection to rectilinear discrete 3-center.}
    \label{fig:3_center_L_inf_weighted_2D}
\end{figure}


\begin{corollary}
Given $n$ points in $\mathbb{R}^2$, computing the minimum weight set cover by 3 unit-squares centered at the given points requires $\Omega(n^{3/2-\delta})$ time, assuming the APSP Hypothesis.
\end{corollary}

}


\subsection{Unweighted size-3 set cover for boxes in $\mathbb{R}^3$}\label{sec:unweighted_boxes_3D}

Our preceding reduction 
uses weights to ensure equalities of two variables representing vertices. For the unweighted case, this does not work.  We propose a different way to force equalities, by using an extra dimension and extra sides (i.e., using boxes instead of orthants), with some carefully chosen coordinate values.


\begin{theorem}\label{thm:lb:box:3d}
Given a set $P$ of $n$ points and a set $R$ of $n$ unweighted axis-aligned boxes in $\mathbb{R}^3$, deciding whether there exist 3 boxes in $R$ that cover $P$ requires $\Omega(n^{4/3-\delta})$ time for any constant $\delta>0$, assuming the Sparse Triangle Hypothesis.
\end{theorem}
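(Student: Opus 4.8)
The plan is to reduce triangle detection in a sparse graph with $m$ edges (and $n\le m$ vertices) to size-3 set cover for $O(m)$ points and boxes in $\mathbb{R}^3$; since the resulting instance has size $N=O(m)$, an $O(N^{4/3-\delta})$-time cover algorithm would detect triangles in $O(m^{4/3-\delta})$ time, contradicting the Sparse Triangle Hypothesis. I would keep the skeleton of the weighted $\mathbb{R}^2$ construction above: rescale the vertex labels into a tiny interval $[0,\eps]$; for each edge $uv\in E$ (in both orientations) create three ``edge boxes'' $R^{(1)}_{uv},R^{(2)}_{uv},R^{(3)}_{uv}$ whose finite faces encode $u$ and $v$ in the first two coordinates exactly as before, but now also carrying carefully chosen intervals in a third coordinate; and use a constant number of ``anchor'' points, each lying only inside boxes of one prescribed type, so that any size-3 cover must consist of one box of each type, say $R^{(1)}_{x_1x_2'},R^{(2)}_{x_2x_3'},R^{(3)}_{x_3x_1'}$.

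The heart of the construction is forcing the equalities $x_1=x_1'$, $x_2=x_2'$, $x_3=x_3'$ without weights. In the weighted reduction each equality $x_i=x_i'$ came from combining a coverage constraint forcing $x_i\ge x_i'$ with the weight bound forcing $x_i\le x_i'$ (the weight bound being equivalent to ``no point covered twice''). Since plain set cover does not penalize multiple coverage, I would enforce $x_i\le x_i'$ by coverage as well: for each vertex $t\in V$ I create, besides the $\mathbb{R}^2$-style check point (placed near a ``$z\approx 0$'' slab) that is covered by the relevant two edge boxes iff $t\le x_i$ or $t>x_i'$, a second check point placed in a disjoint $z$-slab whose coordinates --- using an offset of the form $K-t$ for a large constant $K$ in the third coordinate --- are arranged so that the same two edge boxes, with their now-fixed third-coordinate intervals, cover it iff $t>x_i$ or $t\le x_i'$. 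The first family forces $x_i\ge x_i'$, the second forces $x_i\le x_i'$. The three gadgets are placed in pairwise far-apart regions, so that the third box of the solution can never reach a check point of a gadget it is not responsible for; this is what validates the per-gadget argument. The whole instance has $O(n)$ points and $O(m)$ boxes.

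For correctness, the forward direction is straightforward once the coordinates are right: if $x_1x_2x_3$ is a triangle, then $R^{(1)}_{x_1x_2},R^{(2)}_{x_2x_3},R^{(3)}_{x_3x_1}$ cover everything, since with $x_i=x_i'$ both families of check points in each gadget become ``always covered'', and one only has to check that the remaining anchor/structural points are picked up, which holds as in $\mathbb{R}^2$ because the finite faces of consecutive boxes line up exactly. Conversely, any 3-box cover uses one box per type by the anchors, and then each gadget forces its equality in both directions, so the three chosen edges share endpoints cyclically and hence form a triangle.

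The step I expect to be the main obstacle is the geometric design in the second paragraph: every equality gadget intuitively wants its two edge boxes to behave with \emph{opposite} orientation on its forward and backward check points, which would naturally want an extra axis per gadget; one must fit all three gadgets into only three dimensions while keeping every edge box a genuine axis-parallel box (one interval per coordinate), and then choose all the numerical offsets (the $K$'s and the separations between slabs and between gadgets) so that both the positive requirement --- a real triangle yields a full cover --- and the negative requirement --- any violated equality leaves some check point uncovered --- hold simultaneously. Pinning down such coordinates by hand, extending the weighted $\mathbb{R}^2$ picture to boxes in $\mathbb{R}^3$, is the technical core; the rest is routine bookkeeping.
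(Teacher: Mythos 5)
Your high-level strategy matches the paper's: reduce from triangle detection on an $m$-edge graph, create one box per (oriented) edge with three types forced by anchor points, and replace the ``weight penalizes double-coverage'' trick with a second family of check points per gadget so that coverage alone forces both $x_i\ge x_i'$ and $x_i\le x_i'$. You have also correctly identified the crux: fitting three such ``two-sided'' gadgets into only three dimensions.

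However, your proposal stops at exactly this crux and leaves it unresolved, and the one concrete design idea you do offer --- placing the three gadgets in pairwise far-apart $z$-slabs --- would not work as stated. Each edge box carries two variables (the two endpoints of an edge), so each box must actively participate in \emph{two} of the three equality gadgets; you cannot spatially confine gadgets to disjoint slabs without simultaneously requiring every box to span multiple far-apart slabs, which defeats the separation. The paper's construction avoids spatial separation altogether: it makes each box unbounded along one axis (so it spans the whole space in that direction), and on the remaining two axes it uses one open interval of width $2$ (offsets $\pm 1$, encoding one endpoint) and one closed interval of width $4$ (offsets $\pm 2$, encoding the other endpoint). For each equality $x_i=x_i'$ there are two point types, roughly of the form $(-1+t,0,2+t)$ and $(1+t,0,-2+t)$ cyclically rotated across gadgets; the carefully chosen $\pm 1$ versus $\pm 2$ offsets and open-versus-closed intervals, not large spatial separation, guarantee that the irrelevant third box can never cover a check point for a gadget it shouldn't touch, while the two relevant boxes cover it iff the one-sided inequality holds. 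That mechanism is the missing piece of your argument, and since the theorem is essentially only this construction, the proposal as written has a genuine gap.
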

\begin{proof}
We will reduce the triangle detection problem on a graph with $m$ edges to the unweighted size-3 set cover problem for $O(m)$ points and boxes in $\R^3$.
Thus, if there is an $O(m^{4/3-\delta})$-time algorithm for the latter problem, there would be
an algorithm for the former problem with running time $O(m^{4/3-\delta})$, refuting the hypothesis.

Let $G=(V,E)$ be the given unweighted sparse graph with $n$ vertices and $m$ edges ($n\le m$). Without loss of generality, assume that $V\subset [0,0.1]$,
and $0\in V$ and $0.1\in V$.

\subparagraph{The reduction.}
For each vertex $t\in V$, create six points 
\[ 
\begin{array}{rrrr}
(-1+t,&0,&2+t) &\qquad\qquad\mbox{(type 1)}\\
(1+t,&0,&-2+t) &\mbox{(type 2)}\\
(2+t,&-1+t,&0) &\mbox{(type 3)}\\
(-2+t,&1+t,&0) &\mbox{(type 4)}\\
(0,&2+t,&-1+t) &\mbox{(type 5)}\\
(0,&-2+t,&1+t) &\mbox{(type 6)}
\end{array}\]

Create the following boxes in $\mathbb{R}^3$:
\IGNORE{
\[ 
\begin{array}{rccccccr}
\forall x_1x_2'&\in &E, \text{ let}\hspace{4em}\ &\\
R_{x_1x_2'}^{(1)}&=&(-1+x_1,1+x_1)&\times& [-2+x_2',2+x_2']&\times& \mathbb{R} &\ \ \ \mbox{(type 1)}\\
\forall x_2x_3'&\in &E, \text{ let}\hspace{4em}\ &\\
R_{x_2x_3'}^{(2)}&=&[-2+x_3',2+x_3']&\times& \mathbb{R}&\times& (-1+x_2,1+x_2) &\mbox{(type 2)}\\
\forall x_3x_1'&\in &E, \text{ let}\hspace{4em}\ &\\
R_{x_3x_1'}^{(3)}&=&\mathbb{R}&\times& (-1+x_3,1+x_3)&\times& [-2+x_1',2+x_1'] &\mbox{(type 3)}
\end{array}\]
}
\[ 
\begin{array}{rccccccr}
\forall x_1x_2'\in E:\ R_{x_1x_2'}^{(1)}&=&(-1+x_1,1+x_1)&\times& [-2+x_2',2+x_2']&\times& \mathbb{R} &\ \ \ \\
\forall x_2x_3'\in E:\ R_{x_2x_3'}^{(2)}&=&[-2+x_3',2+x_3']&\times& \mathbb{R}&\times& (-1+x_2,1+x_2)&\\
\forall x_3x_1'\in E:\ R_{x_3x_1'}^{(3)}&=&\mathbb{R}&\times& (-1+x_3,1+x_3)&\times& [-2+x_1',2+x_1']&
\end{array}\]
(call them of type 1, 2, and 3, respectively).

\IGNORE{
\begin{align*}
R_{x_1x_2'}&=(-1+x_1,1+x_1)\times [-2+x_2',2+x_2']\times \mathbb{R},\\
R_{x_2x_3'}&=[-2+x_3',2+x_3']\times \mathbb{R}\times (-1+x_2,1+x_2),\\
R_{x_3x_1'}&=\mathbb{R}\times (-1+x_3,1+x_3)\times [-2+x_1',2+x_1'].
\end{align*}
}

\subparagraph{Correctness.}
We prove that a size-3 set cover exists iff a triangle exists in $G$.  

Any feasible solution (if exists)
must use a box of each type, since the point $(-1,0,2)$ of type $1$ 
(resp.\ the point $(2,-1,0)$ of type $3$, and the point $(0,2,-1)$ of type $5$) can only be covered by a box of type 3 (resp.\ 2 and 1). So, the three boxes in a feasible solution must be of the form $R_{x_1x_2'}^{(1)}$, $R_{x_2x_3'}^{(2)}$ and $R_{x_3x_1'}^{(3)}$ for some $x_1x_2',x_2x_3',x_3x_1'\in E$.

Consider points of type 1 with the form $(-1+t,0,2+t)$. The box $R_{x_2x_3'}^{(2)}$ cannot cover any of them due to the third dimension.  The box $R_{x_1x_2'}^{(1)}$  covers all such points corresponding to $t>x_1$, and the box $R_{x_3x_1'}^{(3)}$ covers all such points corresponding to $t\leq x_1'$. So, all points of type 1 are covered iff 
$x_1\leq x_1'$.
Similarly, all points of type 2 are covered iff $x_1'\le x_1$.  Thus, all points of type 1--2 are covered iff $x_1=x_1'$.
By a symmetric argument, all points of type 3--4 are covered iff $x_3=x_3'$; and all points of type 5--6 are covered iff $x_2=x_2'$.  We conclude that a feasible solution exists iff a triangle $x_1x_2x_3$ exists in $G$.
%
%
\end{proof}  

We remark that the boxes above can be made fat, with side lengths between 1 and a constant (by replacing $\R$ with an interval of a sufficiently large constant length).

\subsection{Unweighted size-3 set cover for unit hypercubes in $\R^4$}

Our preceding lower bound for unweighted size-3 set cover for boxes in $\R^3$ immediately implies a lower bound for orthants (and thus unit hypercubes) 
in $\R^6$, since
the point $(x,y,z)$ is covered by the box
$[a^-,a^+]\times [b^-,b^+]\times [c^-,c^+]$ in $\R^3$
iff the point $(x,x,y,y,z,z)$ is covered by
the orthant
$[a^-,\infty)\times (-\infty,a^+] \times
[b^-,\infty)\times (-\infty,b^+] \times
[c^-,\infty)\times (-\infty,c^+]$ in $\R^6$.

\IGNORE{
We can extend this idea to obtain a result for unweighted orthants. We first present a simple construction in $\mathbb{R}^6$, by applying lifting transformations on the previous reduction in Sec.~\ref{sec:unweighted_boxes_3D} for unweighted boxes in $\mathbb{R}^3$. In particular, for each vertex $t\in V$, create six points
\begin{align*}
(1-t,-1+t,0,0,-2-t,2+t),\\
(-1-t,1+t,0,0,2-t,-2+t),\\
(-2-t,2+t,1-t,-1+t,0,0),\\
(2-t,-2+t,-1-t,1+t,0,0),\\
(0,0,-2-t,2+t,1-t,-1+t),\\
(0,0,2-t,-2+t,-1-t,1+t).
\end{align*}

For each $x_1x_2',x_2x_3',x_3x_1'\in E$, create the following boxes in $\mathbb{R}^6$:
\IGNORE{
\[ \begin{array}{rcccccccccccc}
R_{x_1x_2'}^{(1)}&=&(-\infty,1-x_1)&\times& (-\infty,1+x_1)&\times& (-\infty,2-x_2']&\times& (-\infty,2+x_2']&\times& \mathbb{R}&\times& \mathbb{R},\\
R_{x_2x_3'}^{(2)}&=&(-\infty,2-x_3']&\times& (-\infty,2+x_3']&\times& \mathbb{R}&\times& \mathbb{R}&\times& (-\infty,1-x_2)&\times& (-\infty,1+x_2),\\
R_{x_3x_1'}^{(3)}&=&\mathbb{R}&\times& \mathbb{R}&\times& (-\infty,1-x_3)&\times& (-\infty,1+x_3)&\times& (-\infty,2-x_1']&\times& (-\infty,2+x_1'].
\end{array}
\]
}
\begin{align*}
R_{x_1x_2'}^{(1)}&=(-\infty,1-x_1)\times (-\infty,1+x_1)\times (-\infty,2-x_2']\times (-\infty,2+x_2']\times \mathbb{R}\times \mathbb{R},\\
R_{x_2x_3'}^{(2)}&=(-\infty,2-x_3']\times (-\infty,2+x_3']\times \mathbb{R}\times \mathbb{R}\times (-\infty,1-x_2)\times (-\infty,1+x_2),\\
R_{x_3x_1'}^{(3)}&=\mathbb{R}\times \mathbb{R}\times (-\infty,1-x_3)\times (-\infty,1+x_3)\times (-\infty,2-x_1']\times (-\infty,2+x_1'].
\end{align*}

There is a triangle $x_1x_2x_3$ in the graph iff there is a size-3 set cover solution.
\begin{theorem}
Given a set $P$ of $n$ points and a set $R$ of $n$ unweighted orthants in $\mathbb{R}^6$, deciding whether there exists a size-3 set cover requires $\Omega(n^{4/3-\delta})$ time for any constant $\delta>0$, assuming the Sparse-Triangle Hypothesis.
\end{theorem}

}

A question remains: can the dimension 6 be lowered?  Intuitively, there seems to be some wastage in the above construction: there are several 0's in the coordinates of the points, and several $\R$'s in the definition of the boxes, and these get doubled after the transformation to 6 dimensions.  However, it isn't clear how to rearrange coordinates to eliminate this wastage: we would have to give up this nice symmetry of our construction, and there are too many combinations to try.  We ended up writing a computer program to exhaustively try all these different combinations, and eventually find a construction that lowers the dimension to 4!  Once it is found, correctness is straightforward to check, as one can see in the proof below.


\begin{theorem}\label{thm:lb:orthant:4d}
Given a set $P$ of $n$ points and a set $R$ of $n$ unweighted orthants in $\mathbb{R}^4$, deciding whether there exists a size-3 set cover requires $\Omega(n^{4/3-\delta})$ time for any constant $\delta>0$, assuming the Sparse Triangle Hypothesis.
\end{theorem}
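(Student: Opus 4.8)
The plan is to mimic the reduction proving Theorem~\ref{thm:lb:box:3d} (unweighted size-3 set cover for boxes in $\R^3$), but to collapse the construction into four dimensions so that the resulting objects are \emph{orthants} rather than general boxes. As in that proof, I would reduce triangle detection in a sparse graph $G=(V,E)$ with $n\le m$ edges (and $V\subset[0,0.1]$, $0,0.1\in V$) to size-3 set cover for $O(m)$ points and orthants in $\R^4$, so that an $O(m^{4/3-\delta})$-time algorithm would refute the Sparse Triangle Hypothesis. The overall skeleton is the same: create a constant number of ``families'' of points, one per vertex within each family, each family parametrized linearly in the vertex label $t$; create one orthant per edge in three types $R^{(1)}_{x_1x_2'},R^{(2)}_{x_2x_3'},R^{(3)}_{x_3x_1'}$; add a few ``anchor'' points so that any feasible solution must pick exactly one orthant of each type; and design coordinates so that covering all points of each vertex-family forces the two endpoint labels shared between consecutive edges to be equal (using the trick that an orthant covers a linearly-parametrized family of points for exactly a half-line of parameter values, and pairing a ``$\le$'' family with a ``$\ge$'' family to pin down equality). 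A triangle in $G$ then corresponds exactly to a valid choice of three edges $x_1x_2,x_2x_3,x_3x_1$, i.e., to a size-3 cover.

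Concretely, I would just \emph{exhibit} the four-dimensional coordinates found by the computer search: for each vertex $t\in V$ a handful of points (with entries that are affine functions $\pm1\pm t$, $\pm2\pm t$, constants, etc.\ in the four coordinates), and for each edge in $E$ three orthants, each orthant being a product of four one-sided intervals whose finite endpoints are again affine in the relevant variable $x_i$ or $x_i'$. The verification then has three parts, exactly paralleling the $\R^3$ proof: (i) identify the $O(1)$ anchor points that can only be covered by an orthant of a single type, forcing the solution to consist of one type-1, one type-2, and one type-3 orthant; (ii) for each vertex-family of points, check which orthant types can reach those points at all (one type is ``blocked'' by some coordinate), and among the two relevant types determine that one covers the family for $t$ above a threshold and the other for $t$ below a threshold, so that full coverage forces the threshold equality $x_i=x_i'$; (iii) conclude that a feasible cover exists iff the three chosen edges close up into a triangle $x_1x_2x_3$. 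Because every coordinate is affine and every object is an orthant, each of these checks is a routine inequality comparison, so once the numbers are written down the proof is short and hand-checkable.

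The main obstacle — and the reason a computer search was used — is purely in \emph{finding} the four-dimensional coordinate assignment: one must simultaneously fit three linearly-parametrized ``equality gadgets'' (one for each of $x_1=x_1'$, $x_2=x_2'$, $x_3=x_3'$) plus the type-forcing anchors into only four coordinates, without any two gadgets interfering (the ``blocking'' coordinate for one type must not accidentally also block a type it should reach, and the half-line-of-coverage behavior must come out with the correct orientation in each gadget). Unlike the $\R^3$/$\R^6$ constructions, there is no symmetric, human-transparent layout, so the contribution of this theorem is the explicit construction itself rather than any new proof technique. Once the coordinates are in hand, I expect no conceptual difficulty: the correctness argument is a direct transcription of the Theorem~\ref{thm:lb:box:3d} argument, and the conditional time bound follows verbatim since the reduction still produces $O(m)$ points and objects.

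Finally, I would add the same remark as before: by replacing the unbounded sides with intervals of a sufficiently large constant length, the orthants can be turned into fat axis-aligned boxes (hence, after uniform rescaling, the lower bound also applies to unit hypercubes in $\R^4$), which is precisely what is needed for the corresponding rectilinear discrete 3-center lower bound stated in the introduction.
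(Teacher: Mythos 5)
You correctly identify the paper's strategy and mirror its structure faithfully: reduce from triangle detection on a graph with $m$ edges to size-3 set cover for $O(m)$ points and orthants; use one family of points per vertex, with coordinates affine in the vertex label $t$; use one orthant per edge, in three types $R^{(1)}_{x_1x_2'}$, $R^{(2)}_{x_2x_3'}$, $R^{(3)}_{x_3x_1'}$; force any feasible cover to use exactly one orthant of each type; and use paired ``$\le$''/``$\ge$'' point families, each blocked against one orthant type, to enforce the equalities $x_i=x_i'$. This is exactly the paper's approach, including the reliance on computer-assisted search for the coordinate assignment and the closing remark about replacing orthants by large fat boxes to get unit hypercubes.

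The gap is that you never actually exhibit the construction. You say you would \emph{exhibit the four-dimensional coordinates found by the computer search}, but you do not produce them, and you yourself observe that this explicit assignment is the entire content of the theorem. Without it, the argument does not establish that the three equality gadgets and the type-forcing constraints can all be packed into four coordinates without interference; that is precisely what has to be demonstrated, and the obvious lift of the $\mathbb{R}^3$ box construction only reaches $\mathbb{R}^6$ (pairing each bounded interval with two orthant half-lines). So the claim that the dimension is $4$, not $5$ or $6$, is left unsupported. For the record, the paper's six points per vertex $t$ are $(t,\,2+t,\,-0.5,\,-0.5)$, $(2-t,\,-t,\,-0.5,\,-0.5)$, $(1-t,\,0.5,\,1+t,\,1.5)$, $(0.5,\,1+t,\,0.5,\,2-t)$, $(-0.5,\,-0.5,\,2-t,\,-t)$, $(-0.5,\,-0.5,\,t,\,1+t)$, and the orthants are, e.g., $R^{(1)}_{x_1x_2'}=[x_1,\infty)\times[-x_1,\infty)\times(-\infty,1+x_2')\times(-\infty,2-x_2')$ for each edge $x_1x_2'\in E$, with the other two types defined analogously; the verification then proceeds by the routine inequality checks you sketch. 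Your plan is right in every respect except that it omits the one nonroutine step.
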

\begin{proof}
We will reduce the triangle detection problem on a graph with $m$ edges to the unweighted size-3 set cover problem for $O(m)$ points and orthants in $\R^4$.

Let $G=(V,E)$ be the given unweighted graph with $n$ vertices and $m$ edges
($n\le m$). Without loss of generality, assume that $V\subset [0,0.1]$,
and $0\in V$ and $0.1\in V$.

\subparagraph{The reduction.}
For each vertex $t\in V$, create six points
\[ 
\begin{array}{rrrrr}
(0+t,&2+t,&-0.5,&-0.5) &\qquad\qquad\mbox{(type 1)}\\
(2-t,&0-t,&-0.5,&-0.5) &\mbox{(type 2)}\\
(1-t,&0.5,&1+t,&1.5) &\mbox{(type 3)}\\
(0.5,&1+t,&0.5,&2-t) &\mbox{(type 4)}\\
(-0.5,&-0.5,&2-t,&0-t) &\mbox{(type 5)}\\
(-0.5,&-0.5,&0+t,&1+t) &\mbox{(type 6)}
\end{array}\]
\IGNORE{
\begin{align*}
(0+t,2+t,-0.5,-0.5),\\
(2-t,0-t,-0.5,-0.5),\\
(1-t,0.5,1+t,1.5),\\
(0.5,1+t,0.5,2-t),\\
(-0.5,-0.5,2-t,0-t),\\
(-0.5,-0.5,0+t,1+t).
\end{align*}
}

Create the following orthants in $\mathbb{R}^4$:
\IGNORE{
\[ 
\begin{array}{rcccccccc}
\forall x_1x_2'&\in &E, \text{ let}\hspace{3em}\ &\\
R_{x_1x_2'}^{(1)}&=&[0+x_1,+\infty)&\times& [0-x_1,+\infty)&\times& (-\infty,1+x_2')&\times& (-\infty,2-x_2')\\
\forall x_2x_3'&\in &E, \text{ let}\hspace{3em}\ &\\
R_{x_2x_3'}^{(2)}&=&(-\infty,1-x_2]&\times& (-\infty,1+x_2]&\times& (0+x_3',+\infty)&\times& (0-x_3 ',+\infty)\\
\forall x_3x_1'&\in &E, \text{ let}\hspace{3em}\ &\\
R_{x_3x_1'}^{(3)}&=&(-\infty,2-x_1')&\times& (-\infty,2+x_1')&\times& (-\infty,2-x_3]&\times& (-\infty,1+x_3]
\end{array}
\]
}
\[ 
\begin{array}{rcccccccc}
\forall x_1x_2' \in E:\ 
R_{x_1x_2'}^{(1)}\!\!\!\!&=&\!\![0+x_1,+\infty)\!\!\!\!&\times&\!\! [0-x_1,+\infty)\!\!\!\!&\times&\!\! (-\infty,1+x_2')\!\!\!\!&\times&\!\! (-\infty,2-x_2')\\
\forall x_2x_3' \in E:\ 
R_{x_2x_3'}^{(2)}\!\!\!\!&=&\!\! (-\infty,1-x_2]\!\!\!\!&\times&\!\! (-\infty,1+x_2]\!\!\!\!&\times&\!\! (0+x_3',+\infty)\!\!\!\!&\times&\!\! (0-x_3 ',+\infty)\\
\forall x_3x_1'\in E:\
R_{x_3x_1'}^{(3)}\!\!\!\!&=&\!\! (-\infty,2-x_1')\!\!\!\!&\times&\!\! (-\infty,2+x_1')\!\!\!\!&\times&\!\! (-\infty,2-x_3]\!\!\!\!&\times&\!\! (-\infty,1+x_3]
\end{array}
\]
(call them of type 1, 2, and 3, respectively).

\subparagraph{Correctness.}
We prove that a size-3 set cover exists iff a triangle exists in $G$.  

Any feasible solution (if exists)
must use an orthant of each type, as one can easily check (like before).
 So, the three orthants in a feasible solution must be of the form $R_{x_1x_2'}^{(1)}$, $R_{x_2x_3'}^{(2)}$ and $R_{x_3x_1'}^{(3)}$ for some $x_1x_2',x_2x_3',x_3x_1'\in E$.
 
Consider points of type 1 with the form $(0+t,2+t,-0.5,-0.5)$.
The orthant $R_{x_2x_3'}^{(2)}$  cannot cover any of them due to the third dimension.
The orthant $R_{x_1x_2'}^{(1)}$ covers all such points corresponding to $t\ge x_1$, and
the orthant $R_{x_3x_1'}^{(3)}$ covers all such points corresponding to $t< x_1'$.
So, all points of type 1 are covered iff $x_1\le x_1'$.
By similar arguments, it can be checked that
all points of type 2 are covered iff $x_1\ge x_1'$;
all points of type 3 are covered iff $x_2\le x_2'$;
all points of type 4 are covered iff $x_2\ge x_2'$;
all points of type 5 are covered iff $x_3\le x_3'$;
all points of type 6 are covered iff $x_3\ge x_3'$.
We conclude that a feasible solution exists iff a triangle $x_1x_2x_3$ exists in $G$.
\end{proof}

\IGNORE{
\begin{align*}
R_{x_1x_2'}&=[0+x_1,+\infty)\times [0-x_1,+\infty)\times (-\infty,1+x_2']\times (-\infty,2-x_2'],\\
R_{x_2x_3'}&=(-\infty,1-x_2]\times (-\infty,1+x_2]\times [0+x_3',+\infty)\times [0-x_3 ',+\infty),\\
R_{x_2x_3'}&=(-\infty,2-x_3']\times (-\infty,2+x_3']\times (-\infty,2-x_2]\times (-\infty,1+x_2].
\end{align*}
}





\IGNORE{
As mentioned, a lower bound for orthants in $\R^4$ automatically implies a lower bound for unit hypercubes in $\R^4$.  In our construction, we can replace each orthant with a hypercube of side length 10 (i.e., $L_\infty$-radius 5), keeping the corner vertex the same, and then rescale.
}

In the computer search, we basically tried different choices of points with coordinate values of the form $c\pm t$ or $c$ for some constant $c$, and orthants defined by intervals of the form $[c\pm x_j,+\infty)$ or $(-\infty, c\pm x_j]$ (closed or open) for some variable $x_j$ (or $x_j'$).  The constraints are not exactly easy to write down, but are self-evident as we simulate the correctness proof above.  Naively, the number of cases is in the order of $10^{14}$, but can be drastically
reduced to about $10^7$ with some optimization and careful pruning of the search space.
The C++ code is not long (under 150 lines) and, after incorporating pruning, runs in under a second.


\IGNORE{
\qizheng{(finished) In the previous comment, "searching the whole space" means searching in the optimized space that exploits symmetry and uses pruning. Without optimization it should take days. Let me count the total number of cases: without optimization, in dimension 4, there are \[\underset{\text{\# orderings for the constants}}{(3!)^4}\cdot \underset{\text{\# possible directions for the intervals}}{(2^3)^4}\cdot \underset{\text{\# possible directions for }+/-x_i}{(2^3)^4}\]
\[\cdot \underset{\text{\# perfect matchings on 12 vertices, to pair the two occurrences for each }x_i}{10395}=226021311774720\] configurations (which is more than days; before when I estimate "days" I probably already used some symmetry).\\
After optimization, there are only \[\underset{\text{\# ordering and directions for the constants and variables}}{12^4}\cdot \underset{\text{\# matchings for pairing the two occurrences for each }x_i}{2784}=57729024\] cases.}
}

It is now straightforward to modify the above lower bound proof for unweighted orthants (or unit hypercubes) in $\R^4$ to the 
rectilinear discrete 3-center problem in $\R^4$; see Sec.~\ref{app:d3c}.
In Sec.~\ref{sec:k_clique}, we also prove a higher conditional lower bound for 
weighted size-6 set cover for rectangles in $\mathbb{R}^2$.

\section{Other Conditional Lower Bounds for Small-Size Set Cover for Boxes and Related Problems}

Continuing Section~\ref{sec:lb:rectilinear}, we prove a few more conditional lower bounds for related problems.

\subsection{Rectilinear discrete 3-center in $\R^4$}\label{app:d3c}

It is easy to modify our conditional lower bound proof for size-3 set cover for orthants in $\R^4$ 
(Theorem~\ref{thm:lb:orthant:4d})
to obtain a lower bound for the rectilinear discrete 3-center problem in $\R^4$.

First recall that a lower bound for orthants in $\R^4$ automatically implies a lower bound for unit hypercubes in $\R^4$.  In our construction, we can replace each orthant with a hypercube of side length 10 (i.e., $L_\infty$-radius 5), keeping the corner vertex the same, and then rescale.

For rectilinear discrete 3-center in $\R^4$, the new point set consists of the constructed points and the centers of the constructed hypercubes 
of side length 10 (before rescaling), together with
three auxiliary points: $(9.5,9.5,-8.5,-7.5),(-8.5,-8.5,9.5,9.5),
(-7.5,-7.5,-7.5,-8.5)$.  The purpose of the auxiliary points is to force the 3 centers in the solution to be from the centers of the constructed hypercubes.


\begin{corollary}\label{cor:lb:d3c}
For any constant $\delta > 0$, there is no $O(n^{4/3-\delta})$-time algorithm for rectilinear discrete 3-center in $\mathbb{R}^{4}$, assuming the Sparse Triangle Hypothesis.
\end{corollary}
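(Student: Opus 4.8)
The plan is to reduce from the unweighted size-3 set cover problem for orthants (equivalently unit hypercubes) in $\R^4$, which by Theorem~\ref{thm:lb:orthant:4d} requires $\Omega(n^{4/3-\delta})$ time under the Sparse Triangle Hypothesis. First I would take the construction from that theorem, replace each orthant by a hypercube of side length $10$ (equivalently $L_\infty$-ball of radius $5$) sharing the same corner vertex, and rescale so the hypercubes become unit; the claim (asserted earlier in the excerpt) is that this preserves which point sets are covered, because all the constructed points lie within a bounded region, so truncating the unbounded sides of the orthants far enough away changes nothing. This gives an instance of size-3 set cover for unit hypercubes in $\R^4$.

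Next I would translate set cover by unit hypercubes into discrete $3$-center. The point set of the discrete $3$-center instance consists of (a) the constructed points $P$ that must be covered, and (b) the set $C$ of centers of the constructed hypercubes (these are the only ``allowed'' centers we want the algorithm to pick). Since a unit hypercube centered at $c$ covers exactly the points within $L_\infty$-distance $1/2$ of $c$, a size-3 set cover of $P$ by the constructed hypercubes is exactly a choice of $3$ points of $C$ whose radius-$1/2$ $L_\infty$-balls cover $P$. So the decision version of rectilinear discrete $3$-center with radius parameter $r=1/2$ (after the rescaling) on the point set $P\cup C$ would solve the set cover instance — \emph{provided} we can force the three chosen centers to come from $C$ rather than from $P$.

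The key remaining step is the gadget that forces the centers into $C$. I would add a small number of auxiliary ``far'' points — the three points $(9.5,9.5,-8.5,-7.5)$, $(-8.5,-8.5,9.5,9.5)$, $(-7.5,-7.5,-7.5,-8.5)$ listed above (in the pre-rescaling coordinates) — positioned so that each is within $L_\infty$-distance $5$ of exactly one ``class'' of the constructed hypercube centers (type 1, 2, or 3 respectively) and of no constructed point $P$, and so that these three auxiliary points are pairwise at $L_\infty$-distance more than $10$ from each other. Then in any feasible size-3 solution, each of the three auxiliary points must be covered by a distinct chosen center, and that center is forced to be a constructed hypercube center of the appropriate type; consequently all three chosen centers lie in $C$ and have the right type structure, and covering $P$ then forces a triangle exactly as in Theorem~\ref{thm:lb:orthant:4d}. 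Conversely a triangle yields three centers (one per type) that, together, cover $P$ and all three auxiliary points. The hardest part is simply verifying the numerology of the auxiliary points: checking that each one is $L_\infty$-close to precisely the intended type of center, far from every $P$-point, and far from the other auxiliary points and from centers of other types — a finite, routine but slightly tedious case check over the coordinate formulas, analogous to the correctness verification already carried out in the proof of Theorem~\ref{thm:lb:orthant:4d}. The reduction clearly runs in linear time and produces $O(m)$ points, so an $O(n^{4/3-\delta})$ algorithm for rectilinear discrete $3$-center in $\R^4$ would refute the Sparse Triangle Hypothesis, establishing Corollary~\ref{cor:lb:d3c}.
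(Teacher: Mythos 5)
Your proposal matches the paper's own argument: replace each orthant from Theorem~\ref{thm:lb:orthant:4d} by a side-$10$ hypercube sharing the corner vertex (then rescale), take as the point set the constructed points together with the hypercube centers and the three listed auxiliary points, and observe that the auxiliary points force the three chosen centers to be hypercube centers of the three distinct types, so a feasible discrete $3$-center solution of radius $5$ (pre-rescaling) corresponds exactly to a size-$3$ set cover and hence to a triangle. The paper's Section~\ref{app:d3c} presents exactly this construction at the same level of detail, with the routine coordinate verification left implicit just as in your proposal.
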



\subsection{Weighted size-6 set cover for rectangles in $\mathbb{R}^2$}\label{sec:k_clique}

In Section~\ref{sec:lb:rectilinear}, we have obtained a superlinear
conditional lower bound for weighted size-3 set cover for rectangles in $\R^2$.  Another direction is to investigate the smallest $k$ for which we can obtain a near-quadratic lower bound for weighted size-$k$ set cover.
We prove such a result for $k=6$.

For a fixed $\ell\ge 3$, the \emph{Weighted $\ell$-Clique Hypothesis}~\cite{virgisurvey} asserts that 
there is no $O(n^{\ell-\delta})$ time algorithm for finding an $\ell$-clique of minimum total edge weight in an $n$-vertex graph with $O(\log n)$-bit integer weights.

\begin{theorem}
Given a set $P$ of $n$ points and a set $R$ of $n$ weighted rectangles in $\mathbb{R}^2$, any algorithm for computing the minimum weight size-6 set cover requires $\Omega(n^{2-\delta})$ time for any constant $\delta>0$, assuming the Weighted 4-Clique Hypothesis.
\end{theorem}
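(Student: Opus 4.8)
The plan is to reduce the problem of finding a minimum-weight 4-clique in an $n$-vertex weighted graph $G=(V,E)$ to weighted size-6 set cover for rectangles in $\R^2$ on $O(n^2)$ points and rectangles, in analogy with the weighted size-3 set cover reduction from minimum-weight triangle. A 4-clique on vertices $x_1,x_2,x_3,x_4$ consists of six edges, $x_1x_2,x_1x_3,x_1x_4,x_2x_3,x_2x_4,x_3x_4$, so the natural idea is to use \emph{six} rectangle types, one per edge of the $K_4$, with the $j$-th rectangle in the solution encoding a pair of vertices $(x_a, x_b')$ for the corresponding edge slot. The weight of each rectangle will be set to (number of points it covers) plus (weight of the edge it represents), exactly as in the size-3 case, so that an optimal size-6 set cover that covers $P$ with no double-covering has weight $6N + w$, where $N=|P|$ and $w$ is the weight of a 4-clique. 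The key design problem is to lay out $O(1)$ geometric ``gadgets'' in the plane, each forcing an equality $x_a = x_a'$ between the two copies of a vertex appearing in two different edge slots — there are $\binom{4}{2}=6$ edge slots and each of the 4 vertices appears in 3 of them, so we need a mechanism that chains equalities together (e.g. a cycle $x_1^{(12)} = x_1^{(13)} = x_1^{(14)} = x_1^{(12)}$ for each vertex, or a spanning-tree worth of pairwise equalities per vertex). First I would fix, for each vertex $x_a$, a set of equality constraints among its occurrences that forces all occurrences equal, using the same trick as before: a diagonal family of points $(c + t, c' + t, \ldots)$ such that rectangle $R^{(i)}$ covers those with coordinate $< x_a$ on one side and rectangle $R^{(i')}$ covers those with coordinate $\ge x_a'$, so that full coverage forces $x_a \le x_a'$, and a mirrored gadget forces $x_a \ge x_a'$; doubling forces a weight penalty of at least $1$.

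The main obstacle — and the part that needs the real work — is that we only have two dimensions, whereas the size-3 reduction already used all of $\R^2$ rather tightly, and the size-$3$-in-$\R^3$ reduction needed an extra dimension precisely to give each equality gadget its own ``private'' direction. With six edge slots and twelve vertex-occurrences to pair up, the number of equality gadgets roughly doubles (or worse), and fitting all of them into the plane without interference is the crux. I would handle this by (a) using unbounded rectangles (orthants, or few-sided rectangles) so that each gadget can be ``pushed to infinity'' in the directions it doesn't care about, and (b) placing the diagonal point-families for different gadgets in far-apart, disjoint regions of the plane — e.g. translating the $i$-th gadget by a large multiple of $(1,0)$ or along a slope chosen so that no rectangle meant for gadget $i$ can accidentally cover a point of gadget $j$. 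This is exactly the kind of bookkeeping that, in the $\R^3$/$\R^4$ unweighted proofs, was delegated to a computer search; here I expect to either do the layout by hand with a carefully chosen separation of scales, or to invoke a similar exhaustive search over coordinate-offset patterns. A secondary technical point is verifying the ``no double cover'' argument: I must ensure that any feasible solution uses exactly one rectangle of each of the six types (argued via six anchor points, one per type, each coverable only by its own type), and that if any equality $x_a = x_a'$ fails then either a point is uncovered (infeasible) or a point is double-covered (weight $\ge 6N+1 > 6N + 0.1$, hence suboptimal once edge weights are rescaled into $[0,0.1]$ say and clique weights lie in $[0, 0.6]$).

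Concretely the steps are: (1) rescale so $V \subset [0,0.1]$, edge weights in $[0, 0.1]$, and $0,0.1 \in V$; (2) for each edge $uv\in E$ and each of the six slot-types $i$, create rectangle $R^{(i)}_{uv}$ whose finite sides encode $u$ and $v$ in the coordinates relevant to slot $i$'s two incident vertices, and set its weight to (points covered) $+ w(uv)$; (3) create the six anchor points forcing one rectangle per type; (4) create, for each of the (at most $O(1)$) vertex-equality constraints, a pair of diagonal point-families enforcing $\le$ and $\ge$, placed in mutually non-interfering regions; (5) prove correctness: a size-6 cover of total weight $6N + w$ exists iff $G$ has a $4$-clique of weight $w$, so an $O(n^{2-\delta})$ algorithm for size-6 set cover yields an $O(n^{4-2\delta})$ algorithm for minimum-weight $4$-clique, contradicting the Weighted $4$-Clique Hypothesis; (6) note that the total number of points and rectangles is $O(|E| + n) = O(n^2)$, so $N = O(n^2)$ and $N^{1-\delta/2} = O(n^{2-\delta})$ matches up. I expect step (4) — the simultaneous planar layout of all equality gadgets — to be where essentially all the difficulty lies; everything else is a direct adaptation of the weighted size-3 reduction.
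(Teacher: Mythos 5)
Your high-level plan matches the paper's exactly: reduce from minimum-weight $4$-clique, use six rectangle types (one per edge of $K_4$), set each rectangle's weight to (points it covers) $+$ (edge weight) so that an exact cover of total weight $6N+w$ corresponds to a $4$-clique of weight $w$, force one rectangle of each type via anchor points, and enforce the equalities $x_a=x_a'=\cdots$ via one-sided $\le$/$\ge$ gadgets that penalize double-coverage. All of that is right and is precisely what the paper does.

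The genuine gap is exactly where you flagged it yourself: step~(4), the concrete planar layout. Your proposed remedies for fitting all the equality gadgets into $\R^2$ — push rectangles to infinity, or translate each gadget to a far-apart region — do not obviously work, because each rectangle $R^{(i)}_{uv}$ carries \emph{two} vertex variables and must simultaneously participate in equality gadgets for both of them, and in the paper's scheme several equalities are chained (e.g.\ type-$1$ points give $x_1=x_1'$ between $R^{(1)}$ and $R^{(4)}$, while type-$5$/type-$6$ points give $x_3'=x_3''$ between $R^{(2)}$ and $R^{(5)}$, chaining with type-$3$ to get $x_3=x_3'=x_3''$). Spreading gadgets to disjoint far-apart regions forces a single rectangle to stretch across several regions, and then it can cover unintended points in between; the ``private direction'' trick that works in $\R^3$ or $\R^4$ has no analogue in the plane. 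What the paper actually does is the opposite of spreading out: it packs everything into a $2\times 2$ square, placing the diagonal point-families of types $1$--$4$ on the four sides of the square and types $5$--$8$ on the two median segments $\{1\}\times[0,2]$ and $[0,2]\times\{1\}$; the four ``edge-slot'' rectangles $R^{(1)},\dots,R^{(4)}$ are corner rectangles hugging the four corners of the square, and the two remaining slots $R^{(5)},R^{(6)}$ are thin strips $[1,1.1]\times[\,\cdot\,]$ and $[\,\cdot\,]\times[1,1.1]$ running down the medians. Each side or median point-family is reachable by exactly the two rectangles it is supposed to compare, which is what makes the counting argument go through. So you have the correct strategy and the correct accounting, but you are missing the one essential construction, and the workaround you sketch for it would need to be replaced by something like the paper's compact square layout.
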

\begin{proof}
We will reduce the minimum-weight 4-clique problem on a graph with $n$ vertices and $m$ edges ($m\in [n,n^2]$) to the weighted size-6 set cover problem for $O(m)$ points and rectangles in $\R^2$.
Thus, if there is an $O(m^{2-\delta})$-time algorithm for the latter problem, there would be an algorithm for the former problem with running time $O(m^{2-\delta})\le O(n^{4-2\delta})$, refuting the hypothesis.

Let $G=(V,E)$ be the given weighted graph with $n$ vertices and $m$ edges.
Without loss of generality, assume that all edge weights are in $[0,0.1]$, and that
$V\subset [0,0.1]$.
Assume that $0\in V$ and $0.1\in V$.

\subparagraph{The reduction.} For each vertex $t\in V$, create 8 points
\[\begin{array}{rrrr}
(0,&2-t) &\qquad\qquad\mbox{(type 1)}\\
(t,&0) &\mbox{(type 2)}\\
(2,&t) &\mbox{(type 3)}\\
(2-t,&2) &\mbox{(type 4)}\\
(1,&t) &\mbox{(type 5)}\\
(1,&2-t) &\mbox{(type 6)}\\
(t,&1) &\mbox{(type 7)}\\
(2-t,&1) &\mbox{(type 8)}
\end{array}\]

\IGNORE{
\[ \begin{array}{rrrr}
(t,&0),\\
(2,&t),\\
(2-t,&2),\\
(0,&2-t),\\
(1,&t),\\
(1,&2-t),\\
(t,&1),\\
(2-t,&1).
\end{array}
\]
\begin{align*}
(0,t),\\
(2+t,0),\\
(2.1,1+t),\\
(t,1.1),\\
(1,t),\\
(1,1+t),\\
(0,2+t),\\
(3+t,0).
\end{align*}
}

Create the following rectangles in $\mathbb{R}^2$:
\[\begin{array}{rcccccr}
&\forall x_1x_2'\in E:\ R_{x_1x_2'}^{(1)}&=&[0,x_2')&\times& [0,2-x_1] &\ \ \ \mbox{(type 1)}\\
&\forall x_2x_3'\in E:\ R_{x_2x_3'}^{(2)}&=&[x_2,2]&\times& [0,x_3') &\mbox{(type 2)}\\
&\forall x_3x_4'\in E:\ R_{x_3x_4'}^{(3)}&=&(2-x_4',2]&\times& [x_3,2] &\mbox{(type 3)}\\
&\forall x_4x_1'\in E:\ R_{x_4x_1'}^{(4)}&=&[0,2-x_4]&\times& (2-x_1',2] &\mbox{(type 4)}\\
&\forall x_1''x_3''\in E:\ R_{x_1''x_3''}^{(5)}&=&[1,1.1]&\times& [x_3'',2-x_1''] &\mbox{(type 5)}\\
&\forall x_2''x_4''\in E:\ R_{x_2''x_4''}^{(6)}&=&[x_2'',2-x_4'']&\times& [1,1.1] &\mbox{(type 6)}
\end{array}\]
\IGNORE{
\[ \begin{array}{rcccc}
R_{x_1x_2'}^{(1)}&=&[0,x_2')&\times& [0,2-x_1],\\
R_{x_2x_3'}^{(2)}&=&[x_2,2]&\times& [0,x_3'),\\
R_{x_3x_4'}^{(3)}&=&(2-x_4',2]&\times& [x_3,2],\\
R_{x_4x_1'}^{(4)}&=&[0,2-x_4]&\times& (2-x_1',2],\\
R_{x_1''x_3''}^{(5)}&=&[1,1.1]&\times& [x_3'',2-x_1''],\\
R_{x_2''x_4''}^{(6)}&=&[x_2'',2-x_4'']&\times& [1,1.1].
\end{array}
\]
\begin{align*}
R_{x_1x_2'}&=[0,x_1]\times [x_2',2+x_1],\\
R_{x_2x_3'}&=[0,2+x_3']\times [0,x_2],\\
R_{x_3x_4'}&=[2+x_3,3+x_3]\times [0,1+x_4'],\\
R_{x_4x_1'}&=[x_1',3]\times [1+x_4,1.5],\\
R_{x_2''x_4''}&=[1,1.5]\times [x_2'',1+x_4''],\\
R_{x_1''x_3''}&=[0,3+x_3'']\times [2+x_1'',3],\\
R_{x_3'''}&=[3+x_3''',4]\times [0,3].
\end{align*}
}
The weight of each rectangle is set to be the number of points it covers plus the weight of the edge it represents. The total number of points and rectangles is $O(n)$ and $O(m)$ respectively. The reduction is illustrated in Figure~\ref{fig:6_set_cover_weighted_2D}. 

\begin{figure}\centering
    \includegraphics[width=.35\textwidth]{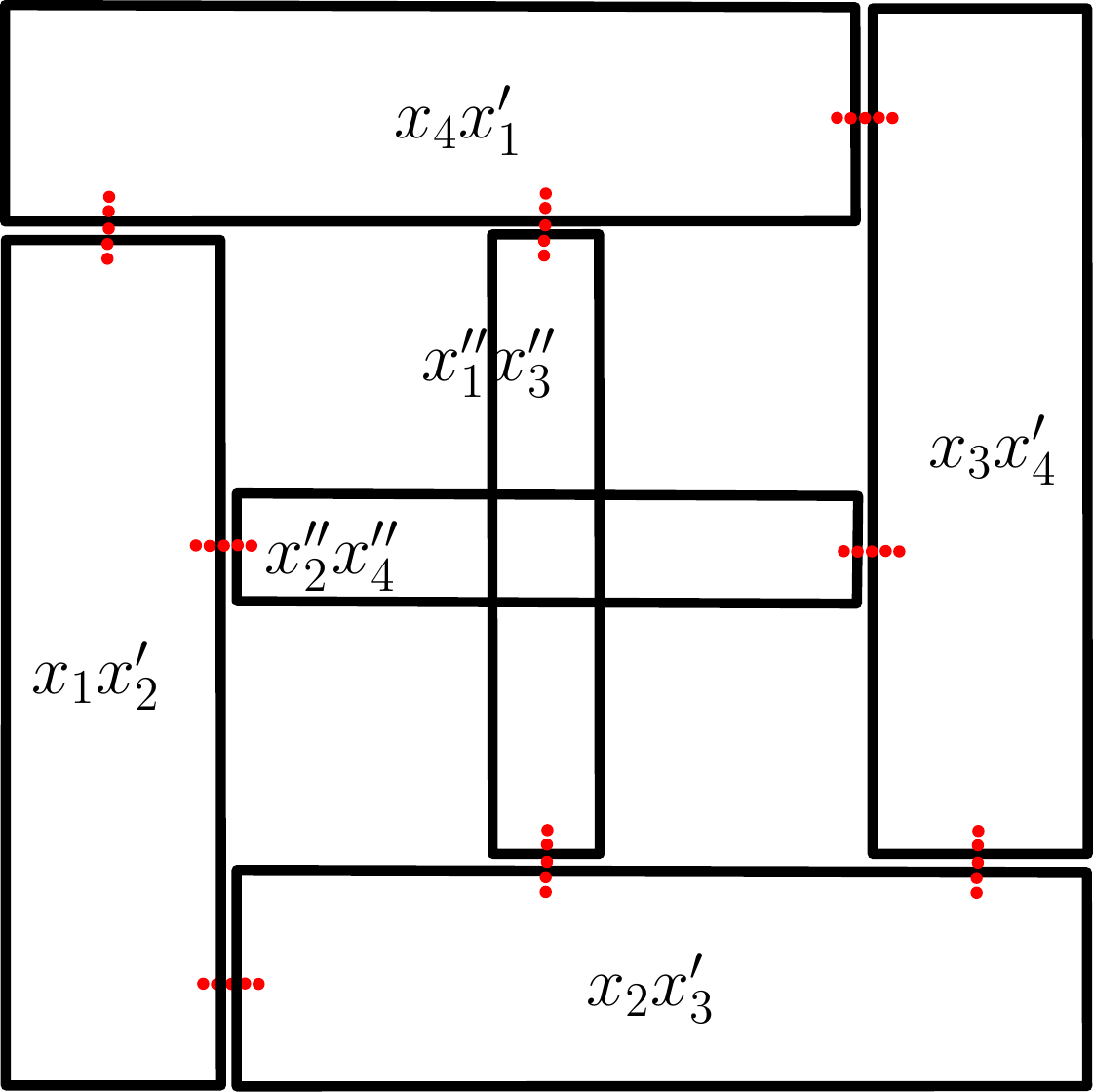}\\
    \caption{Reduction from the minimum-weight 4-clique problem to weighted size-6 set cover for rectangles in $\mathbb{R}^2$.}
    \label{fig:6_set_cover_weighted_2D}
\end{figure}

\subparagraph{Correctness.}
We prove that the minimum-weight 4-clique in $G$ has weight $w$ (where $w\in [0,0.6]$) 
iff the optimal weighted size-6 set cover has weight $8n+w$.

Any feasible solution (if exists) must use a rectangle of each type, since the point $(0,1.9)$ of type 1 (resp.\ the point $(0.1,0)$ of type 2, the point $(2,0.1)$ of type 3, the point $(1.9,2)$ of type 4, the point $(1,0.1)$ of type 5, and the point $(0.1,1)$ of type 7) can only be covered by a rectangle of type 1 (resp.\ 2, 3, 4, 5 and 6). So the six rectangles in a feasible solution must be of the form $R^{(1)}_{x_1x_2'}$, $R^{(2)}_{x_2x_3'}$, $R^{(3)}_{x_3x_4'}$, $R^{(4)}_{x_4x_1'}$, $R^{(5)}_{x_1''x_3''}$ and $R^{(6)}_{x_2''x_4''}$ for some $x_1x_2'$, $x_2x_3'$, $x_3x_4'$, $x_4x_1'$, $x_1''x_3''$, $x_2''x_4''\in E$.

If $x_1>x_1'$, some point (of type 1) would be uncovered; on the other hand, if $x_1<x_1'$, some point (of type 1) would be covered twice, and the total weight would then be at least $8n+1$. Thus, $x_1=x_1'$. Similarly, $x_1'=x_1''$, $x_2=x_2'=x_2''$, $x_3=x_3'=x_3''$, and $x_4=x_4'=x_4''$. So, $x_1x_2x_3x_4$ forms a 4-clique in $G$. We conclude that the minimum-weight solution $R^{(1)}_{x_1x_2}$, $R^{(2)}_{x_2x_3}$, $R^{(3)}_{x_3x_4}$, $R^{(4)}_{x_4x_1}$, $R^{(5)}_{x_1x_3}$ and $R^{(6)}_{x_2x_4}$ correspond to the minimum-weight 4-clique $x_1x_2x_3x_4$ in $G$.
\end{proof}

\section{Conditional Lower Bound for Euclidean Discrete 2-Center}\label{sec:d2c_lb}

In this section, we prove our conditional lower bound for the Euclidean discrete 2-center problem in a sufficiently large constant dimension.
The general structure of our proof is inspired by
Bringmann et al.'s recent conditional hardness proof~\cite{bringmann2022towards} for the problem of computing diameter of box intersection graphs in $\R^{12}$, specifically, testing whether the diameter is more than~2. 
(Despite the apparent dissimilarities of the two problems, what led us to initially suspect that the ideas there might be useful is that both problems are concerned with paths of length~2 in certain geometrically defined graphs, and both problems have a similar ``quantifier structure'', after unpacking the problem definitions.)  Extra ideas are needed, as we are dealing with the Euclidean metric instead of boxes; we end up needing an extra dimension, with carefully tuned coordinate values, to make the proof work.



\IGNORE{
Our contribution is in realizing that their proof approach can actually be applied to Euclidean discrete 2-center, even though
the two problems appear different.  (The fact
that both problems are about paths of length~2 in certain geometrically 
defined graphs is what led us to initially suspect there could be a connection; still, one problem is about $L_\infty$, and the other is
about Euclidean.)  The generalization of the proof to Euclidean discrete $k$-center for larger $k$ follows the same approach, but with further technical ideas. 
}

\begin{theorem}\label{thm:lb:d2c}
	For any constant $\delta > 0$, there is no $O(n^{2-\delta})$-time algorithm for Euclidean discrete 2-center in $\mathbb{R}^{13}$, assuming the Hyperclique Hypothesis.
\end{theorem}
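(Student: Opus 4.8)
The plan is to reduce from the Hyperclique Hypothesis, following the general framework of Bringmann et al.~\cite{bringmann2022towards} for diameter-2 of box intersection graphs, but adapted to the Euclidean metric. Concretely, I would reduce from the problem of detecting a $6$-hyperclique in a $3$-uniform hypergraph $H$ on $N$ vertices (so that the target running time $N^{6-o(1)}$, after a balanced split of the $6$ clique-vertices into two groups of $3$, gives an $n^{2-o(1)}$ lower bound with $n = N^3$). Split the six vertices of a putative hyperclique into a ``left'' triple and a ``right'' triple; each side ranges over the $N^3 = n$ triples of vertices. Create one ``supply/demand'' point $p_L$ for each left triple $L$ and one point $p_R$ for each right triple $R$. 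The key is to engineer coordinates so that: (i) the Euclidean distance between two left points, or between two right points, is always $\le r$ (so one center on each side can cover its whole side), forcing any feasible discrete $2$-center solution to pick one center from the left points and one from the right; and (ii) $\mathrm{dist}(p_L, p_R) \le r$ for \emph{all} pairs $L, R$ if and only if the $6$ vertices form a hyperclique. Then the discrete $2$-center instance has radius $\le r$ iff $H$ has a $6$-hyperclique.

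The heart of the construction is realizing the ``all pairs close iff hyperclique'' condition with the Euclidean ($L_2$) metric in constantly many dimensions. I would use a block of coordinates that encodes, for the left triple, an indicator-like vector over the $\binom{6}{3}$-ish set of triples/edges touching those three vertices, and for the right triple, a complementary vector, arranged so that $\|p_L - p_R\|^2$ equals a fixed constant \emph{minus} a term that counts how many of the required hyperedges among the six chosen vertices are present in $H$. Set the threshold $r$ so that $\|p_L-p_R\|^2 \le r^2$ exactly when this count is maximal, i.e.\ when every needed hyperedge is present. For a $3$-uniform hypergraph there are $\binom{6}{3} = 20$ triples to check, but the left triple alone already certifies the triples inside it, the right triple certifies triples inside it, and only the ``mixed'' triples (with vertices on both sides) need to be verified by the distance test; this keeps the dimension bounded. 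The remaining coordinates (a constant number of them, with carefully tuned values -- this is where the ``extra dimension'' mentioned in the excerpt comes in) are padding/normalization coordinates that (a) make all same-side distances exactly equal so property (i) holds cleanly, and (b) absorb the varying norms $\|p_L\|, \|p_R\|$ so that the pairwise squared distance is an affine function of the hyperedge-count only. One also adds a couple of auxiliary ``anchor'' points, far from everything except the valid supply points, to force the two chosen centers to actually be one left and one right point rather than both on the same side or at a degenerate location.

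The main obstacle I expect is the dimension bookkeeping: getting property (ii) to hold with $L_2$ (not $L_\infty$) forces the squared-distance expression to be a linear function of a count, and linearizing an ``AND over hyperedges'' via squared Euclidean distance requires encoding each hyperedge as its own coordinate (or a small group of coordinates) while simultaneously keeping all same-side distances identical; balancing these constraints is what pins the dimension to $13$ rather than something smaller, and verifying that no ``cheating'' solution exists (e.g.\ a center placed at an anchor, or covering with radius slightly below $r$ by exploiting slack in the padding coordinates) requires checking the distance inequalities in all the corner cases. Once the coordinates are written down, though, correctness is a finite case check: verify same-side distances equal some value $\le r$; verify left--right distance is $\le r$ iff all mixed hyperedges present; verify the anchors rule out same-side pairs; and conclude that a radius-$r$ discrete $2$-center exists iff the $6$-hyperclique exists, giving the claimed $\Omega(n^{2-\delta})$ bound under the Hyperclique Hypothesis.
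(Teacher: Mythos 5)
There is a genuine gap. Your construction has only two kinds of non-anchor points, the candidate-center points $p_L$ and $p_R$, and your conditions (i) and (ii) do not combine to give a correct reduction. If $p_{L^*}$ and $p_{R^*}$ are the two chosen centers, then by condition (i) every $p_L$ is within $r$ of $p_{L^*}$ and every $p_R$ is within $r$ of $p_{R^*}$, so \emph{every} point is covered at radius $r$ no matter what $L^*$ and $R^*$ are. Condition (ii) plays no role: the fact that $\|p_{L^*}-p_R\|>r$ for some $R$ does not matter, since that $p_R$ is already covered by $p_{R^*}$. As described, your instance always has a radius-$r$ solution, so it cannot distinguish hyperclique from no hyperclique. (For the same reason, ``diameter at most $r$'' and ``2-center radius at most $r$'' are very different conditions; Bringmann et al.'s diameter argument cannot be ported verbatim.) Separately, the ``linearization'' plan — making $\|p_L - p_R\|^2$ an affine function of the number of present mixed hyperedges among $L\cup R$ — cannot work in constantly many coordinates: the mixed triples range over $\Theta(N^3)$ possibilities, and no fixed-dimension quadratic form can count over that set.

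What is missing is a third category of points. The paper adds, for each \emph{non-edge} $\{v_i,v_j,v_k\}\notin E$ whose index set $\{i,j,k\}$ is split across the two halves, a test point $z_{v_iv_jv_k}$. It also restricts $p_{x_1x_2x_3}$ (resp.\ $q_{x_4x_5x_6}$) to triples that \emph{are} hyperedges. The crucial geometric gadget is a unit-circle encoding $t\mapsto(f(t),g(t))$ with $f^2+g^2=1$: the distance in $\mathbb{R}^2$ between $(f(v),g(v))$ and $(-f(x),-g(x))$ is at most $2$ with equality iff $v=x$, while the distance to the origin is always $1$. Combined with a $13$th coordinate $\phi_{ijk}\in\{-0.5,0.5\}$ chosen according to how $\{i,j,k\}$ splits, one gets $\|z_{v_1v_2v_4}-p_{x_1x_2x_3}\|^2\le 10.25$ with equality iff $v_1=x_1$ and $v_2=x_2$, and $\|z_{v_1v_2v_4}-q_{x_4x_5x_6}\|^2\le 10.25$ with equality iff $v_4=x_4$. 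Hence $z_{v_1v_2v_4}$ fails to be covered at radius $<\sqrt{10.25}$ exactly when $(v_1,v_2,v_4)=(x_1,x_2,x_4)$; since $z$ is created only for non-edges, this happens iff $\{x_1,x_2,x_4\}\notin E$. The ``AND over all mixed hyperedges'' is then implemented by the set-cover structure itself (every test point must be covered), not by compressing a count into a single distance. Your high-level framing (6-hyperclique, left/right split, anchors, a tuned 13th coordinate) matches the paper, but without the non-edge test points the reduction does not certify anything.
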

\begin{proof}
    We will reduce the problem of detecting a 6-hyperclique in a 3-uniform hypergraph with $n$ vertices, to the Euclidean discrete 2-center problem on $N=O(n^3)$ points in $\R^{13}$.  Thus, if there is an $O(N^{2-\delta})$-time algorithm for the latter problem, there would be $O(n^{6-3\delta})$-time algorithm for the former problem, refuting the Hyperclique Hypothesis.
    
    Let $G=(V,E)$ be the given 3-uniform hypergraph.  By a standard color-coding technique~\cite{AlonYZ95}, we may assume that $G$ is $6$-partite, i.e., 
    $V$ is partitioned into 6 parts $V_1,\ldots,V_6$, and each edge in $E$ consists of 3 vertices from 3 different parts.
    The goal is to decide the existence of a 6-hyperclique, i.e., $(x_1,\dots,x_6)\in V_1\times \cdots \times V_6$ such that $\{x_i,x_j,x_k\}\in E$ for all distinct $i,j,k\in \{1,\ldots,6\}$.

    Without loss of generality, assume that $V\subset [0,1]$, i.e., vertices are labelled by numbers that are rescaled to lie in $[0,1]$.
    Let $f,g: [0,1]\rightarrow [0,1]$ be some injective functions
    satisfying $f(x)^2+g(x)^2=1$.
    For example, we can take $f(x)=\cos x$ and $g(x)=\sin x$;
    or alternatively, avoiding trigonometric functions,
    $f(x)=x$ and $g(x)=\sqrt{1-x^2}$;
    or avoiding irrational functions altogether,
    $f(x) = 2x/(x^2+1)$ and
    $g(x) = (x^2-1)/(x^2+1)$.  (With the last two options, by rounding to $O(\log n)$ bits of precision, it is straightforward to make our reduction work in the standard integer word RAM model.)

    \subparagraph{The reduction.}  We construct the following set $S$ of $O(n^3)$ points in $\R^{13}$:

    \begin{enumerate}
    \item  For each $(x_1,x_2,x_3)\in V_1\times V_2 \times V_3$ with $\{x_1,x_2,x_3\}\in E$, create the point 
	\[	p_{x_1x_2x_3}\ =\
	\left(f(x_1), g(x_1), f(x_2), g(x_2),f(x_3), g(x_3), 0 ,0,0,0,0,0,1\right).
	\]

    \item Similarly, for each $(x_4,x_5,x_6)\in V_4\times V_5 \times V_6$ such that $\{x_4,x_5,x_6\}\in E$, create the point
	\[	
	q_{x_4x_5x_6}\ =\ \left(0,0,0,0,0,0,f(x_4), g(x_4), f(x_5), g(x_5),f(x_6), g(x_6),-1\right). 
	\]

    \item 	For each 
    $(v_i,v_j,v_k)\in V_i\times V_j\times V_k$ with distinct $i,j,k$ such that $\{v_i,v_j,v_k\}\not\in E$,
    $\{i,j,k\}\neq \{1,2,3\}$, and $\{i,j,k\}\neq\{4,5,6\}$, create a point  
    $z_{v_iv_jv_k}$: the coordinates in dimensions $2i - 1,2i$ are $-f(v_i),-g(v_i)$, and similarly the coordinates in dimensions $2j - 1,2j,2k - 1,2k$ are $-f(v_j), -g(v_j),-f(v_k), -g(v_k)$, respectively; the 13-th coordinate is 
	\[
	\phi_{ijk}\ =\ \abs{\{1,2,3\}\cap\{i,j,k\}}-1.5\ \in\ \left\{-0.5,0.5\right\};
	\] 
	and all other coordinates are 0. For example, if $i=1,j=2,k=4$,
    \[
    z_{v_1v_2v_4}\ =\ \left(-f(v_1),-g(v_1),-f(v_2),-g(v_2),0,0,-f(v_4),-g(v_4),0,0,0,0,0.5\right).
    \]

    \item     Finally, add two auxiliary points $s_{\pm}=\left(0,\dots,0,\pm3.5\right)$.
    \end{enumerate}

    We solve the discrete 2-center problem on the above point set $S$,
    and return true iff the minimum radius is strictly less than $\sqrt{10.25}$.

    \subparagraph{Correctness.}
    Suppose there exists a 6-hyperclique $(x_1,\ldots,x_6)\in V_1\times\cdots\times V_6$ in $G$.  We claim that every point of $S$ has distance strictly less than $\sqrt{10.25}$ from $p_{x_1x_2x_3}$ or $q_{x_4x_5x_6}$.  Thus, $S$ can be covered by 2 balls centered at $p_{x_1x_2x_3}$ and $q_{x_4x_5x_6}$ with radius less than $\sqrt{10.25}$.  To verify the claim,
    consider a point $z_{v_1v_2v_4}\in S$
    for a triple $(v_1,v_2,v_4)\in V_1\times V_2\times V_4$ with $\{v_1,v_2,v_4\}\not\in E$.
    Observe that the distance between the points $(f(v_\ell),g(v_\ell))$ and $(-f(x_\ell),-g(x_\ell))$ in $\R^2$
    is at most 2, with equality iff $v_\ell=x_\ell$. On the other hand, the distance between $(f(v_\ell),g(v_\ell))$ and $(0,0)$ is 1, and the distance between $(0,0)$ and $(-f(x_\ell),-g(x_\ell))$ is 1.  Thus,
    \[ \|z_{v_1v_2v_4}-p_{x_1x_2x_3}\|^2\ \le\ 2^2 + 2^2+ 1 + 1 + 0 + 0 + (0.5-1)^2\ \le\ 10.25, \]
    with equality iff $v_1=x_1$ and $v_2=x_2$.
    Furthermore, 
    \[ \|z_{v_1v_2v_4}-q_{x_4x_5x_6}\|^2\ \le\ 1+1+ 0 + 2^2+1+1+ (0.5+1)^2\ \le\ 10.25, \]
    with equality iff $v_4=x_4$.
    Since $\{x_1,x_2,x_4\}\in E$,
    we cannot have simultaneously $v_1=x_1$, $v_2=x_2$, and $v_4=x_4$.  So, 
    $z_{v_1v_2v_4}$
    has distance strictly less than $\sqrt{10.25}$ from $p_{x_1x_2x_3}$ or $q_{x_4x_5x_6}$. 
    Similarly, the same holds for $z_{v_iv_jv_k}\in S$ for all other choices of $i,j,k$.
    Points $p_{x_1'x_2'x_3'}\in S$ have
    distance at most $\sqrt{2+2+2+0+0+0+0}<\sqrt{10.25}$
    from $p_{x_1x_2x_3}$, and similarly, points $q_{x_4'x_5'x_6'}\in S$ have distance less than $\sqrt{10.25}$ from $q_{x_4x_5x_6}$.
    Finally, the auxiliary point $s_{+}$ has distance at most
    $\sqrt{1+1+1+0+0+0+2.5^2}<\sqrt{10.25}$ from $p_{x_1x_2x_3}$, and similarly the point $s_-$ has distance less than $\sqrt{10.25}$ from $q_{x_4x_5x_6}$.

    On the reverse direction, suppose that the minimum radius for the discrete 2-center problem on $S$ is strictly less than $\sqrt{10.25}$.
    Note that the distance between $s_+$ and $z_{v_iv_jv_k}$ is at least $\sqrt{1+1+1+0+0+0+3^2}>\sqrt{10.25}$,
    and the distance between $s_+$ and
    $q_{x_4x_5x_6}$ is at least
    $\sqrt{0+0+0+1+1+1+4.5^2}>\sqrt{10.25}$.
    Thus, in order to cover $s_+$, one of the two centers must be equal to $p_{x_1x_2x_3}$
    for some $\{x_1,x_2,x_3\}\in E$.
    Similarly, in order to cover $s_-$,
    the other center must be equal to $q_{x_4x_5x_6}$ for some $\{x_4,x_5,x_6\}\in E$.  Then for every $(v_1,v_2,v_4)\in V_1\times V_2\times V_4$ with $\{v_1,v_2,v_4\}\not\in E$, the point $z_{v_1v_2v_4}$ has distance strictly less than $\sqrt{10.25}$ from $p_{x_1x_2x_3}$ or $q_{x_4x_5x_6}$.  By the above argument,
    we cannot have $v_1=x_1$ and $v_2=x_2$ and $v_4=x_4$.  It follows that $\{x_1,x_2,x_4\}\in E$.  Similarly, $\{x_i,x_j,x_k\}\in E$ for all other choices of $i,j,k$.  We conclude that $\{x_1,\ldots,x_6\}$ is a 6-hyperclique.
\IGNORE{    

	By slight abuse of notation, we view each $V_i$ as a disjoint copy of~$[n]$, i.e., node $j\in [n]$ in $V_i$ is different from node $j$ in $V_{i'}$ with $i'\ne i$.
Furthermore, by complementing the edge set, we arrive at the equivalent task of determining whether $G$ has an \emph{independent set} of size 6, i.e., whether there are $(v_1,\dots,v_6)\in V_1\times \cdots \times V_6$ such that $\{v_i,v_j,v_k\}\notin E$ for all distinct $i,j,k\in [6]$.

	The reduction is given by constructing a set of $O(n^3)$ points in $\mathbb{R}^{13}$. These points are of four types: \emph{left-half points} representing a choice of the vertices $(x_1,x_2,x_3)\in V_1\times V_2 \times V_3$, \emph{right-half points} representing a choice of the vertices $(y_1,y_2,y_3)\in V_4\times V_5 \times V_6$, \emph{edge points} representing an edge $\{v_i,v_j,v_k\}\in E$, and \emph{auxiliary points} to ensure that the two potential centers are left- and right-half points respectively. In particular, the choice of a vertex in $V_i$ will be encoded in the dimensions $2i - 1$ and $2i$.

	Specifically, for each $(x_1,x_2,x_3)\in V_1\times V_2 \times V_3$ such that $\{x_1,x_2,x_3\}\notin E$, we define the left-half point $X_{x_1,x_2,x_3}$ as
	\[	
	\left(f(x_1), g(x_1), f(x_2), g(x_2),f(x_3), g(x_3), 0 ,0,0,0,0,0,1\right),
	\]
	where $f,g:[n]\mapsto[0,1]$ can be any injective functions satisfying $f(x)^2+g(x)^2=1$, e.g.
	\begin{align*}
	    f(x)&=\cos\left(\frac{x}{2n}\pi\right), g(x)=\sin\left(\frac{x}{2n}\pi\right);\\
	    f(x)&=\frac xn,g(x)=\sqrt{1-\frac{x^2}{n^2}};\\
	    f(x)&=\frac{2x}{x^2+1},g(x)=\frac{x^2-1}{x^2+1}.
	\end{align*}
	
	Similarly, for each $(y_1,y_2,y_3)\in V_4\times V_5 \times V_6$ such that $\{y_1,y_2,y_3\}\notin E$, we define the right-half point $Y_{y_1,y_2,y_3}$ as
	\[	
	\left(0,0,0,0,0,0,f(y_1), g(y_1), f(y_2), g(y_2),f(y_3), g(y_3),-1\right). 
	\]

	For each edge $e=\{v_i,v_j,v_k\} \in E$ not already in $V_1 \times V_2 \times V_3 \cup V_4 \times V_5 \times V_6$, we define a corresponding edge point $E_{v_i,v_j,v_k}$: the $2i - 1$-th coordinate is $-f(v_i)$, the $2i$-th coordinate is $-g(v_i)$, and similarly the dimensions $2j - 1,2j,2k - 1,2k$ are $-f(v_j), -g(v_j),-f(v_k), -g(v_k)$, respectively; the 13-th coordinate is 
	\[
	\phi(e)\ =\ \abs{\{1,2,3\}\cup\{i,j,k\}}-\frac32\ \in\ \left\{-\frac12,\frac12\right\};
	\] 
	and all other coordinates are 0. For example, if $i=1,j=2,k=4$, the point $E_{v_1,v_2,v_4}$ is
    \[
    \left(-f(v_1),-g(v_1),-f(v_2),-g(v_2),0,0,-f(v_4),-g(v_4),0,0,0,0,\frac12\right)
    \]
    
    Finally, the two auxiliary points are $p_{\pm}=\left(0,\dots,0,\pm\frac72\right)$.
    
	Let $S$ denote the set of all points $X_{x_1,x_2,x_3},Y_{y_4,y_5,y_6},E_{v_i,v_j,v_k},p_{\pm}$ constructed above, and $r_2(S)$ be the min radius of the discrete 2-center problem on $S$. We prove that $r_2(S)<\frac{\sqrt{41}}{2}$ if and only if there is an independent set $(v_1,\dots,v_6)\in V_1\times \cdots \times V_6$ in the 3-uniform hypergraph $G=(V_1\cup \cdots \cup V_6, E)$. Let $x=X_{x_1,x_2,x_3}$, $x'=X_{x'_1,x'_2,x'_3}$, $e=E_{v_i,v_j,v_k},y=Y_{y_1,y_2,y_3}$.

	\begin{enumerate}
		\item \textbf{Intra-set distances:}  $\norm{x-x'}^2\le\sqrt{6}<\frac{\sqrt{41}}{2}$, since $x,x'\in[0,1]^6\times\{0\}^6\times\{1\}$. The same is true for right-half points.
		\item \textbf{Edge distances:} $\norm{x-e} \le \frac{\sqrt{41}}{2}$, with equality iff $v_\ell = x_\ell$ whenever $\ell \in \{1,2,3\}\cap \{i,j,k\}$: Consider $\ell \in \{1,2,3\}\cap \{i,j,k\}$. Then the dimensions $(2\ell - 1, 2\ell)$ of $x$ and $e$ are equal to $(f(x_\ell),g(x_\ell))$ and $(-f(v_\ell),-g(v_\ell))$, respectively. Let $d_{\ell}$ be the distance between those two 2d points, which are on the unit circle, so $d_\ell\le2$, with equality iff $x_\ell = v_\ell$. If $\phi(e)=\frac12$, wlog let $\{i,j,k\}=1,2,4$, then
		\begin{align*}
		\norm{x-e}^2&=d_1^2+d_2^2+1^2+1^2+0+0+\left(1-\frac12\right)^2\\
		&=d_1^2+d_2^2+\frac94\le \frac{41}{4},
		\end{align*}
		with equality iff $d_1=d_2=2$, iff $x_1 = v_1$ and $x_2=v_2$. Otherwise $\phi(e)=-\frac12$. Wlog let $\{i,j,k\}=1,4,5$, then
		\begin{align*}
		\norm{x-e}^2&=d_1^2+1^2+1^2+1^2+1^2+0+\left(1+\frac12\right)^2\\
		&=d_1^2+\frac{25}{4}\le \frac{41}{4},
		\end{align*}
		with equality iff $d_1=2$, iff $x_1 = v_1$. The analogous claim holds for distances between $y$ and $e$. 
		\item \textbf{Auxiliary distances:} If $r_2(S)<\frac{\sqrt{41}}{2}$, at least one of the two centers is a left-half point:
		\begin{align*}
		\norm{p_+-x}^2&=1+1+1+0+0+0+\left(\frac72-1\right)^2=\frac{37}{4}<\frac{41}{4},\\
		\norm{p_+-e}^2&=1+1+1+0+0+0+\left(\frac72-\phi(e)\right)^2\ge3+\left(\frac72-\frac12\right)^2=12>\frac{41}{4},\\
		\norm{p_+-y}^2&=0+0+0+1+1+1+\left(\frac72+1\right)^2=\frac{93}{4}>\frac{41}{4},\\
		\norm{p_+-p_-}^2&=0+0+0+0+0+0+7^2=49>\frac{41}{4}.
		\end{align*}
		By symmetry, the other center is a right-half point. Wlog let $x,y$ be the 2-centers. By definition $\{x_1,x_2,x_3\},\{y_1,y_2,y_3\}\notin E$. If, wlog, $\{x_1,x_2,y_1\}\in E$, by the edge distance property, $\norm{x-E_{x_1,x_2,y_1}}=\norm{y-E_{x_1,x_2,y_1}}=\frac{\sqrt{41}}{2}$, contradicting $r_2(S)<\frac{\sqrt{41}}{2}$. Hence $(x_1,x_2,x_3,y_1,y_2,y_3)$ is an independent set in $G$. 
		
		On the other hand, if $(x_1,x_2,x_3,y_1,y_2,y_3)$ is an independent set in $G$, by the edge distance property, $\norm{x-e}<\frac{\sqrt{41}}{2}$. By the intra-set distance and auxiliary distance properties, $r_2(S)<\frac{\sqrt{41}}{2}$ when $x,y$ are the 2-centers.
	\end{enumerate}

	Finally, observe that given a 3-uniform hypergraph $G$, we can construct the corresponding point set $S$, containing $O(n^3)$ vertices, in time $O(n^3)$. Thus, if we had an $O(N^{2-\epsilon})$-time algorithm for the discrete 2-center problem on a set of size $N$, we could detect existence of an independent set (or equivalently, hyperclique) of size 6 in $G$ in time $O(n^{6-3\epsilon})$, which would refute the Hyperclique Hypothesis.
 }
\end{proof}

From the same proof  (after rescaling), we immediately get a near-quadratic conditional lower bound for unweighted size-2 geometric set cover for unit balls in $\R^{13}$.
In Sec.~\ref{app:dkc},
we extend the proof to Euclidean discrete $k$-center for larger constant $k$, with more technical effort and more delicate handling of the extra dimensions.  This is interesting: discrete $k$-center seems even farther away from graph diameter, but in a way, our proof shows that discrete $k$-center is a better problem to illustrate the full power of Bringmann et al.'s technique~\cite{bringmann2022towards}.

In Sec.~\ref{sec:max:cov}, 
we also adapt the approach to prove a conditional lower bound for size-2 maximum coverage for boxes. 
The proof uses a different way to enforce conditions like $\{x_1,x_2,x_4\}\in E$, via an interesting counting argument.

\section{Conditional Lower Bounds for Euclidean Discrete $k$-Center for $k\ge 3$}
\label{app:dkc}

\newcommand{\kk}{\kappa}

Continuing Section~\ref{sec:d2c_lb},
we extend 
our conditional lower bound proof for Euclidean discrete 2-center to Euclidean discrete $k$-center for larger $k$. 
This requires some extra technical ideas, in particular, using $k$ extra dimensions (instead of one) and with more carefully tuned coordinate values.
In the proof below, it is more convenient to use $k$ for another index, and so we will rename $k$ as $\kk$.

\begin{theorem}\label{thm:lb:dkc}
	For any constant $\delta > 0$, there is no $O(n^{\kk-\delta})$-time algorithm for Euclidean discrete $\kk$-center in $\mathbb{R}^{7\kk}$, assuming the Hyperclique Hypothesis.
\end{theorem}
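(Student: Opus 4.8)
\emph{Overall approach.} I would scale up the proof of Theorem~\ref{thm:lb:d2c} by a factor of $\kk$: reduce from detecting a $3\kk$-hyperclique in a $3$-uniform hypergraph on $n$ vertices, which (after the usual color-coding step) we may take to be $3\kk$-partite with parts $V_1,\dots,V_{3\kk}$, demanding one vertex from each part. Group the parts into $\kk$ \emph{blocks}, block $i$ being $\{V_{3i-2},V_{3i-1},V_{3i}\}$; the $\kk$ centers will each stand for an edge-triple inside one block. Work in $\R^{7\kk}=\R^{6\kk}\times\R^{\kk}$, where coordinates $2a-1,2a$ hold the pair $(f(v),g(v))$ (with injective $f,g\ge0$, $f^2+g^2=1$, as in Theorem~\ref{thm:lb:d2c}) encoding a chosen vertex $v$ of $V_a$, and the last $\kk$ coordinates are ``block dimensions'', one per block. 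The point set will have size $N=O(n^3)$, so an $O(N^{\kk-\delta})$ algorithm would detect a $3\kk$-hyperclique in $O(n^{3\kk-3\delta})$ time, contradicting the Hyperclique Hypothesis (legitimate since $3\kk>3$).

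\emph{The gadget.} Create (1) for each block $i$ and each edge $\{x_{3i-2},x_{3i-1},x_{3i}\}\in E$ lying inside block $i$, a \emph{candidate} point holding those three $(f,g)$-pairs in coordinates $6i-5,\dots,6i$, a fixed value $a$ in block dimension $i$, and $0$ elsewhere; (2) for each triple of parts meeting at least two blocks and each \emph{non-edge} $\{v_a,v_b,v_c\}$ on them, a point $z_{v_av_bv_c}$ holding $(-f,-g)$ of $v_a,v_b,v_c$ in the relevant coordinates and, in block dimension $i$, a value $d^{(m_i)}$ depending only on $m_i\in\{0,1,2\}$, the number of $\{a,b,c\}$ in block $i$; and (3) $\kk$ \emph{auxiliary} points $s_1,\dots,s_\kk$, with $s_i$ holding a value $\mu$ in block dimension $i$ and $0$ elsewhere. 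We answer ``yes'' iff the optimal discrete $\kk$-center radius is $<\sqrt T$.

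\emph{Calibration --- the crux.} Everything hinges on the identity that, for a non-edge point $z_{v_av_bv_c}$ and a block-$i$ candidate $q$, $\|z-q\|^2\le 2m_i+6+\psi_i$, with equality iff $q$ agrees with $z$ on every part of $\{a,b,c\}$ inside block $i$; here $\psi_i=a^2+D-2a\,d^{(m_i)}$ with $D=\sum_j (d^{(m_j)})^2$ is the block-dimension contribution, a linear function of $d^{(m_i)}$. Setting $d^{(0)}=0$, one wants $\psi_i=T-6-2m_i$ for $m_i\in\{1,2\}$ and $\psi_i\ge T-6$ for $m_i=0$, which forces $(d^{(2)})^2=2(d^{(1)})^2$ (so that the two \emph{shapes} of a spanning non-edge --- two parts in one block, or three parts in three distinct blocks --- give the same $D$, hence the same $T$), together with $a(d^{(2)}-d^{(1)})=1$ and $a\,d^{(1)}\ge1$; these are consistent, and choosing $a$ away from the value that minimizes $T$ (e.g.\ $a=2$, $d^{(1)}=(\sqrt2+1)/2$, $d^{(2)}=(2+\sqrt2)/2$) leaves enough slack to pick $\mu$ (in an open interval of admissible values) so that $s_i$ lies within $\sqrt T$ of every block-$i$ candidate but at distance $\ge\sqrt T$ from every other candidate, every $z$-point, and every $s_{i'}$. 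After this calibration, $\|z-q\|^2$ equals $T$ when $z$ agrees with $q$ on $q$'s block, is $<T$ when they disagree there, and is $>T$ when $q$'s block misses $\{a,b,c\}$ entirely.

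\emph{Correctness and the obstacle.} Given the above, the two directions go through exactly as for $\kk=2$. If a $3\kk$-hyperclique exists, put the centers at its $\kk$ within-block edge-triples: every $z$-point for a non-edge $\{v_a,v_b,v_c\}$ disagrees with the hyperclique in at least one of $v_a,v_b,v_c$, hence is strictly inside the ball around that part's block-center, and the candidate- and auxiliary-points are comfortably covered (candidate--candidate squared distances are $\le6$ since $f,g\ge0$ confines the encodings to a quarter-circle arc). Conversely, if the radius is $<\sqrt T$, the auxiliary points force the $\kk$ centers to be exactly one candidate per block; their encoded triples concatenate to a tuple $(x_1,\dots,x_{3\kk})$ whose within-block triples are edges by construction, and if some spanning triple were a non-edge its $z$-point would sit at squared distance $\ge T$ from \emph{every} candidate (exactly $T$ from the blocks it meets, since the coordinates agree there, and $>T$ from the rest), contradicting coverage --- so all triples are edges and we have a hyperclique. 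The genuinely new difficulty over the $\kk=2$ proof is precisely the calibration: with $\kk$ block dimensions and two distinct shapes of spanning non-edge, the coordinates must be tuned so a single threshold $T$ serves both shapes while still admitting the auxiliary points; pinning down admissible constants (and rounding $f,g$ and these constants to $O(\log n)$ bits to stay in the integer word-RAM model) is where the real work lies. The remaining ingredients --- the $O(n^3)$ point count, the intra-family distances, and the auxiliary-point inequalities --- are routine verifications patterned on Theorem~\ref{thm:lb:d2c}.
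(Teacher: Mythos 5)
Your proposal is correct and takes essentially the same approach as the paper: same reduction from $3\kk$-hyperclique, same $\R^{6\kk}\times\R^{\kk}$ embedding with vertex encodings on the unit circle and one "block dimension" per block, same three families of points (candidates, non-edge $z$-points, $\kk$ auxiliaries), and the same one-center-per-block forcing via the auxiliaries. The only difference is in how the block-dimension constants are calibrated: you impose the extra shape-independence condition $(d^{(2)})^2=2(d^{(1)})^2$ to make $d^{(m)}$ depend only on $m$ (giving $a=2$, $d^{(1)}=(\sqrt2+1)/2$, $d^{(2)}=(2+\sqrt2)/2$, $T=49/4-\sqrt2/2$), whereas the paper lets the values depend on the shape and solves $(\mu-2)^2+\nu^2=6$, $\mu^2+(\nu-2)^2=8$ directly with $T=16$ and shape-1 value~$2$; both calibrations are valid and the rest of the argument is identical.
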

\begin{proof}
    We will reduce the problem of detecting a $3\kk$-hyperclique in a 3-uniform hypergraph with $n$ vertices, to the Euclidean discrete $\kk$-center problem on $N=O(n^3)$ points in $\R^{7\kk}$.  Thus, if there is an $O(N^{\kk-\delta})$-time algorithm for the latter problem, there would be $O(n^{3\kk-3\delta})$-time algorithm for the former problem, refuting the Hyperclique Hypothesis.
    
    Let $G=(V,E)$ be the given 3-uniform hypergraph.  By a standard color-coding technique~\cite{AlonYZ95}, we may assume that $G$ is $(3\kk)$-partite, i.e., 
    $V$ is partitioned into $3\kk$ parts $V_1,\ldots,V_{3\kk}$, and each edge in $E$ consists of 3 vertices from 3 different parts.
    The goal is to decide the existence of a $3\kk$-hyperclique, i.e., $(x_1,\dots,x_{3\kk})\in V_1\times \cdots \times V_{3\kk}$ such that $\{x_i,x_j,x_k\}\in E$ for all distinct $i,j,k\in \{1,\ldots,3\kk\}$.

    As before, assume that $V\subset [0,1]$, and 
    let $f,g: [0,1]\rightarrow [0,1]$ be some injective functions
    satisfying $f(x)^2+g(x)^2=1$.

    \subparagraph{The reduction.}  We construct the following set $S$ of $O(n^3)$ points in $\R^{7\kk}$:

For each $t\in\{1,\ldots,\kk\}$ and $(x_{3t-2},x_{3t-1},x_{3t})\in V_{3t-2}\times V_{3t-1} \times V_{3t}$ with $\{x_{3t-2},x_{3t-1},x_{3t}\}\in E$, create the point $p^{(t)}_{x_{3t-2}x_{3t-1}x_{3t}}=(p_1,\ldots,p_{7\kk})$ where
\[
\begin{array}{ll}
	p_{6t-5}=f(x_{3t-2}),\qquad\qquad &
	p_{6t-4}=g(x_{3t-2}),\\
	p_{6t-3}=f(x_{3t-1}), &
	p_{6t-2}=g(x_{3t-1}),\\
	p_{6t-1}=f(x_{3t}), &
	p_{6t}=g(x_{3t}),\\
	p_{6\kk+t}=2, &
\end{array}
\]
and all other coordinates are 0. 

For each $(v_i,v_j,v_k)\in V_i\times V_j\times V_k$ with distinct $i,j,k$ such that $\{v_i,v_j,v_k\} \not\in E$ and $\alpha=\ceil{i/3},\beta=\ceil{j/3},\gamma=\ceil{k/3}$ are not identical, create a point $z^{(ijk)}_{v_iv_jv_k}=(z_1,\ldots,z_{7\kk})$ where
	\[ \begin{array}{ll}
	z_{2i-1}=-f(v_i),\qquad\qquad &
	z_{2i}=-g(v_i),\\
	z_{2j-1}=-f(v_j), &
	z_{2j}=-g(v_j),\\
	z_{2k-1}=-f(v_k), &
	z_{2k}=-g(v_k).
	\end{array} \]
	In addition, if $\alpha,\beta,\gamma$ are distinct, we let
	\[
	z_{6\kk+\alpha}=z_{6\kk+\beta}=z_{6\kk+\gamma}=2;
	\]
	otherwise, assuming $\alpha=\beta$ without loss of generality, we let
	\[
	z_{6\kk+\alpha}=\mu,\qquad\qquad
	z_{6\kk+\gamma}=\nu,
	\]
        where $\mu=(5+\sqrt{39})/4\approx 2.8112$ and $\nu=(3+\sqrt{39})/4\approx 2.3112$ satisfy the equations
        $\mu^2+(\nu-2)^2=8$ and $(\mu-2)^2+\nu^2=6$.
	All other coordinates are 0.

    Finally, for each $t\in\{1,\ldots,\kk\}$, we define the auxiliary point $s^{(t)}=(s_1,\ldots,s_{7\kk})$ where $s_{6\kk+t}=5.6$ and all other coordinates are 0.

        We solve the discrete $\kk$-center problem on the above point set $S$,
    and return true iff the minimum radius is strictly less than $4$.

    \subparagraph{Correctness.}
    Suppose there exists a $3\kk$-hyperclique $(x_1,\ldots,x_\kk)\in V_1\times\cdots\times V_\kk$ in $G$.  We claim that every point of $S$ has distance strictly less than $4$ from $p^{(t)}_{x_{3t-2}x_{3t-1}x_{3t}}$ for some $t\in\{1,\ldots,\kk\}$.  Thus, $S$ can be covered by $\kk$ balls centered at $p^{(t)}_{x_{3t-2}x_{3t-1}x_{3t}}$ for  $t\in\{1,\ldots,\kk\}$ with radius less than $4$.  To verify the claim,
    consider a point $z^{(ijk)}_{v_iv_jv_k}\in S$
    for a triple $(v_i,v_j,v_k)\in V_i\times V_j\times V_k$ with $\{v_i,v_j,v_k\}\not\in E$.  Let $\alpha=\ceil{i/3},\beta=\ceil{j/3},\gamma=\ceil{k/3}$.  We consider the following cases, as depicted in Figure~\ref{fig:dkc} (all remaining cases are symmetric):

\begin{figure}
\includegraphics[scale=0.74]{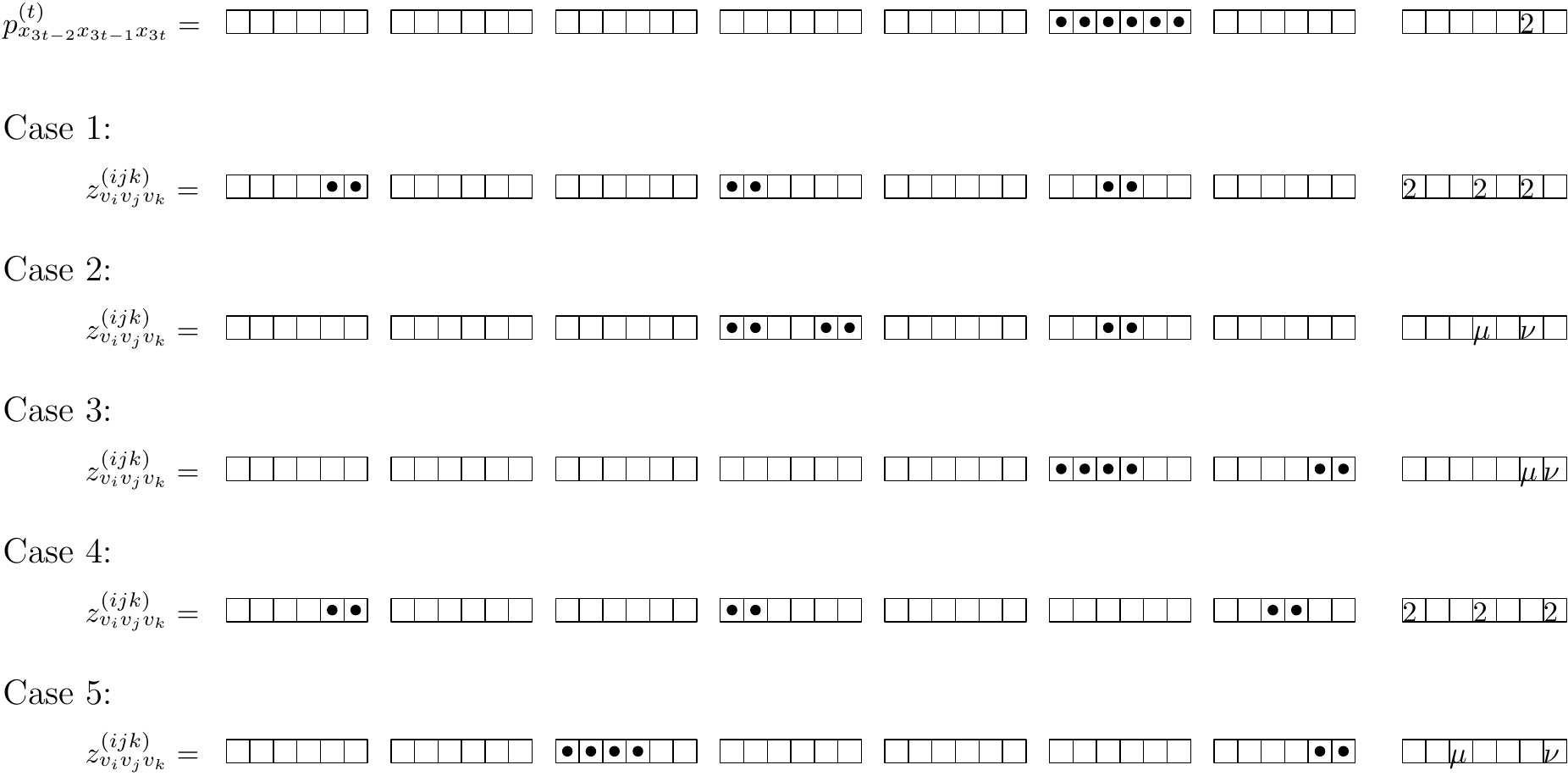}
\vspace{2ex}
\caption{The nonzero coordinates of $p^{(t)}_{x_{3t-2}x_{3t-1}x_{3t}}$ and $z^{(ijk)}_{v_iv_jv_k}$.}\label{fig:dkc}
\end{figure}

    \begin{itemize}
    \item Case 1: If $\{3t-2,3t-1,3t\}\cap \{i,j,k\}=\{k\}$ and $\alpha,\beta,\gamma$ are distinct,
    then
    \[ \|z^{(ijk)}_{v_iv_jv_k}-p^{(t)}_{x_{3t-2}x_{3t-1}x_{3t}}\|^2\ \le\ 2^2 + 1 + 1 + 1 + 1 + (2-2)^2 + (2-0)^2 + (2-0)^2 \ \le\ 16, \]
    with equality iff $v_k=x_k$.
    
    \item Case 2: If $\{3t-2,3t-1,3t\}\cap \{i,j,k\}=\{k\}$ and $\alpha=\beta$,
    then
    \[ \|z^{(ijk)}_{v_iv_jv_k}-p^{(t)}_{x_{3t-2}x_{3t-1}x_{3t}}\|^2\ \le\ 2^2 + 1 + 1 + 1 + 1 + (\mu-0)^2 + (\nu-2)^2 \ \le\ 16, \]
    with equality iff $v_k=x_k$.

    \item Case 3: If $\{3t-2,3t-1,3t\}\cap \{i,j,k\}=\{i,j\}$ and $\alpha=\beta$,
    then
    \[ \|z^{(ijk)}_{v_iv_jv_k}-p^{(t)}_{x_{3t-2}x_{3t-1}x_{3t}}\|^2\ \le\ 2^2 + 2^2 + 1 + 1 + (\mu-2)^2 + (\nu-0)^2 \ \le\ 16, \]  
    with equality iff $v_i=x_i$ and $v_j=x_j$.

    \item Case 4: If $\{3t-2,3t-1,3t\}\cap \{i,j,k\}=\emptyset$ and $\alpha,\beta,\gamma$ are distinct,
    \[ \|z^{(ijk)}_{v_iv_jv_k}-p^{(t)}_{x_{3t-2}x_{3t-1}x_{3t}}\|^2\ =\ 1 + 1 + 1 + 1 + 1 + 1 +  (2-0)^2 + (2-0)^2 + (2-0)^2 + (2-0)^2 \ >\ 16. \]

    \item Case 5: If $\{3t-2,3t-1,3t\}\cap \{i,j,k\}=\emptyset$ and $\alpha=\beta$,
    \[ \|z^{(ijk)}_{v_iv_jv_k}-p^{(t)}_{x_{3t-2}x_{3t-1}x_{3t}}\|^2\ =\ 1 + 1 + 1 + 1 + 1 + 1 +  (\mu-0)^2 + (\nu-0)^2 + (2-0)^2 \ >\ 16. \]
    \end{itemize}

    \noindent
    Since $\{x_i,x_j,x_k\}\in E$,
    we cannot have simultaneously $v_i=x_i$, $v_j=x_j$, and $v_k=x_k$.  So, 
    $z_{v_iv_jv_k}$
    has distance strictly less than $4$ from some $p^{(t)}_{x_{3t-2}x_{3t-1}x_{3t}}$. 
    Points $p^{(t)}_{x_{3t-2}'x_{3t-1}'x_{3t}'}\in S$ have
    distance at most $\sqrt{2^2+2^2+2^2+(2-2)^2}<4$
    from $p^{(t)}_{x_{3t-2}x_{3t-1}x_{3t}}$.
    Finally, the auxiliary point $s^{(t)}$ has distance at most
    $\sqrt{1+1+1+(5.6-2)^2}<4$ from $p^{(t)}_{x_{3t-2}x_{3t-1}x_{3t}}$.

    On the reverse direction, suppose that the minimum radius for the discrete $3\kk$-center problem on $S$ is strictly less than $4$.
    Note that the distance between $s^{(t)}$ and $z^{(ijk)}_{v_iv_jv_k}$ is at least  $\min\{ \sqrt{1+1+1 + (5.6-2)^2+(0-2)^2+(0-2)^2},\ \sqrt{1+1+1+(5.6-\mu)^2 + \nu^2}\}>4$,  
    and for any $t'\neq t$, the distance between $s^{(t)}$ and $p^{(t')}_{x_{3t'-2}x_{3t'-1}x_{3t'}}$ is at least\\    $\sqrt{1+1+1+ (5.6-0)^2+(0-2)^2}>4$.
    Thus, in order to cover the $\kk$ points $s^{(t)}$, for each $t\in\{1,\ldots,\kk\}$, there must be one center equal to $p^{(t)}_{x_{3t-2}x_{3t-1}x_{3t}}$
    for some $\{x_{3t-2},x_{3t-1},x_{3t}\}\in E$.
    Then for every $(v_i,v_j,v_k)\in V_i\times V_j\times V_k$ with $\{v_i,v_j,v_k\}\not\in E$, the point $z^{(ijk)}_{v_iv_jv_k}$ has distance less than $4$ from one of these points $p^{(t)}_{x_{3t-2}x_{3t-1}x_{3t}}$.  By the above case analysis,
    we cannot have $v_i=x_i$ and $v_j=x_j$ and $v_k=x_k$.  It follows that $\{x_i,x_j,x_k\}\in E$.  We conclude that $\{x_1,\ldots,x_{3\kk}\}$ is a $3\kk$-hyperclique.
\IGNORE{

	The reduction is given by constructing a set of $O(n^3)$ points in $\mathbb{R}^{7p}$. These points are of three types: \emph{corner points} representing a choice of the vertices $(a,b,c)\in V_{3t-2}\times V_{3t-1} \times V_{3t}$, \emph{edge points} representing an edge $\{a,b,c\}\in E$, and \emph{auxiliary points} to ensure that the each of the $p$ corners has one potential center. In particular, the choice of a vertex in $V_i$ will be encoded in the dimensions $2i - 1$ and $2i$.

	Specifically, for each $t\in[p]$ and $(x_{3t-2},x_{3t-1},x_{3t})\in V_{3t-2}\times V_{3t-1} \times V_{3t}$ such that $\{x_{3t-2},x_{3t-1},x_{3t}\}\notin E$, we define the corner-$t$ point $x=X^t_{x_{3t-2},x_{3t-1},x_{3t}}$ such that
	\begin{align*}
	u_{6t-5}&=f(x_{3t-2}),\\
	u_{6t-4}&=g(x_{3t-2}),\\
	u_{6t-3}&=f(x_{3t-1}),\\
	u_{6t-2}&=g(x_{3t-1}),\\
	u_{6t-1}&=f(x_{3t}),\\
	u_{6t}&=g(x_{3t}),\\
	u_{6p+t}&=1,
	\end{align*}
	and all other coordinates are 0, 

	For each edge $\{v_i,v_j,v_k\} \in E\cup V_{i}\times V_{j}\times V_{k}$ such that $\alpha=\ceil{i/3},\beta=\ceil{j/3},\gamma=\ceil{k/3}$ are not identical, we define a corresponding edge point $e=E^{i,j,k}_{v_i,v_j,v_k}$ such that
	\begin{align*}
	e_{2i-1}&=-f(v_i),\\
	e_{2i}&=-g(v_i),\\
	e_{2j-1}&=-f(v_j),\\
	e_{2j}&=-g(v_j),\\
	e_{2k-1}&=-f(v_k),\\
	e_{2k}&=-g(v_k).
	\end{align*}
	In addition, if $\alpha,\beta,\gamma$ are distinct, let
	\[
	e_{6p+\alpha}=e_{6p+\beta}=e_{6p+\gamma}=\sqrt{2}+1;
	\]
	otherwise wlog $\alpha=\beta$, then let
	\begin{align*}
	e_{6p+\alpha}&=\sqrt{2}+2\\
	e_{6p+\gamma}&=\sqrt{2}+1,
	\end{align*}
	and all other coordinates are 0.
    
    Finally, for each $t\in[p]$, we define the auxiliary point $q^{t}$ such that $q^t_{6p+t}=5$ and all other coordinates are 0.
    
	Let $S$ denote the set of all points $X^t_{x_{3t-2},x_{3t-1},x_{3t}},\,E^{i,j,k}_{v_i,v_j,v_k},\,q^{t}$ constructed above, and $r_p$ be the min radius of the discrete $p$-center problem on $S$. We prove that $r_p^2<16+4\sqrt{2}$ if and only if there is an independent set $(v_1,\dots,v_{3p})\in V_1\times \cdots \times V_{3p}$ in the 3-uniform hypergraph $G=(V_1\cup \cdots \cup V_{3p}, E)$. Let $x=X^t_{x_{3t-2},x_{3t-1},x_{3t}}$, $x'=X^{t}_{x'_{3t-2},x'_{3t-1},x'_{3t}}$, $e=E^{i,j,k}_{v_i,v_j,v_k}$.

	\begin{enumerate}
		\item \textbf{Intra-corner distances:}  $\norm{x-x'}^2\le\sqrt{6}<16+4\sqrt{2}$. 
		\item \textbf{Edge distances:} $\norm{x-e}^2 \le 16+4\sqrt{2}$, with equality iff $v_\ell = x_\ell$ whenever $\ell \in I=\{3t-2,3t-1,3t\}\cap \{i,j,k\}$: If $I=\emptyset$, 
		\begin{align*}
		\norm{x-e}^2&=d_i^2+1^2+1^2+1^2+1^2+\left(\sqrt2+1-1\right)^2+2\left(\sqrt2+1\right)^2\\
		&=d_i^2+12+4\sqrt2\le 16+4\sqrt2,
		\end{align*}
		Consider $\ell \in \{3t-2,3t-1,3t\}\cap \{i,j,k\}$. Then the dimensions $(2\ell - 1, 2\ell)$ of $x$ and $e$ are equal to $(f(x_\ell),g(x_\ell))$ and $(-f(v_\ell),-g(v_\ell))$, respectively. Let $d_{\ell}$ be the distance between those two 2d points, which are on the unit circle, so $d_\ell\le2$, with equality iff $x_\ell = v_\ell$. If $\abs{I}=1$, wlog $\ell=i$, then
		\begin{align*}
		\norm{x-e}^2&=d_i^2+1^2+1^2+1^2+1^2+\left(\sqrt2+1-1\right)^2+2\left(\sqrt2+1\right)^2\\
		&=d_i^2+12+4\sqrt2\le 16+4\sqrt2,
		\end{align*}
		with equality iff $d_i=2$, i.e. $x_i = v_i$. Otherwise wlog $\alpha=\beta$, then
		\begin{align*}
		\norm{x-e}^2&=d_1^2+d_2^2+1^2+1^2+0+0+\left(1-\frac12\right)^2\\
		&=d_1^2+d_2^2+\frac94\le \frac{41}{4},
		\end{align*}
		with equality iff $d_1=d_2=2$, iff $x_1 = v_1$ and $x_2=v_2$.
		\item \textbf{Auxiliary distances:} If $r_2(S)<\frac{\sqrt{41}}{2}$, at least one of the two centers is a left-half point:
		\begin{align*}
		\norm{p_+-x}^2&=1+1+1+0+0+0+\left(\frac72-1\right)^2=\frac{37}{4}<\frac{41}{4},\\
		\norm{p_+-e}^2&=1+1+1+0+0+0+\left(\frac72-\phi(e)\right)^2\ge3+\left(\frac72-\frac12\right)^2=12>\frac{41}{4},\\
		\norm{p_+-y}^2&=0+0+0+1+1+1+\left(\frac72+1\right)^2=\frac{93}{4}>\frac{41}{4},\\
		\norm{p_+-p_-}^2&=0+0+0+0+0+0+7^2=49>\frac{41}{4}.
		\end{align*}
		By symmetry, the other center is a right-half point. Wlog let $x,y$ be the 2-centers. By definition $\{x_1,x_2,x_3\},\{y_1,y_2,y_3\}\notin E$. If, wlog, $\{x_1,x_2,y_1\}\in E$, by the edge distance property, $\norm{x-E_{x_1,x_2,y_1}}=\norm{y-E_{x_1,x_2,y_1}}=\frac{\sqrt{41}}{2}$, contradicting $r_2(S)<\frac{\sqrt{41}}{2}$. Hence $(x_1,x_2,x_3,y_1,y_2,y_3)$ is an independent set in $G$. 
		
		On the other hand, if $(x_1,x_2,x_3,y_1,y_2,y_3)$ is an independent set in $G$, by the edge distance property, $\norm{x-e}<\frac{\sqrt{41}}{2}$. By the intra-set distance and auxiliary distance properties, $r_2(S)<\frac{\sqrt{41}}{2}$ when $x,y$ are the 2-centers.
	\end{enumerate}

}
\end{proof}

\IGNORE{
\subsection{Geometric hitting set of size 2}
Let $(S,X)$ be a range space, where $S$ is a set of points in $\mathbb{R}^d$ and $X$ is a configuration of regions. The geometric hitting set problem for $(S,X)$ is to find a minimum-cardinality subset $T\subseteq S$ such that each region in $X$ contains at least one point in $T$. 
\begin{theorem}
	For all $\epsilon > 0$ there is no $O(n^{2-\epsilon})$ algorithm for finding a geometric hitting set of size 2 in $\mathbb{R}^{13}$, unless the Hyperclique Hypothesis fails.
\end{theorem}
\begin{proof}
	It suffices to reduce discrete 2-center to geometric hitting set of size 2. Given a decision instance $(S,r)$ of discrete 2-center, consider the hitting set instance $(S,\cup_{s\in S}B(s,r))$, where $B(s,r)$ denotes the ball centered at $s$ with radius $r$. If $\{x,y\}$ is a hitting set, then for any $s\in S$, we have $x\in B(s,r)$ or $y\in B(s,r)$, i.e. $s\in B(x,r)$ or $s\in B(y,r)$, which means $\{x,y\}$ are discrete 2-centers of $S$.
\end{proof}

}

From the same proof (after rescaling), we immediately get a near-$n^\kk$ conditional lower bound for size-$\kk$ geometric set cover for unit balls  in $\R^{7\kk}$.

We have not optimized the number of dimensions $7\kappa$, to keep the proof simple.  (With more work, it should be reducible slightly to $6\kappa+O(1)$.)

\section{Conditional Lower Bound for Size-2 Maximum Coverage for Boxes}
\label{sec:max:cov}


Continuing Section~\ref{sec:d2c_lb} and  moving from the Euclidean back to the rectilinear setting,
we modify our conditional lower bound proof for Euclidean 2-center to the size-2 maximum coverage problem for boxes.  In fact, the proof works for orthants (and thus unit hypercubes) in $\R^{12}$.


\begin{theorem}\label{thm:lb:max_coverage}
For any constant $\delta > 0$, given $n$ points and $n$ orthants in $\R^{12}$, there is no $O(n^{2-\delta})$-time algorithm to find two orthants covering the largest number of points, assuming the Hyperclique Hypothesis.
\end{theorem}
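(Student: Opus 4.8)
The plan is to mimic the reduction behind Theorem~\ref{thm:lb:d2c}, but to replace the ``strict radius'' trick with an exact counting identity, exploiting the fact that maximising the coverage of a \emph{union} of two orthants implicitly minimises their \emph{overlap} --- which is precisely the quantity Bringmann et al.'s diameter argument controls (``no common neighbour'' becomes ``the two orthants share no covered point''). Concretely, I would reduce from detecting a $6$-hyperclique in a $3$-uniform hypergraph on $n$ vertices; by color-coding~\cite{AlonYZ95} assume it is $6$-partite with parts $V_1,\dots,V_6\subset[0,1]$. Dimensions $2i-1,2i$ encode a vertex of $V_i$, using the standard orthant trick that the point $(v,-v)$ lies in the orthant $(-\infty,x]\times[-x,\infty)$ iff $v=x$; a ``don't care'' coordinate pair is realised by clamping to a huge fixed range, and a ``fails every orthant'' coordinate by placing the value just outside that range. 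The orthants are $R^{L}_{x_1x_2x_3}$ for each edge $\{x_1,x_2,x_3\}\in E$ with $x_i\in V_i$ (equality tests with $x_1,x_2,x_3$ on dimensions $1$--$6$, clamped on $7$--$12$) and symmetrically $R^{R}_{x_4x_5x_6}$ for each edge of the ``right'' triple; these are the only $O(n^3)$ orthants.

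The points are of two kinds. First, anchor blocks $a^{L}_1,\dots,a^{L}_K$ coverable by every left orthant but no right orthant, and a mirror block $a^{R}_1,\dots,a^{R}_K$, realised by $K=\Theta(n^3)$ slightly perturbed distinct copies; their only job is to force any near-optimal pair to contain exactly one left and one right orthant. Second, for each of the $18$ ``mixed'' index triples $T=\{i,j,k\}$ with $T\neq\{1,2,3\},\{4,5,6\}$ and each assignment $(v_i,v_j,v_k)$ to the parts of $T$: a \emph{test point} $z^{T}$ if $\{v_i,v_j,v_k\}\notin E$, designed so that $R^{L}_{x_1x_2x_3}$ covers it iff $v$ agrees with $x$ on $T\cap\{1,2,3\}$ and $R^{R}_{x_4x_5x_6}$ covers it iff $v$ agrees with $x$ on $T\cap\{4,5,6\}$; and, if $\{v_i,v_j,v_k\}\in E$, two \emph{filler points} $\widetilde z$ (coverable by $R^{L}$ under the same left-agreement condition, by no right orthant) and $\widetilde z'$ (the mirror image). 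This is $O(n^3)$ points in $\R^{12}$.

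Correctness is then a counting computation. Fix a well-formed pair $R^{L}_{x_1x_2x_3},R^{R}_{x_4x_5x_6}$ (so $\{x_1,x_2,x_3\},\{x_4,x_5,x_6\}\in E$) and a mixed triple $T$. Among the points of type $T$, exactly one lives at each assignment $(v_\ell)_{\ell\in T}$, and $R^{L}$ covers it iff $v$ matches $x$ on $T\cap\{1,2,3\}$; hence $R^{L}$ covers $\prod_{i\in T\cap\{4,5,6\}}|V_i|$ of them, a value \emph{independent of the graph and of $x$} (the fillers exactly compensate for the assignments whose triple is an edge and therefore carries no test point). Symmetrically $R^{R}$ covers $\prod_{i\in T\cap\{1,2,3\}}|V_i|$, and a point is covered by \emph{both} only at the unique assignment $v=x|_T$, which carries a (test) point iff $\{x_i:i\in T\}\notin E$. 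Summing over the $18$ types (and noting that whether $|T\cap\{1,2,3\}|$ is $1$ or $2$ the contribution is the same $n+n^2$), the pair covers exactly $2K+18(n+n^2)-(\text{number of non-edges among the }18\text{ mixed triples of }x)$ points. Meanwhile any pair that is not one-left-one-right misses an entire anchor block and covers at most $K+18(n+n^2)<2K+18(n+n^2)$ points. Therefore the maximum coverage is $\ge 2K+18(n+n^2)$ iff there are edges $\{x_1,x_2,x_3\},\{x_4,x_5,x_6\}$ all of whose $18$ mixed triples are also edges, i.e.\ iff $(x_1,\dots,x_6)$ is a $6$-hyperclique. An $O(N^{2-\delta})$-time algorithm with $N=O(n^3)$ would thus detect $6$-hypercliques in $O(n^{6-3\delta})$ time, refuting the Hyperclique Hypothesis; rescaling the orthants to large hypercubes yields the same bound for unit hypercubes in $\R^{12}$.

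The main obstacle is engineering the point set so that the three counts --- coverage by the left orthant, by the right orthant, and by both --- take these clean values \emph{uniformly over all well-formed pairs}: the filler points must plug in exactly the missing test-point positions, the clamp/fail gadgets must be assigned consistently across all coordinate roles (overlap, same-side non-overlap, opposite-side), and $K$ must dominate the coverage of every malformed pair. Unlike the diameter proof, which only needs ``zero versus nonzero overlap,'' here the whole argument rests on the inclusion--exclusion arithmetic $|A\cup B|=|A|+|B|-|A\cap B|$ evaluating to a fixed graph-independent constant plus a penalty term that vanishes exactly on hypercliques; checking this identity in full, and checking that no exotic orthant can cover too much, is the bulk of the work.
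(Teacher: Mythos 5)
Your proposal follows essentially the same reduction as the paper: encode a $6$-partite $3$-uniform hypergraph with one orthant per edge of $V_1\times V_2\times V_3$ (and of $V_4\times V_5\times V_6$) using the $(v,-v)$ coordinate-pair equality test, one test point per mixed non-edge and a left/right filler pair per mixed edge, so that a well-formed left--right pair covers $18(n^2+n)$ minus the number of non-edges among the $18$ mixed triples of $(x_1,\dots,x_6)$. The filler counts you describe match the paper's $d_4(v_1v_2)$/$d_{12}(v_4)$ bookkeeping exactly, and the paper realizes your "clamp/fail" gadgets with $-\infty$/$+\infty$ coordinates.

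The one substantive difference is your anchor blocks, which force any near-optimal pair to be one-left-one-right. The paper omits anchors: its correctness argument analyzes only a left--right pair and never rules out same-sided pairs. This is not a cosmetic omission. A single left orthant covers exactly $n$ test/filler points of each of the $9$ mixed types $T$ with $|T\cap\{1,2,3\}|=2$ and exactly $n^2$ of each of the $9$ types with $|T\cap\{1,2,3\}|=1$, for a total of $9(n+n^2)$; hence two left orthants $R_{a_1a_2a_3},R_{b_1b_2b_3}$ with $a_i\neq b_i$ for all $i$ cover disjoint subsets of the test/filler points and their union already attains the paper's threshold $18(n^2+n)$, with no hyperclique implied. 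Your anchors repair this --- and even $K=1$ per side suffices, since a same-sided pair then forfeits a full anchor and falls strictly below $2K+18(n^2+n)$ while the shortfall of a well-formed non-clique pair is at most $18$. So your plan is correct, and in this respect it is slightly more careful than the paper's as written; what you leave implicit (and the paper spells out) is only the explicit $\R^{12}$ coordinate construction.
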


\begin{proof}
We will reduce the problem of detecting a 6-hyperclique in a 3-uniform hypergraph with $6n$ vertices, to the max-coverage by 2 orthants problem on $N=O(n^3)$ points in $\R^{12}$.  Thus, if there is an $O(N^{2-\delta})$-time algorithm for the latter problem, there would be $O(n^{6-3\delta})$-time algorithm for the former problem, refuting the Hyperclique Hypothesis.

Let $G=(V,E)$ be the given 3-uniform hypergraph.  By a standard color-coding technique~\cite{AlonYZ95}, we may assume that $G$ is $6$-partite, i.e., 
$V$ is partitioned into 6 parts $V_1,\ldots,V_6$, and each edge in $E$ consists of 3 vertices from 3 different parts. Without loss of generality, assume that each part contains exactly $n$ vertices, by adding a certain number of isolated vertices to each part. The goal is to decide the existence of a 6-hyperclique, i.e., $(x_1,\dots,x_6)\in V_1\times \cdots \times V_6$ such that $\{x_i,x_j,x_k\}\in E$ for all distinct $i,j,k\in \{1,\ldots,6\}$.

Without loss of generality, assume that $V\subset [0,1]$, i.e., vertices are labelled by numbers that are rescaled to lie in $[0,1]$.

Let 
$d_{jk}(x_i)$ denote the number of edges that contain $x_i$ and other two vertices in $V_j$ and $V_k$, and $d_k(x_ix_j)$ denote the number of edges that contain the vertices $x_i,x_j$ and another vertex in $V_k$.

\subparagraph{The reduction.}  We construct the following $O(n^3)$ points and orthants in $\R^{12}$:

\begin{enumerate}
\item
For each $(x_1,x_2,x_3)\in V_1\times V_2 \times V_3$ with $\{x_1,x_2,x_3\}\in E$, create the orthant

\vspace{-2ex}
\begin{eqnarray*}
R_{x_1x_2x_3}&=&(-\infty,x_1]\times (-\infty,-x_1]\times (-\infty,x_2]\times (-\infty,-x_2]\times (-\infty,x_3]\times (-\infty,-x_3]\\&& {} \times \R\times \R\times \R\times \R\times \R\times \R.
\end{eqnarray*}

\item
Similarly, for each $(x_4,x_5,x_6)\in V_4\times V_5 \times V_6$ with $\{x_4,x_5,x_6\}\in E$, create the orthant

\vspace{-2ex}
\begin{eqnarray*}
    R_{x_4x_5x_6}&=&\R\times \R\times \R\times \R\times \R\times \R\times\\&& (-\infty,x_4]\times (-\infty,-x_4]\times (-\infty,x_5]\times (-\infty,-x_5]\times (-\infty,x_6]\times (-\infty,-x_6].
\end{eqnarray*}

\item For each $(v_1,v_2,v_4)\in V_1\times V_2\times V_4$ such that $\{v_1,v_2,v_4\}\not\in E$, create the point 
\[ p_{v_1v_2v_4}\ =\ (v_1,-v_1,v_2,-v_2,-\infty,-\infty,v_4,-v_4,-\infty,-\infty,-\infty,-\infty).  
\]
\item For each $(v_1,v_2)\in V_1\times V_2$,
create $d_4(v_1v_2)$ copies of the point
\[ p_{v_1v_2}\ =\ (v_1,-v_1,v_2,-v_2,-\infty,-\infty,\infty,\infty,\infty,\infty,\infty,\infty). \]
\item For each $v_4\in V_4$,
create $d_{12}(v_4)$ copies of the point
\[ p_{v_4}\ =\ (\infty,\infty,\infty,\infty,\infty,\infty,v_4,-v_4,-\infty,-\infty,-\infty,-\infty).
\]
\item Repeat Steps 3--5 but with $(1,2,4)$
replaced by all other triples $(i,j,k)$ 
such that $i,j\in\{1,2,3\}$ and $k\in\{4,5,6\}$,
or $i,j\in\{4,5,6\}$ and $k\in\{1,2,3\}$, with the points defined analogously.
\end{enumerate}

We solve the size-2 maximum coverage problem on the above set of points and orthants, and return true iff the solution covers at least $18(n^2+n)$ points.

\subparagraph{Correctness.}
Consider two orthants $R_{x_1x_2x_3}$
and $R_{x_4x_5x_6}$ with $x_1\in V_1$, \ldots, $x_6\in V_6$, and $\{x_1,x_2,x_3\},\{x_4,x_5,x_6\}\in E$. The point $p_{v_1v_2v_4}$ is covered iff
($v_1=x_1$ and $v_2=x_2$) or ($v_4=x_4$).
Thus, the number of points $p_{v_1v_2v_4}$ with $\{v_1,v_2,v_4\}\not\in E$ from Step~3 that are covered is 
precisely 

\begin{itemize}
    \item $(n^2-d_{12}(x_4)) + (n-d_4(x_1x_2))$ if $\{x_1,x_2,x_4\}\in E$;
    \item $(n^2-d_{12}(x_4)) + (n-d_4(x_1x_2))-1$ if $\{x_1,x_2,x_4\}\not\in E$;
\end{itemize} 

The point $p_{v_1v_2}$ is covered iff
$v_1=x_1$ and $v_2=x_2$.  Thus, the number of points from Step~4 that are covered is $d_4(x_1x_2)$.
Similarly, the point $p_{v_4}$ is covered iff $v_4=x_4$.  
Thus, the number of points from Step~5 that are covered is $d_{12}(x_4)$.
Therefore, the total number of points from Steps 3--5 that are covered is $n^2+n$ if $\{x_1,x_2,x_4\}\in E$, or $n^2+n-1$ if
$\{x_1,x_2,x_4\}\not\in E$.

We conclude that the total number of points covered over all 18 triples $(i,j,k)$
is at least $18(n^2+n)$ iff $\{x_1,\ldots,x_6\}$
is a 6-hyperclique in $G$.
\IGNORE{
*******************

For each $(x_1,x_2,x_3)\in V_1\times V_2 \times V_3$ with $\{x_1,x_2,x_3\}\in E$, create the orthant
\begin{eqnarray*}
r_{x_1x_2x_3}&=&(-\infty,x_1]\times (-\infty,-x_1]\times (-\infty,x_2]\times (-\infty,-x_2]\times (-\infty,x_3]\times (-\infty,-x_3]\times\\&& \R\times \R\times \R\times \R\times \R\times \R.
\end{eqnarray*}
Similarly, for each $(x_4,x_5,x_6)\in V_4\times V_5 \times V_6$ with $\{x_4,x_5,x_6\}\in E$, create the orthant
\begin{eqnarray*}
    r_{x_4x_5x_6}&=&\R\times \R\times \R\times \R\times \R\times \R\times\\&& (-\infty,x_4]\times (-\infty,-x_4]\times (-\infty,x_5]\times (-\infty,-x_5]\times (-\infty,x_6]\times (-\infty,-x_6].
\end{eqnarray*}

For each distinct $i,j,k$ where $\{i,j,k\}\neq \{1,2,3\}$ and $\{i,j,k\}\neq\{4,5,6\}$, let $S$ be either $\{1,2,3\}$ or $\{4,5,6\}$, whichever contains $i$. 
\timothy{Couldn't we just assume $i$ is in $\{1,2,3\}$?  If we just put $i,j,k$ in increasing order...}
\qizheng{I'm currently not assuming they are increasing, because e.g. I'll create $d_{1,2}(v_4)$ copies of the point $p_{v_4}$ (where $j=1,k=2,i=4$).}
\timothy{Maybe it's simpler to just do $\{1,2,4\}$ and say all other cases are similar?  that would simplify notation greatly potentially?...}
Then create the following points:
\begin{enumerate}
\item For each $(v_i,v_j,v_k)\in V_i\times V_j\times V_k$ such that $\{v_i,v_j,v_k\}\not\in E$, create a point $p_{v_iv_jv_k}$: the coordinates in dimensions $2i-1,2i,2j-1,2j,2k-1,2k$ are $v_i,-v_i,v_j,-v_j,v_k,-v_k$, respectively; all other coordinates are $-\infty$.

\item If $j,k\notin S$, for each $v_i\in V_i$, create $d_{jk}(v_i)$ copies of the point $p_{v_i}$: the coordinates in dimensions $2i-1,2i$ are $v_i,-v_i$, respectively; the coordinates in dimensions $2\ell-1,2\ell$ for $\ell\in (S-\{i\})$ are $-\infty$; all other coordinates are $\infty$.

\qizheng{What is the simplest way to write this...}
\timothy{Would it be simpler to have an outer loop outside of steps 2,3, and 4 that fixes $i,j,k$?  In the analysis, we can work with one fixed $i,j,k$ at a time?  Sums get simpler...}\qizheng{updated, but still looks complicated... I have to define the set $S$.}

\item If $j\in S$ and $k\notin S$, for each $(v_i,v_j)\in V_i\times V_j$, create $d_{k}(v_iv_j)$ copies of the point $p_{v_iv_j}$: the coordinates in dimensions $2i-1,2i,2j-1,2j$ are $v_i,-v_i,v_j,-v_j$, respectively; the coordinates in dimensions $2\ell-1,2\ell$ for $\ell\in (S-\{i,j\})$ are $-\infty$; all other coordinates are $\infty$.
\end{enumerate}

We solve the size-2 maximum coverage problem on the above set of points and orthants, and return true iff the solution covers at least $18n^2+18n$ points.

\subparagraph{Correctness.}
Consider an orthant $r_{x_1x_2x_3}$. The points it covers are exactly:

1) $p_{x_iv_jv_k}$ where $\{x_i,v_j,v_k\}\not\in E$, $(v_j,v_k)\in V_j\times V_k$, $i\in \{1,2,3\}$, $j,k\notin \{1,2,3\}$. For each choice of $i,j,k$, there are $n^2-d_{jk}(x_i)$ such points.

2) $p_{x_ix_jv_k}$ where $\{x_i,x_j,v_k\}\not\in E$, $v_k\in V_k$, $i,j\in \{1,2,3\}$, $k\notin \{1,2,3\}$. For each choice of $i,j,k$, there are $n-d_{k}(x_ix_j)$ such points.

3) $p_{x_i}$ where $i\in \{1,2,3\}$. For each choice of $i$, there are $\sum_{j,k\notin \{1,2,3\}}d_{jk}(x_i)$ such points.

4) $p_{x_ix_j}$ where $i,j\in \{1,2,3\}$. For each choice of $i,j$, there are $\sum_{k\notin \{1,2,3\}}d_{k}(x_ix_j)$ such points.

In total, the orthant $r_{x_1x_2x_3}$ covers exactly $(9n^2-\sum_{i\in \{1,2,3\},~j,k\notin \{1,2,3\}}d_{jk}(x_i))+(9n-\sum_{i,j\in \{1,2,3\},~k\notin \{1,2,3\}}d_{k}(x_ix_j))+\sum_{i\in \{1,2,3\},~j,k\notin \{1,2,3\}}d_{jk}(x_i)+\sum_{i,j\in \{1,2,3\},~k\notin \{1,2,3\}}d_{k}(x_ix_j)=9n^2+9n$ points. Similarly, each orthant $r_{x_4x_5x_6}$ also covers exactly $9n^2+9n$ points.

Suppose there exists a 6-hyperclique $(x_1,\ldots,x_6)\in V_1\times\cdots\times V_6$ in $G$. Then the orthants $r_{x_1x_2x_3}$ and $r_{x_4x_5x_6}$ does not cover any point in common, together they covers $18n^2+18n$ points. On the other hand, if there does not exist a 6-hyperclique, then for any pair of orthants $r_{x_1x_2x_3}$ and $r_{x_4x_5x_6}$, they cannot together cover $18n^2+18n$ points: if $\{x_1,x_2,x_4\}\notin E$, then $p_{x_1x_2x_4}$ is covered by both $r_{x_1x_2x_3}$ and $r_{x_4x_5x_6}$ (the other cases are symmetric).
}
\end{proof}

\section{Additional Remarks}

\IGNORE{
Tables \ref{tbl1}--\ref{tbl3} provide a  summary of the main previous and new results.

Because our conditional lower bound proofs are simple and accessible without needing much prior background, we think that they would make good examples illustrating fine-grained complexity in computational
geometry.  Generally speaking, there has been considerable development on fine-grained complexity in the broader algorithms community 
over the last decade~\cite{virgisurvey}, but to a lesser extent in computational geometry.
}

For ``combinatorial'' algorithms (i.e., algorithms that avoid fast matrix multiplication or algebraic techniques), some of our conditional lower bounds can be improved.
For example, our near-$n^{4/3}$ lower bounds 
on unweighted size-3 set cover for boxes in $\R^3$ and rectilinear discrete 3-center in $\R^4$ (Theorems \ref{thm:lb:box:3d}--\ref{thm:lb:orthant:4d} and Corollary~\ref{cor:lb:d3c}) can be increased to
near-$n^{3/2}$, under the Combinatorial BMM Hypothesis~\cite{virgisurvey} (which asserts that Boolean matrix multiplication of two $n\times n$ matrices requires $\Omega(n^{3-\delta})$ time for combinatorial algorithms).  These follow from the same reductions, because the Combinatorial BMM Hypothesis is equivalent~\cite{WilliamsW18} to the hypothesis that triangle detection in a graph with $n$ vertices requires $\Omega(n^{3-\delta})$ time for combinatorial algorithms.
Also, our near-quadratic lower bound for Euclidean discrete 2-center in $\R^{13}$
(Theorem~\ref{thm:lb:d2c}) still holds in $\R^{9}$ under the Combinatorial Clique Hypothesis (which asserts that $\ell$-clique detection in a graph with $n$ vertices requires $\Omega(n^{\ell-\delta})$ time for combinatorial algorithms), by reducing from 4-clique in graphs
instead of 6-hyperclique in 3-uniform hypergraphs.  Similarly, the near-quadratic lower bound for Euclidean discrete $k$-center (Theorem~\ref{thm:lb:dkc})
would hold in a somewhat smaller dimension.  Note that although the notion of ``combinatorial'' algorithms is vague, all our algorithms in Sec.~\ref{sec:alg} and Sec.~\ref{app:improve} are combinatorial.
(However, the best combinatorial algorithm we are aware of for Euclidean discrete 2-center in higher dimensions are slower than the best noncombinatorial algorithm, and has time bound near $n^{3-1/O(d)}$ using range searching techniques.)


It is interesting to compare our conditional lower bound proofs with the known hardness proofs by Marx~\cite{Marx05} and 
Chitnis and Saurabh~\cite{ChitnisS22} on fixed-parameter intractibility with
respect to the parameter $k$.  These were also obtained by reduction from $k'$-clique for
some function relating $k$ and $k'$.  Thus, in principle, by setting $k'=3$, they should yield superlinear lower bounds for discrete $k$-center (and related unweighted set cover problems) for some specific constant $k$, but this value of $k$ is probably large (much larger than 3).
On the other hand,
Cabello et al.'s proofs of fixed-parameter intractibility with respect to $d$~\cite{CabelloGKMR11} also produced reductions from $k'$-clique to
Euclidean 2-center and rectilinear 4-center, but the continuous problems are
different from the discrete problems.
(Still, Cabello et al.'s
proofs appeared to use some similar tricks, though ours are simpler.)

One of our initial reasons for studying the fine-grained complexity of
small-size geometric set cover
is to prove hardness of approximation for fast approximation
algorithms for geometric set cover.
For example, Theorem~\ref{thm:lb:box:3d} (with its subsequent remark) implies that no $\OO(n)$-time
approximation algorithm for set cover for fat boxes in $\R^3$ with side lengths $\Theta(1)$ can achieve approximation factor strictly smaller than $4/3$ (otherwise, it would be able to decide whether the optimal size is 3 or at least 4).
For such fat boxes, near-linear-time approximation algorithms with some (large) constant approximation factor follow from known techniques~\cite{AgarwalP20,ChanH20}
(since fat boxes of similar sizes in $\R^3$ have linear union complexity).

\IGNORE{
Our work leaves open a number of questions, for example:

\begin{itemize}
\item Is it possible to make our subquadratic algorithm for rectilinear discrete 3-center in $\R^2$ work in dimension 3 or higher?
\item Is it possible to make our conditional lower bound  proof for rectilinear discrete 3-center in $\R^4$ work in dimension 2 or 3?
\item Is it possible to make our conditional lower bound for Euclidean discrete 2-center in $\R^{13}$ work in dimension 3?
\item Although we have ruled out subquadratic algorithms for Euclidean discrete 2-center in $\R^{13}$, could geometry still help in beating $n^\omega$ time if $\omega>2$?
\end{itemize}
}

\section{Conclusions}

In this paper, we have obtained a plethora of nontrivial new results on a fundamental class of problems in computational geometry related to discrete $k$-center and size-$k$ geometric set cover for small values of $k$.  (See Tables~\ref{tbl1}--\ref{tbl3}.)
In particular, we have a few results where the upper bounds and conditional lower bounds are close:

\begin{itemize}
\item For weighted size-3 set cover for rectangles in $\R^2$, we have given
the first subquadratic $\OO(n^{7/4})$-time algorithm, and an $\Omega(n^{3/2-\delta})$ lower bound
under the APSP Hypothesis.
\item For Euclidean discrete $k$-center (or unweighted size-$k$
set cover for unit balls) in $\R^{O(k)}$,
we have proved an $\Omega(n^{k-\delta})$ lower bound under the Hyperclique Hypothesis,
which is near optimal if $\omega=2$.
\item For size-2 maximum coverage for boxes in a sufficiently large constant dimension,
we have proved an $\Omega(n^{2-\delta})$ lower bound under the Hyperclique Hypothesis,
which is near optimal.
\end{itemize}

For all of our results, we have managed to find simple proofs (each with 1--3 pages).  We view the simplicity and accessibility of our proofs as an asset---they would make good examples illustrating fine-grained complexity techniques in computational
geometry.  Generally speaking, there has been considerable development on fine-grained complexity in the broader algorithms community 
over the last decade~\cite{virgisurvey}, but to a lesser extent in computational geometry.  A broader goal of this paper is to encourage more work at the intersection of these two areas.  We should emphasize that
while our conditional lower bound proofs may appear simple in hindsight, they are not necessarily easy to come up with; for example, see one of our proofs that
require computer-assisted search (Theorem~\ref{thm:lb:orthant:4d}). 

As many versions of the problems studied here still do not have matching upper and
lower bounds, our work raises many interesting open questions. For example:

\begin{itemize}
\item Is it possible to make our subquadratic algorithm for rectilinear discrete 3-center in $\R^2$ work in dimension 3 or higher?
\item Is it possible to make our conditional lower bound proof for rectilinear discrete 3-center in $\R^4$ work in dimension 2 or 3?
\item Is it possible to make our conditional lower bound for Euclidean discrete 2-center in $\R^{13}$ work in dimension 3?
\item Is it possible to make our conditional lower bound for size-2 maximum coverage for boxes in $\R^{12}$ work in dimension 2 or 3?  
\item Although we have ruled out subquadratic algorithms for Euclidean discrete 2-center in $\R^{13}$, could geometry still help in beating $n^\omega$ time if $\omega>2$?
\end{itemize}

We should remark that some of these questions could be quite difficult.  In fine-grained complexity, there are many examples of basic problems that still do not have tight conditional lower bounds (to mention one well-known geometric example, K\"unnemann's recent FOCS'22 paper~\cite{Kun22} has finally obtained a near-optimal conditional lower bound for Klee's measure problem in $\R^3$, but tight lower bounds in dimension 4 and higher are still not known for non-combinatorial algorithms).  Still, we hope that our work would inspire more progress in both upper and lower bounds for this rich class of problems.

\bibliographystyle{plainurl}
\bibliography{b}

\appendix

\end{document}